\newtheorem{claim}{{\bf \sc Claim}}
\newtheorem{corollary}{{\bf \sc Corollary}}
\newtheorem{proposition}{{\bf \sc Proposition}}
\newtheorem{remark}{{\bf \sc Remark}}
\theoremstyle{definition}
\newtheorem{example}{{\sc Example}}
\def\eproof{\hbox{\hskip3pt\vrule width4pt height8pt depth1.5pt}}
\begin{document}
\title{
Optimal Regulation and Investment Incentives in
Financial Networks.\thanks{This paper contains results that were originally in ``Distorted Investment Incentives, Regulation, and Equilibrium Multiplicity in a Model of Financial Networks.''
We split that paper into this one, and another (``Credit Freezes, Equilibrium Multiplicity, and Optimal Bailouts in Financial Networks'' {\sl Review of Financial Studies}, 2024).
We have added a number of new results to both papers beyond the results on regulation that appeared in the original paper.   }
}
\author{Matthew O. Jackson and Agathe Pernoud\thanks{%
Matthew O. Jackson; Department of
Economics, Stanford University, Stanford, California 94305-6072 USA; jacksonm@stanford.edu.
Jackson is also an
external faculty member of the Santa Fe Institute. Agathe Pernoud; Booth School of Business, University of Chicago, Chicago, Illinois 60637 USA; agathe.pernoud@chicagobooth.edu.  We gratefully
acknowledge financial support under NSF grant SES-1629446.
We thank Marco Bardoscia, Celso Brunetti, Ozan Candogan, Matt Elliott, Gerardo Ferarra,
David Hirshleifer, Arjun Mahalingam, Carlos Ramirez, Tarik Roukny, Alireza Tahbaz-Salehi, Erol Selman, and especially
Co-Pierre Georg, for
conversations and comments, and the editor and referees for helpful suggestions.
}}
\date{Draft: October 2025 
}
\maketitle

\begin{abstract}
We examine optimal regulation of financial networks with debt interdependencies between financial firms.
We first show that firms often have an incentive
to choose excessively risky portfolios
and overly correlate their portfolios with those of their counterparties.
We then characterize how optimal regulation depends on a firm's financial
centrality and its available investment opportunities.
In standard core-periphery networks, optimal regulation depends non-monotonically on the correlation of banks' investments, with maximal restrictions for intermediate levels of correlation.
Moreover, it can be uniquely optimal to treat banks asymmetrically: restricting the investments of one core bank
while allowing an otherwise identical core bank (in all aspects, including network centrality) to invest freely.

\textsc{JEL Classification Codes:}  D85, F15, F34, F36, F65, G15, G32, G33, G38

\textsc{Keywords:} Financial Networks, Markets, Systemic Risk, Regulation, Financial Crisis, Correlated Portfolios, Networks, Banks, Default Risk, Financial Interdependencies
\end{abstract}

\setcounter{page}{0}\thispagestyle{empty} \newpage

\section{Introduction}

In financial networks with missing markets, in which not all risks can be fully hedged, defaults and
counterparty risk impose externalities that can be large sources of inefficiency. Defaults involve substantial deadweight costs that can compound if left to cascade. The potential for disaster is exacerbated if many organizations are invested in similarly
distressed portfolios, as was the case in 2008.
These concerns have led governments to bail out financial institutions, often only after the risk of
systemic failure had already reached a critical and expensive point.

The costs of such interventions can be substantially lower if appropriate and well-targeted
regulation prevents financial institutions
from undertaking investments that generate excessive systemic risk.
Optimal regulation, however, involves a tradeoff between reducing risk and foregoing valuable investments,
and thus requires understanding the incentives financial institutions have in choosing their investments.
Although there is a growing literature on financial networks and their
consequences,\footnote{For a recent survey, see Jackson and Pernoud \citeyearpar{jacksonp2020survey}.}
there is no systematic study of optimal regulation that accounts for financial institutions' incentives, and especially as a function of their network positions.

In this paper, we examine how the network structure distorts financial institutions' investment incentives
and derive the associated optimal regulatory response.
Our analysis focuses on bankruptcies and
the discontinuous costs upon insolvency, since those play a key role
in inefficiencies and externalities in financial networks.
Bankruptcy costs impose deadweight losses borne by some agent(s), and hence
reduce total welfare regardless of whether the defaulting party faces limited liability.
Nonetheless, the analysis also applies to credit freezes and the
resulting deadweight losses from the lost liquidity.

Our analysis provides three contributions.

First, we examine banks' incentives when making investment decisions, and characterize
how network externalities result in inefficiencies.
One result shows that, under general conditions on the network structure,
banks choose to take on excess risk compared to what is socially efficient. This comes
from the fact that banks do not account for the negative externalities their defaults impose
on the overall financial system when choosing their investments. In addition, we show that banks have strong incentives to excessively correlate their investments with those of
their counterparties.
This happens for a reason that we refer to as `risk matching.'
The intuition is that a bank wishes to be solvent when it receives the most payments from other banks, and wishes to be insolvent when
other banks are defaulting.
This induces complementarities in investment incentives and results in matched investments across banks, thus bringing a new aspect to risk shifting. The above intuition holds regardless of whether there is limited liability, so it is more involved than standard examples of risk shifting. Nonetheless, it follows a simple principle: a bank wants to be solvent when others are solvent so as to
benefit from that interaction.

Second, we examine optimal regulation by a government
that maximizes total expected returns to investments net of bankruptcy costs.
We analyze three regulatory approaches: macroprudential regulation, bailouts,
and laissez-faire.  We start by considering networks that admit a core-periphery structure, and we examine how optimal regulation depends on the correlation between banks' investment opportunities.
Here the results are particularly nuanced.  We show that placing restrictions on banks' investments is optimal for intermediate levels
of correlation, but not for sufficiently high or low correlation.
The reasons for the absence of regulation at the extremes differ.
With low levels of correlation,
any single bank's default is unlikely to affect others since they are likely to have high returns and so all be resilient.   With high levels of correlation,
any single bank's default is unlikely to affect others since they are already likely to be defaulting as well.  So, with extreme levels of correlation
there is an absence of contagion, but for different reasons.

A related result is that, in the intermediate range of correlation in which restrictions improve over laissez-faire,
the optimal restrictions are often asymmetric.  In particular, imposing restrictions on some core banks
but not on others that are completely identical in every way to the regulated ones can provide strictly higher social welfare than any regulation that
treats identical banks symmetrically.
The intuition is roughly as follows.  Core banks can generally withstand some base number of defaults, regardless of the level of regulation.
Allowing some number of core banks (below that base threshold number) to invest freely,
allows society to earn high returns from risky investments, while still avoiding contagion.
Instead, a symmetric regulation across banks is either too restrictive and
eliminates
contagion but also misses out on high returns, or too lax and admits widespread
contagion.  An asymmetric regulation enables some high investment returns, but also avoids widespread contagion.
In practice, asymmetric regulation might be difficult legally, but can be done to the extent
that there are even slight asymmetries among banks in the core -- for instance regulating ones
above a particular size.

Finally, we provide more general insights on how the network structure shapes optimal regulation, beyond core-periphery networks. The network structure enters the calculations because it determines
the consequences of any given bank's default on
the rest of the financial system.
Optimal regulation depends on a particular definition of centrality, which we develop and that summarizes a bank's marginal contribution to overall bankruptcy costs. A regulator trades off higher expected returns
from risky portfolios against systemic risks.
The characterization of the optimal policy then favors macroprudential regulation (in the form of a limit on portfolio risk)
if the loss in expected returns
is small relative to the bank's centrality.
If expected investment returns are high enough, then the optimal policy is not to impose any limits on investments
but to intervene via bailouts for sufficiently central organizations and to allow less central organizations to fail if they become insolvent.
These regions of optimal regulation are previewed in Figure \ref{figure-regulate}, and we leave precise definitions of variables to Section \ref{reg}.
\begin{figure}[h!]
\centering
\includegraphics[width=0.4\textwidth]{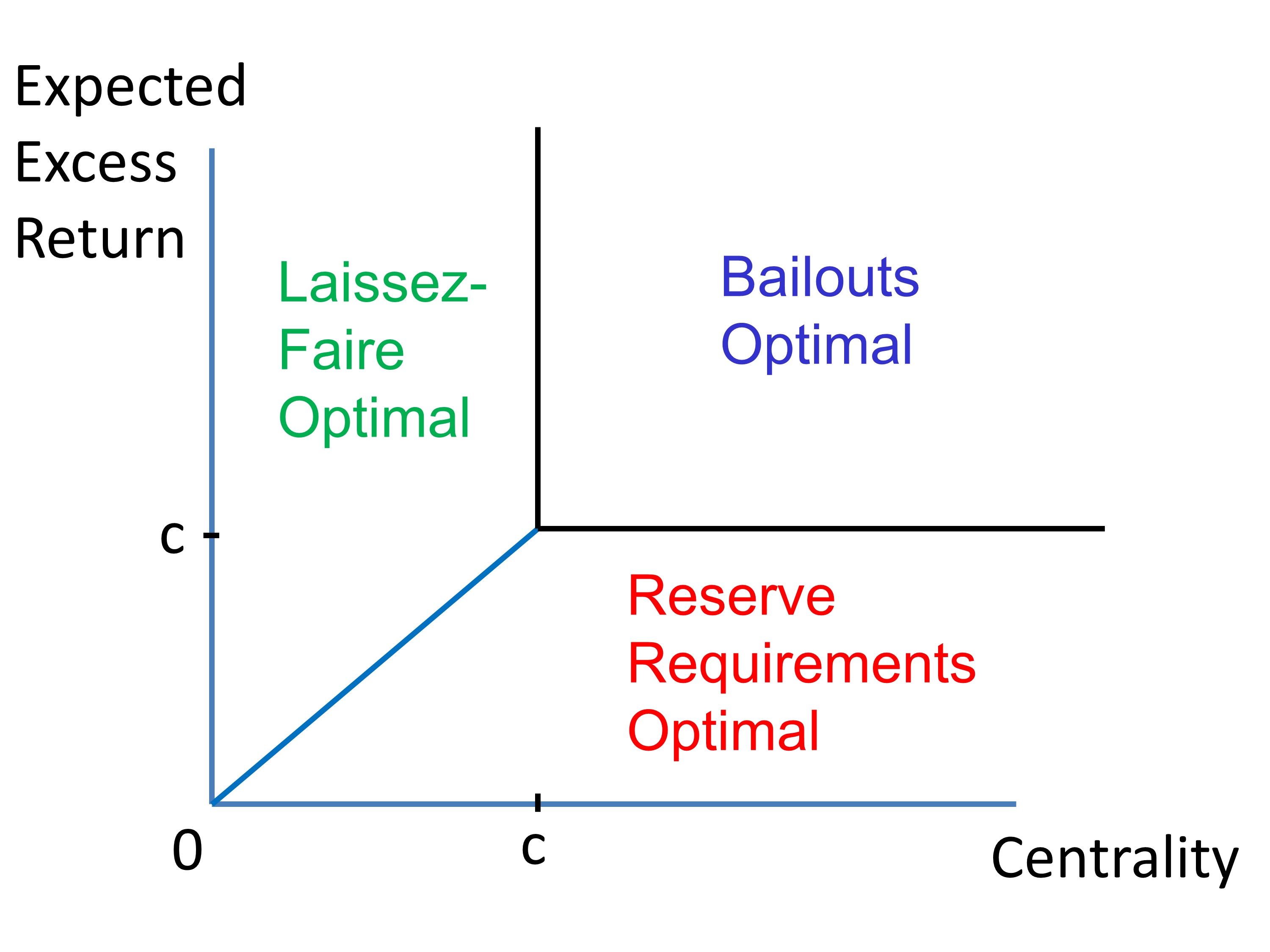}
\caption{\label{figure-regulate}
The optimal regulation of a bank as a function of the expected excess return of the bank's available risky investments and the centrality of the bank.}
\end{figure}

\paragraph{Related Literature:}

Several papers study how connections between financial firms distort their investment incentives, but none has looked at optimal regulation in a network setting.\footnote{
There is a literature on the
more standard agency problem between the manager of a firm and its shareholders,
for example, as highlighted in Jensen and Meckling \citeyearpar{jensenm1976} and Galai and Masulis \citeyearpar{galai1976option},
which can induce inefficient investments.
For more on this point, see Admati and Hellwig \citeyearpar{admatih2013}.  
For a more general discussion of agency problems of excessive risk-taking in the presence of externalities see
Hirshleifer and Teoh \citeyearpar{hirshleifert2009}.}

With regards to part of our analysis on incentives, an (independent) study by Shu \citeyearpar{shu2019} also finds that network externalities between banks lead them to take more risk and correlate their investments. His model is complementary to ours as it differs from ours in key ways.
Importantly, our analysis of optimal regulation depends on features not in his model: the centrality of
particular financial institutions and asymmetries across them. Zawadowski \citeyearpar{zawadowski2013entangled} considers banks on a ring network that can use OTC contracts with their neighbors to hedge investment risks. He shows that banks fail to insure against counterparty risk, hence allowing for inefficiently large contagion risks. Galeotti and Ghiglino \citeyearpar{galeotti2021cross} study the portfolio choices of firms that are embedded in a network of cross-holdings and have mean-variance preferences. They show that cross-holdings provide implicit insurance to firms, and that the lower a firm's self-ownership, the riskier its investment.
Vohra, Xing,  and Zhu \citeyearpar{vohra2020network} consider a network of firms that can make an
investment \emph{following} a shock to attenuate it, and study how agency conflicts within firms
impact contagion.\footnote{Both  Galeotti and Ghiglino \citeyearpar{galeotti2021cross}
and Vohra, Xing,  and Zhu \citeyearpar{vohra2020network}
restrict attention to equity-like interdependencies between firms.}

There are also papers highlighting banks' incentives to herd. Elliott, Georg, and Hazell  \citeyearpar{elliottgh2018} examine which network of inter-bank equity claims and correlation structure of investments arise endogenously in equilibrium when banks act
under limited liability. Their main result is that banks have an incentive to partner
with counterparties whose
portfolios are positively correlated with their own in order to shift losses from shareholders
to creditors.  Using data on the German banking system,
they also provide strong empirical evidence that banks lend more to those with portfolios
similar to their own. Thus, such correlations are observed. A few studies put forward other channels that can lead banks to want to correlate their portfolios. Acharya \citeyearpar{acharya2009theory} considers a setting in which, when a bank defaults and fails to repay depositors, the aggregate supply of funds in the economy shrinks, which hurts surviving banks. He argues that, if banks act under limited liability, then this negative externality can incentivize banks to herd when choosing their investments. Capponi and Weber \citeyearpar{capponi2022systemic} study banks' choice of portfolio in the presence of fire-sale spillovers. They show that banks choose portfolios that are overly similar, leading to under-diversification at the systemic level.
Arya and Golver \citeyearpar{arya2001} and Acharya and Yorulmazer \citeyearpar{acharya2007} highlight how the possibility of being bailed-out
can lead agents to herd in order to capture the bailout subsidies.\footnote{Acharya and Yorulmazer \citeyearpar{acharya2008cash} consider an alternative policy, which consists in granting liquidity to surviving banks so as to subsidize purchases of liquidated assets. They show this can be equivalent to a bailout policy ex post while strictly dominating it ex ante, as it incentivizes banks to differentiate their investments as opposed to correlating them.} Similarly, Farhi and Tirole \citeyearpar{farhi2012collective} show that banks' choice of leverage can become strategic complements given a regulator's optimal monetary policy response.
Note that in these last four papers, the inefficiency does not stem from network externalities as banks are not counterparties of each other.
Thus, correlation incentives arise in our model
for partly different reasons, and persist even if banks account for missed payments to debtholders. We show that, because of
financial interdependencies, bank values depend positively on each other, which induces complementarities
in their returns to investments: a high return for a bank is weakly more valuable if its partners have
high returns as well, pushing them to correlate their portfolios.

\section{A Model of Financial Interdependencies}

\subsection{The Interbank Network}

Consider a set $N= \{0, 1,\ldots,n\}$ of institutions.  We treat $ \{1,\ldots, n\}$ as the financial institutions and refer to them as ``banks" for simplicity in terminology.
Node $0$ encompasses all other actors, who either hold debt in the banks (for instance, private investors and depositors) or have debts to the banks (for instance, private and public companies), but not both.
We generally take node 0's connections as given, and our analysis of the model centers on the banks $i>0$.

Banks' portfolios are composed of investments outside of the network as well as financial contracts within the network (i.e., to other banks).
Abusing notation, let $K=\{1,\ldots K\}$ be the set of primitive investment opportunities, or assets,
whose prices are normalized to one.\footnote{This is without loss of generality, as these can be thought of as how much a given dollar buys, and then we can capture relative returns via the distribution of realized values.}
We denote by $p_k$ the gross return or realized value of asset $k\in K$.  Let $q_{ik}\geq 0$ be the quantity invested in asset $k$ by Bank $i$, and $\mathbf{q}$ the matrix whose $(i,k)$-th entry is equal to $q_{ik}$. Analogous notation is used for all matrices and vectors.
Throughout, when we refer to ``returns'' we are referring to ``gross returns''.

For the purposes of defining defaults and bankruptcies, we take realized gross returns $\mathbf{p}$ and portfolios $\mathbf{q}$ as given. In Section \ref{model:investments} we describe how banks choose their portfolios.

In addition to these direct outside investments, banks have
claims on and liabilities to other banks in the network.
These interdependencies take the form of bilateral debt contracts where $D_{ij}\geq 0$ denotes the face value of $j$'s liability towards $i$, that is the promised payment between the debtor $j$ and its creditor $i$.  A bank's total nominal debt assets and liabilities are, respectively, $D_i^{A}\equiv \sum_j D_{ij} $
and $D_i^{L}\equiv \sum_j D_{ji}$.
The matrix of debt claims $\mathbf{D}\equiv [D_{ij}]_{ij}$ defines the weighted directed interbank network. Throughout the paper, we take the network of debts as given and common knowledge.

In an online appendix we show how the model extends to allow for more general contracts beyond debt, including both debt and equity contracts between banks. The results are slightly more subtle and require an additional condition on the network structure regarding cycles of cross ownership.  Even though such a model shows the robustness of the results, the intuitions are easiest to see when defaults are driven by debts alone.

\subsection{Bankruptcies and Equilibrium Values of Banks} \label{sec:values}

Financial contracts between banks create interdependencies between their balance sheets.

Let $\mathbf{V}=(V_1,\ldots V_n)$ denote the vector of bank values, which summarize the values of banks' assets (investments and debt repayments coming in) net of debt liabilities and any bankruptcy costs. In this section, we describe how to compute these values, given banks' portfolios $ \mathbf{q}$, realized asset returns $\mathbf{p}$, and interbank debts $\mathbf{D}$.

A bank is said to default if the value of
its assets is not enough to cover all of its liabilities, in which case it incurs bankruptcy
costs $\beta_i(\mathbf{V}; \mathbf{q}, \mathbf{p},\mathbf{D})\geq 0$. These costs can stem from legal bankruptcy proceedings, the early liquidation of productive assets, fire sales, or inefficient renegotiations with counterparties.\footnote{Efficient renegotiations of contracts might be precluded due to frictions such as asymmetric information, as in Glode and Opp \citeyearpar{glode2021private}.} 
We allow these
costs to depend on the health of other banks as well as the value of the primitive investments and the debt structure.

Since a bank can default on its liabilities, the amount that it actually repays to its creditors
depends on the value of its balance sheet, and hence ultimately on the full vector of values $\mathbf{V}$.
Denote by $d_{ij}(\mathbf{V};\mathbf{D})$ the realized debt payment from a debtor $j$ to one of its creditors $i$. As will become clear, these can be written as a function solely of the vector of bank values and interbank debts.

A Bank $j$ must repay its debts in full whenever possible.  Hence, if $j$ is solvent ($V_j\geq 0$), then
\[
d_{ij}(\mathbf{V};\mathbf{D}) =   D_{ij}.
\]
If instead $V_j<0$ and $j$ defaults,
then its creditors become the residual claimants on its assets and are rationed proportionally to their claim on $j$:
\begin{equation}
\label{debtvalue}
d_{ij}(\mathbf{V};\mathbf{D}) =   \frac{D_{ij}}{\sum_h D_{hj}} \max\left(V_j+D_j^L,0\right).
\end{equation}

A bank defaults whenever the value of its assets does not cover its liabilities.
The bankruptcy costs it incurs are
\begin{equation}
\label{bankrupt}
  b_i\left(\mathbf{V}; \mathbf{q}, \mathbf{p},\mathbf{D}\right) =
  \begin{cases}
   0 & \text{ if  } \sum_k q_{ik} p_k   +d_i^A(\mathbf{V};\mathbf{D})  \geq D_i^L \\
   \beta_i\left(\mathbf{V}; \mathbf{q}, \mathbf{p},\mathbf{D}\right) & \text{ if  } \sum_k q_{ik} p_k +d_i^A(\mathbf{V};\mathbf{D})  <D_i^L.
  \end{cases}
\end{equation}
where $d_{j}^A(\mathbf{V};\mathbf{D})\equiv \sum_{h}d_{jh}(\mathbf{V};\mathbf{D})$.

We consider bankruptcy costs that have the following general form:
\[\beta_i(\mathbf{V}; \mathbf{q}, \mathbf{p},\mathbf{D})=\chi+a\left[ \sum_k q_{ik} p_k
+d_i^A(\mathbf{V};\mathbf{D})\right]\quad\text{with $a\in[0,1]$, $\chi\geq 0$}.\]
That is, a bank loses a fraction $a$ of its assets upon defaulting, for instance because it only recovers assets partially upon liquidation, and incurs some additional fixed cost $\chi$.\footnote{This functional form for bankruptcy costs embeds most of the costs considered in the literature. }

Bankruptcy costs depress the value of a bank.  The value of a bank is:
$$
V_i=\sum_k q_{ik} p_k  +   d_i^A(\mathbf{V};\mathbf{D})-D^L_i  - b_i(\mathbf{V}; \mathbf{q}, \mathbf{p},\mathbf{D}),
$$
which is Bank $i$'s realized assets net of debts and bankruptcy costs,
where bankruptcy costs $b_i(\mathbf{V}; \mathbf{q}, \mathbf{p},\mathbf{D} )$ are defined by (\ref{bankrupt}) and realized debt payments in are $ d_j^A(\mathbf{V};\mathbf{D})$ by (\ref{debtvalue}).

This produces a fixed point problem, since banks' values appear on both sides of this equation.
Thus we examine banks' equilibrium values, which are the solution to:\footnote{We use the term ``equilibrium values'' to refer to a fixed point of
 equation (\ref{eq-bookvalue-bankruptcy}) to keep with the literature, but note that the
 term ``equilibrium''  is also used below for situations in which we endogenize investment choices.}
\begin{equation}
\label{eq-bookvalue-bankruptcy}
\mathbf{V}= \mathbf{q} \mathbf{p} + \mathbf{d}^A(\mathbf{V};\mathbf{D})  - \mathbf{D}^L - \mathbf{b}(\mathbf{V}; \mathbf{q}, \mathbf{p},\mathbf{D}).
\end{equation}

Note that the value of a bank is weakly increasing in those of others in the network, which ensures the existence of a solution to equation (\ref{eq-bookvalue-bankruptcy}).\footnote{This is implied by Tarski's fixed point theorem, since bank values depend monotonically on each other and are bounded. They are bounded above by $\overline{\textbf{V}} = \textbf{q}\textbf{p}+\textbf{D}^A$.   }
Since defaults generate discontinuities, there can even exist multiple solutions, and hence multiple vectors of equilibrium bank values $\mathbf{V}$ for given portfolios $\mathbf{q}$, that necessarily form a complete lattice.\footnote{For an extensive analysis of the equilibrium structure, see \cite{jacksonp2020,csokah2024}. In \cite{jacksonp2020}, we study in detail when network interdependencies can lead to multiple solutions for bank values and how the regulator can prevent this multiplicity by bailing out some banks. There we take banks' portfolios as given, contrary to the present paper which focuses on the (in-)efficiency of their investment decisions.} In what follows, we focus on the best equilibrium for bank values.
That is, for given portfolios, realized gross returns, and interbank claims $(\mathbf{q}, \mathbf{p}, \mathbf{D})$, the bank values $\mathbf{V}(\mathbf{q}, \mathbf{p}, \mathbf{D})$ are 
given by the greatest solution to equation (\ref{eq-bookvalue-bankruptcy}).%
\footnote{Alternatively, we could focus on the worst equilibrium, or any other selection of equilibrium as long as we fixed the selection criterion throughout the analysis, and the insights below would remain.  If one allows the selection of the equilibrium to change based on banks' behaviors, then that can result in incentive changes.  For instance, if a bank anticipates the best equilibrium for bank values if it chooses one portfolio, but the worst equilibrium if it picks another, then that could induce it to choose the first portfolio even that is dominated  in terms of returns by the second.} Banks' equity values correspond to $\mathbf{V}(\mathbf{q}, \mathbf{p}, \mathbf{D})^+:=\max\{\mathbf{V}(\mathbf{q}, \mathbf{p}, \mathbf{D}),0\}$.

We end this section with a brief discussion of how to interpret bankruptcy costs and who incurs them. As noted above, these costs capture anything that discontinuously depresses a bank's value upon insolvency, such as legal costs or inefficient liquidation of investments. These costs are imposed on a bank's balance sheet, and so are first and foremost incurred by the bank. If they exceed the value of the bank's assets, then the bank's value $V_i$ becomes negative. (However, a bank's equity value $V_i^+$ is always non-negative.) In that case, one can interpret these remaining costs as being absorbed by a government or the public. What is important for our analysis is that these are real economic costs, which reduce total surplus.

\subsection{Banks' Investment Decisions}\label{model:investments}

A key contribution of our model is to endogenize banks' investments in the outside assets $\mathbf{q}$.

Each bank has some capital to invest (which by default is one unit, unless otherwise stated) and chooses how to allocate that among a variety of outside assets. To allow for the possibility that some assets might not be available to all banks, we let $K_i\subset K$ denote the assets that Bank $i$ can invest into and require that $q_{ik} = 0$ for all $k\notin K_i$.  We assume that the decision-makers in a bank (i.e., its shareholders) maximize the bank's expected equilibrium (equity) value. 
This is only restrictive if some shareholders are also significant creditors of the bank (e.g., depositors), in which case their incentives to get their debt payment in full might create market discipline.  The main incentive issues that we examine are then between the shareholders of a bank and the other agents who are impacted by the bank's investment decision through network interdependencies.

Asset returns $\mathbf{p}$ are uncertain when investment decisions are made, and drawn from some common knowledge distribution. We take the bank's shareholders to be risk neutral, which allows us to abstract away from risk differences as the driver of moral hazard and to focus on systemic and structural externalities.\footnote{We show in Section \ref{riskaversion} of the Online Appendix how the results extend to the case of risk aversion. }
Thus, shareholders choose $\mathbf{q_i}$ to maximize $\mathbb{E}_\mathbf{p} \left[V_i(\mathbf{q_i}, \mathbf{q_{-i}},\mathbf{p};\mathbf{D})^+\right]$, that is they solve
\[
\max_{(q_{ik})_{k\in K_i}\in\Delta(K_i)} \, \mathbb{E}_\mathbf{p} \left[ \left(\sum_kq_{ik}p_k
+ \sum_jd_{ij}(\mathbf{V}(\mathbf{q_i}, \mathbf{q_{-i}}, \mathbf{p};\mathbf{D}),\mathbf{D}) -D_i^{L} \right)^+ \right].
\]

The vector of banks' portfolios, $\mathbf{q}^*=(\mathbf{q^*_i},\mathbf{q^*_{-i}})$, form a Nash equilibrium if, for each $i$,  $\mathbf{q^*_i}$ maximizes  Bank $i$'s expected equity value given that other banks choose portfolios $\mathbf{q^*_{-i}}$.

In the second part of the paper, we consider two forms of regulation: \emph{ex-ante} macroprudential regulation that constrains which portfolios ($\mathbf{q}$) banks can choose, and  \emph{ex-post} bailouts.

The ex-ante macroprudential regulation takes the form of some restrictions on portfolio choices, often coming in the form of capital regulation or reserve requirements.  For instance, the regulator may require that a bank's portfolio consist 
of at least a minimum investment in ``risk-free'' securities.

An ex-post bailout of a Bank $i$ means that the bank's value is brought back to the solvency threshold  and all its debt claims are settled: $V_i(\textbf{q}, \textbf{p}; \textbf{D})=0$ and $d_{ji}(\textbf{q}, \textbf{p}; \textbf{D}) = D_{ji}$ for all $j$.

Overall, the timing of the model is as follows:
\begin{figure}[!h]
\begin{center}
\begin{tikzpicture}[scale=0.8]
\draw[->, thick] (-3,0) to (15,0);
\draw[thick] (-2.5,0.2) to (-2.5,-0.2);
\draw (-2.5,-0.7) node {\small \emph{Ex ante}};
\draw (-2.5,-1.2) node {\small \emph{regulation}};
\draw[thick] (0.5,0.2) to (0.5,-0.2);
\draw (0.5,-0.7) node {\small Banks choose};
\draw (0.5,-1.3) node {\small portfolios $\textbf{q}$};
\draw (4,-0.7) node {\small Asset returns };
\draw (4,-1.2) node {\small $\textbf{p}$ are realized};
\draw[thick] (4,0.2) to (4,-0.2);
\draw (8,-0.7) node {\small \emph{Ex post}};
\draw (8,-1.2) node {\small  \emph{Bailouts made}};
\draw[thick] (8,0.2) to (8,-0.2);
\draw[thick] (13,0.2) to (13,-0.2);
\draw (13,-0.7) node {\small Values $ \textbf{V}(\textbf{q}, \textbf{p}; \textbf{D})$ and};
\draw (13,-1.2) node {\small  bankruptcy costs result };
  \end{tikzpicture}
  \end{center}
\end{figure}

\section{A Motivating Example}\label{example}

We start by presenting an example that illustrates the results and
highlights the role of network externalities in generating inefficiencies.

\begin{example}\label{ex:1}
Consider the network depicted in Figure \ref{fig:ex1}. There are two banks: Bank 2 owes an amount $2D$ to Bank 1, while Bank 1 owes $D$ to outside investors. One interpretation is that Bank 2 is a core investment bank that borrowed some capital from a more peripheral Bank 1, which acts as intermediary between depositors and Bank 2.  This is one of the simplest possible examples to illustrate the logic, and clearly can be enriched and embedded in larger networks.
\begin{figure}[!h]
\begin{center}
\begin{tikzpicture}[scale=1]
\definecolor{afblue}{rgb}{0.36, 0.54, 0.66};
\foreach \Point/\PointLabel in {(0,0)/2, (4,0)/1}
\draw[fill=afblue!40] \Point circle (0.35) node {$\PointLabel$};
\draw[->, thick] (0.5,0) to node[above] {$D_{12}=2D$} (3.5,0);
\draw[->, thick] (4.5,0) to node[above] {$D_{01}=D$} (7.5,0);
\draw[fill=black!10] (8,0) circle (0.35) node {$0$};
  \end{tikzpicture}
  \end{center}
  \captionsetup{singlelinecheck=off}
  \caption[]{\small Bank 2 has a debt liability of $2D$ towards Bank 1. Bank 1 has a debt liability of $D$ to outside investors (node 0). }\label{fig:ex1}
\end{figure}

Each bank has a unit of capital to invest either in a risk-free asset with deterministic gross return $1+r$,
or in a risky asset with uncertain return $p$.
The risky asset yields $p=R$ with probability $\theta$ and $p=0$ otherwise,
with $\theta R>1+r>2D$.
For the sake of this example, let $a=0$ and $\chi>0$ such that a bank incurs a fixed cost $\chi$ if it defaults.
Since $1+r>2D$, bankruptcy can be avoided by investing sufficiently in the risk-free asset. Whether a bank defaults when the risky asset does not pay off then depends on its portfolio as well as its counterparty's.

Let $q_i\in[0,1]$ denote the share of capital that Bank $i$ invests in the risky asset, with $1-q_i$ in the risk-free asset (so there are no short sales or leveraging).
Since interbank payments are transfers between agents, society's full payoff as a function of banks' portfolio choices is
\[
\mathbb{E}\left[ \sum_{i=1,2} q_i p + (1-q_i)(1+r)\right] - \chi \sum_{i=1,2}\Pr\left[ V_i(q_1,q_2;p)<0\right].
\]
The socially efficient portfolios balance the expected excess return associated
with investments in the risky asset ($q_i [\mathbb{E}\left[ p \right]- (1+r)]$) against the potential costs of bankruptcy ($ \chi \Pr\left[ V_i(q_1,q_2;p)<0\right]$).

Let us characterize the socially efficient portfolios.  The expected return to the risky portfolio is higher than the risk-free rate,
and so the only reason to invest any amount in the risk-free asset is to avoid bankruptcy costs (given that everyone is risk neutral).
Note that, if Bank 2 remains solvent when $p=0$, then it pays its debt to Bank 1 which then remains solvent as well.
Thus, the most relevant cases to consider are (i) Bank 2 invests enough in the risk-free asset to always be solvent (in which case Bank 1 is always solvent and then should be investing fully in the risky asset), (ii) Bank 2 invests all in the risky asset and Bank 1 invests enough in the risk-free asset to always be solvent even if Bank 2 defaults, and (iii) both banks invest fully in the risky assets.\footnote{One can imagine other cases where Bank 2 invests enough to pay back some of its debts but not to avoid bankruptcy, and then Bank 1 can lower the amount it needs to invest in the risk-free asset to remain solvent.  However, it is easy to check that this does not improve over whichever is the best of the three scenarios above.}

First, let us note that in any situation where (ii) is better than (iii), then (i) is better than (ii).  This implies that we can just consider (i) and (iii) as the social optima:   either prevent Bank 2 from ever defaulting or let both banks invest fully in the risky
asset.

To see that in any situation where (ii) is better than (iii), then (i) is better than (ii), consider what has to be true for (ii) to be better than (iii).   It has to be that the lost expected returns from requiring Bank 1 to invest enough in the risk-free asset to always be solvent is less than the expected bankruptcy cost.
This is true if and only if
$$(1-\theta) \chi  >\frac{D}{1+r}[\theta R - (1+r) ]  ,$$
where $(1-\theta) \chi$ is the expected bankruptcy cost, $\theta R - (1+r)$ is the lost expected returns,  and $\frac{D}{1+r}$ is the amount that Bank 1 must invest
in the safe asset to stay solvent.
If this holds, then the same calculation for (i) is that
\begin{equation}\label{bank2} 2(1-\theta) \chi  >\frac{2D}{1+r}[\theta R - (1+r) ] ,
\end{equation}
since Bank 2's solvency prevents two bankruptcy costs but has to cover twice the debt.
Whenever the first inequality holds, and (ii) improves welfare over (iii), then the second inequality must hold as well, and (i) leads to twice as much increase in welfare.

Thus, we only need to consider when it is that having both banks invest fully in the risky asset (option (iii)) is better than having Bank 2 invest $1-q_2 =\frac{2D}{1+r}  $ in the safe asset so as to prevent bankruptcies (option (i)). This is fully characterized by (\ref{bank2}), and is quite intuitive: the inequality compares the overall expected bankruptcy costs to the risk premium, scaled by the amount of debt that needs to be covered.

Next, let us examine the banks' incentives when they are maximizing their own shareholders' values.  If it is socially optimal to let both banks invest fully in the risky asset, it is easy to check that they will both do so.  Thus, we focus on the case in which it is best to have Bank 2 invest enough in the risk-free asset to avoid bankruptcy.

For the efficient portfolio profile to be an equilibrium when (\ref{bank2}) holds, it has to be that when Bank 1 chooses $q_1=1$, it is optimal for Bank $2$ to choose $q_2=1- \frac{2D}{1+r}$.  (If Bank 2 does so, then it is clearly optimal for Bank 1 to invest fully in the risky asset since it gives a greater expected return and there is no danger of bankruptcy.)
The expected value for Bank 2 from choosing $q_2=1- \frac{2D}{1+r}$ is
\[\left(1- \frac{2D}{1+r}\right)\theta R + \frac{2D}{1+r}(1+r) - 2D =  \left(1- \frac{2D}{1+r}\right)\theta R.\]
If Bank 2 invests fully in the risky asset, it however gets
\[\mathbb{E}[V_2^+(q_1=1, q_2=1)]=\theta (R  - 2D).\]
This is a profitable deviation, so the only equilibrium has both banks invest fully in the risky asset. This illustrates our first result (Proposition \ref{risky}).

Importantly, the inefficiency is not \emph{solely} driven by the fact that banks overlook their own bankruptcy costs and unpaid liabilities when choosing investments. To illustrate this, suppose instead that Bank 2 were to account for its bankruptcy cost $\chi$ and liability $2D$ upon default in its objective function, such that it maximizes $\mathbb{E}[V_2]$.  Investing fully in the risky asset then yields
\[\theta (R  - 2D) - (1-\theta)(2D+\chi).\]
 Thus, Bank 2 has efficient incentives  if and only if
\begin{align*}
  \left(1- \frac{2D}{1+r}\right)\theta R \geq \theta R  - 2D - (1-\theta)\chi,
\end{align*}
which simplifies to
\begin{equation}\label{opt2}
(1-\theta) \chi \geq \frac{2D}{1+r}\left[\theta R - (1+r)\right].
\end{equation}
Note the difference between the bank's incentive in (\ref{opt2}) and the socially efficient investment from (\ref{bank2}), which
differ by a factor of 2 on the left-hand side.   Hence the equilibrium can be inefficient even when banks internalize the ``direct" cost of their default, as captured by bankruptcy costs $\chi$ and missed payments $2D$. The inefficiency here comes from network externalities:
since a safer portfolio for Bank 2 has a positive externality on Bank 1, the socially efficient portfolio profile compares the opportunity cost of a safer portfolio with the \emph{overall} expected benefit $2(1-\theta)\chi$. However, when Bank 2 chooses its investment, it fails to internalize this externality and only compares the opportunity cost with \emph{its own} expected bankruptcy cost $(1-\theta)\chi$.

Next, let us examine the  incentives of banks to correlate their investments. To that end, we modify the network slightly: There are now two core banks (Bank 2 and Bank 3) and each owes an amount $D$ to the peripheral Bank 1. Each core bank also owes $0.5D$ to the other. The updated network is depicted in Figure \ref{fig:ex2}. Effectively, it is as if we had ``split'' our initial Bank 2 into two different banks that have claims on each other. The goal is to investigate whether these core banks have an incentive to correlate their investments.
As before, banks maximize their expected equity values $\mathbb{E}[V_i^+]$ when making investments.
\begin{figure}[!h]
\begin{center}
\begin{tikzpicture}[scale=1]
\definecolor{afblue}{rgb}{0.36, 0.54, 0.66};
\foreach \Point/\PointLabel in {(0,-1.5)/3, (0,1.5)/2, (4,0)/1}
\draw[fill=afblue!40] \Point circle (0.35) node {$\PointLabel$};
\draw[->, thick] (0.5,1.3) to node[above, sloped] {$D_{12}=D$} (3.5,0.2);
\draw[->, thick] (0.5,-1.3) to node[below, sloped] {$D_{13}=D$} (3.5,-0.2);
\draw[->, thick] (0.2,1.1) to  [out=-60,in=60] node[above, sloped] {$D_{32}=\frac{D}{2}$} (0.2,-1.1);
\draw[<-, thick] (-0.2,1.1) to  [out=-120,in=120] node[below, sloped] {$D_{23}=\frac{D}{2}$} (-0.2,-1.1);
\draw[->, thick] (4.5,0) to node[above] {$D_{01}=D$} (7.5,0);
\draw[fill=black!10] (8,0) circle (0.35) node {$0$};
  \end{tikzpicture}
  \end{center}
  \captionsetup{singlelinecheck=off}
  \caption[]{\small Banks 2 and 3 each have a liability of $0.5D$ to the other, and  a debt liability of $D$ towards Bank 1. Bank 1 has a debt liability of $D$ to outside investors (node 0). }\label{fig:ex2}
\end{figure}

There now exist three independently distributed risky assets whose realized gross returns
equal $R_1$, $R_2$, and $R_3$, respectively, with same probability $\theta$ and 0 with the residual probability, where $R_k\geq 1.5 D$ for each $k$. To make this motivating example as transparent as possible, there is no safe asset, and each bank can only choose one risky asset to invest in.\footnote{Similar conditions to the ones derived in the first part of the example guarantee that, even if a safe asset were available, banks would never choose to invest in it in equilibrium.} That is, a bank cannot divide its investment across different risky assets (we handle more complex cases in our analysis below). Suppose Bank 1 is fully invested in risky asset $k=1$ and let $\chi\leq 0.5 D$.

\begin{claim}\label{corrExample}
If the possible realizations of the risky assets $R_k$s are sufficiently close to each other, then it is socially efficient  for banks to diversify and all invest in \emph{different} assets (e.g., each Bank $i$ fully invests in risky asset $k=i$).
However, in all Nash equilibria, Banks 2 and 3 invest in the \emph{same} asset. Furthermore, for some parameter values, there are several such equilibria, including one in which both Banks 2 and 3 invest in the asset with the \emph{lowest} value $R_k$.
\end{claim}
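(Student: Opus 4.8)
The plan is to analyze the three-bank network in Figure~\ref{fig:ex2} directly, computing bank values and equity values for each configuration of asset choices, and then comparing the social planner's objective against the Nash best-response conditions. I would organize the argument around the observation that, since Bank~1 is fully invested in asset $k=1$ and $\chi\le 0.5D$, the only bankruptcy-relevant events are the zero-return realizations of the assets held by Banks~1, 2, and 3. Throughout I would use that $R_k\ge 1.5D$ for each $k$, so that a core bank that receives its full interbank claim of $0.5D$ and whose own asset pays off can always cover its outgoing liability of $D+0.5D=1.5D$; and that when a core bank's own asset fails, whether it defaults depends on whether the \emph{other} core bank repays the $0.5D$ it owes.

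\textbf{Step 1 (Social optimum: diversification).} I would write society's payoff as the sum of expected asset returns (which, since each bank invests one unit and every risky asset has the same success probability $\theta$ and expected return $\theta R_k$, is essentially fixed up to the small differences in the $R_k$'s) minus $\chi$ times the expected number of defaults. Hence, when the $R_k$'s are close, minimizing expected total bankruptcy cost dominates, so the planner chooses asset assignments that minimize $\sum_i \Pr[V_i<0]$. I would then check that having Banks~2 and~3 invest in distinct assets (and distinct from asset~1, e.g.\ Bank~$i$ in asset~$i$) makes the three relevant shocks independent, so that a single asset failure triggers at most a bounded cascade, and compare this against any configuration in which two banks share an asset, where a single failure simultaneously knocks out two banks' own returns. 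The key computation is that correlation strictly increases the probability of the joint-default event that is most costly, so diversification is strictly optimal for $R_k$'s sufficiently close (the differences in expected returns being $O(\max_k R_k - \min_k R_k)$ and thus negligible).

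\textbf{Step 2 (Equilibrium: matching).} This is the heart of the argument and the main obstacle. Fixing Bank~3's choice, I would compute Bank~2's expected equity value $\mathbb{E}[V_2^+]$ as a function of which asset it picks, and show the best response is always to pick the same asset as Bank~3. The mechanism (``risk matching'' from the Introduction) is: Bank~2's equity is positive only when its own asset succeeds; conditional on that, its equity equals $R_2 + d_{23} - 1.5D$, which is larger when Bank~3 also repays in full, i.e.\ when Bank~3's asset succeeds. Since Bank~2 only cares about the state where its own asset pays, and in that state it strictly prefers Bank~3 to be solvent, and $R_k$'s are close so no single-asset-return consideration overrides this, matching Bank~3's asset maximizes the conditional probability that Bank~3 repays. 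I would verify the inequality $\mathbb{E}[V_2^+\mid \text{match}] > \mathbb{E}[V_2^+\mid \text{don't match}]$ by writing both sides explicitly in terms of $\theta$, $R_2$, $D$, and the relevant $d_{23}$ values in the default and no-default sub-cases, and checking it reduces to a condition guaranteed by $R_k$'s close and $\chi \le 0.5D$. The subtlety to handle carefully is that when Bank~2's asset fails, Bank~2's equity is zero regardless, so those states do not enter $\mathbb{E}[V_2^+]$ — this is exactly why the argument works even though it superficially looks like Bank~2 should avoid correlation.

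\textbf{Step 3 (Multiplicity and the worst equilibrium).} Finally, for the multiplicity claim, I would exhibit parameters for which both ``both core banks in asset~2'' and ``both in asset~3'' are Nash equilibria (by symmetry of the best-response logic in Step~2, once matching is a best response to matching, any common asset works), and then argue that for suitable parameters ``both in the lowest-$R_k$ asset'' also survives: given the other core bank is in that asset, deviating to a higher-$R_k$ asset would de-correlate and lose the interbank repayment in the newly-created bad state, and I would choose $\chi$ and the $R_k$ gap so that this loss outweighs the gain $\theta(R_{\text{high}}-R_{\text{low}})$ from the better own-return. I expect Step~2 to require the most care, since it must be shown for \emph{all} Nash equilibria (not just symmetric ones), which means also ruling out any equilibrium in which Banks~2 and~3 pick different assets — that follows from the best-response computation being strict, but I would state it as a lemma and check the strictness explicitly.
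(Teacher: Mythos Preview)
Your proposal is correct and follows essentially the same three-part structure as the paper's (informal) argument embedded in Section~\ref{example}: compare expected default counts for the social optimum, compute Bank~2's best response to show matching is strict, then use strictness of the matching incentive to obtain multiplicity including the lowest-$R_k$ equilibrium. One sharpening for Step~1: your framing suggests correlation increases the number of core-bank defaults, but in fact Banks~2 and~3 each default with probability exactly $(1-\theta)$ regardless of asset choice (each defaults iff its own chosen asset fails, since $0.5D<1.5D$); the \emph{only} lever the planner has is Bank~1's default probability, which the paper notes drops from $(1-\theta)$ or $(1-\theta)^2$ down to $(1-\theta)^3$ under full diversification because Bank~1 survives whenever \emph{any one} of Banks~2,~3 repays its $D$---this is the clean computation you want, not a general ``bounded cascade'' comparison.
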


The claim first states that diversifying investments across banks is socially efficient, whenever these investments have comparable expected returns. Why is this true? Since $R_k\geq 1.5 D$, a bank always remains solvent when its investment pays off, irrespective of which asset it invested in. More interestingly, Bank 1 also remains solvent if it receives (any) one of its debt payments from Banks 2 and 3. Hence diversifying investments across banks reduces the likelihood that Bank 1 defaults from $(1-\theta)$ to $(1-\theta)^3$. If expected returns are similar across risky assets, then there is no significant opportunity cost to choosing diversified portfolios, and hence doing so is socially efficient.

However, Banks 2 and 3 have a strict incentive to invest in the same risky asset. The intuition behind this part of the claim is more subtle, and comes from the fact that debt claims between banks generate complementarities between their values. To illustrate this, consider what happens when Banks 2 and 3 invest in different assets. For example, suppose that each Bank $i$ invests in asset $k=i$, and let $R_2 = R_3\equiv R$.\footnote{If returns are different across assets, then incentives to deviate are even stronger as whoever ends up investing in the asset with the lowest expected return has an extra incentive to choose the other's asset.}  If both assets pay off, then Banks 2 and 3's values are symmetric equal to $V_2(R,R) = V_3(R,R)=R-D$. If only Bank $i$'s asset pays off, however, then only Bank $i$ remains solvent. Indeed, the other gets its $0.5$ repayment from $i$, but this is not enough to cover its liabilities. The bank that remains solvent has then a value of $V^+_i(p_i = R, p_j = 0) = R+0.5D - \chi-1.5 D$ while the other defaults and gets $V^+_j(p_j=0,p_i= R) = 0$.
The expected value of each bank when they choose different assets is
 \begin{align*}
 \theta^2\underbrace{(R-D)}_{ \substack{\text{$V^+_i$ when both}\\\text{assets pay off}}}+\quad\theta(1-\theta)\underbrace{(R-D-\chi)}_{ \substack{\text{$V^+_i$ when only  $i$'s}\\\text{ asset pays off}}}< \underbrace{\theta(R-D)}_{\mathbb{E}[V^+_i]\text{ when $i$ invests in same asset as $j$}}.
 \end{align*}
 This illustrates our Propositions \ref{correlation1} and \ref{correlation2}. Banks have an incentive to correlate their investments not only to
minimize the probability of having to repay debts---which is reminiscent of the asset
substitution problem in corporate finance---but also to increase their
expected returns
from interbank assets conditional upon being solvent. Indeed, when $i$ correlates its investments with $j$, the expected debt repayment from $j$ to $i$ conditional on $i$ being solvent is the full repayment $0.5D$, whereas it is only $\theta 0.5 D$ when investments are not correlated. 

Finally, note that this incentive to correlate is strict. Hence, even if their counterparty is investing in the asset with the lower return, they can still prefer to do that too rather than invest in a different asset.
\end{example}

\section{Network Externalities and Distorted Incentives}

In this section, we study how financial contracts between banks distort investment incentives, because of associated externalities in insolvencies and bankruptcy costs.
The distortions in incentives are to be expected given the externalities present, but are important to characterize because of their
extremity.
This also provides the base for an analysis of regulation that
compares the performance of different regulatory instruments
in terms of endogenous investments and systemic risk consequences. We take as given the network of
financial obligations between banks throughout the analysis, and discuss this assumption in Section \ref{sec:endonetwork}.

\subsection{Overly Risky Investment:  The Intensive Margin}\label{overly}

We begin by examining the intensive margin of risk-taking. Each bank has $|K_i|=2$ assets in which it can invest. The first is a \emph{s}afe asset with deterministic return $p_s=1+r$; the second a \emph{r}isky asset that pays a random return $p_{ir}$ with  $ \mathbb{E}[  p_{ir} ]> 1+r$. One can think of this as a standard two-fund separation setting, with the main decision of the investor being how much risk to take. We make no assumptions on the joint distribution of risky asset returns $(p_{ir})_i$, which can be arbitrarily correlated across banks.

 Suppose each bank's outstanding debt $D_i^{L}$ is low enough that it could be paid back entirely were the bank to only invest in the safe asset: $D_i^{L} \leq (1+r)$.
Insolvency happens if the value of a bank's portfolio falls below the bank's liability $D_i^{L}$,
in which case bankruptcy costs  are incurred. Under limited liability, the shareholders get a payoff of zero in case of insolvency, and the bankruptcy costs are born by
whomever is holding debt, or a government  that steps in.

Network interdependencies could hypothetically generate market discipline, as a risky investment of Bank $i$ can trigger losses for some of its counterparties and feedback to itself. We show, however, that this is not the case and that all banks have a strict incentive to take on as much risk as possible.

\begin{proposition}\label{risky}
Suppose the distribution of each bank's risky asset return $p_{ir}$ is atomless. Then investing fully in the risky asset $q_{ir}=1$ is a strictly dominant strategy for all banks, irrespective of the network structure.

Furthermore, if $\Pr(p_{ir}<D_i^L-D_i^A)>0$, then there exists $\overline{a}$, $\overline{\chi}$ such that, if bankruptcy costs are large enough $ a\geq \overline{a}$, $\chi\geq \overline{\chi}$, then Bank $i$'s investment decision is inefficiently risky.
\end{proposition}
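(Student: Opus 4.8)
The plan is to handle the two assertions separately; the substance concentrates in the dominant-strategy claim.

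\emph{Dominance.} Fix a bank $i$, the rival portfolios $\mathbf{q}_{-i}$, and a realization $p_r$ of the risky return, and write $x_i(q_{ir},p_r)=(1-q_{ir})(1+r)+q_{ir}p_r$ for $i$'s realized outside return. The first step is a structural fact: in the greatest equilibrium of (\ref{eq-bookvalue-bankruptcy}) the equity value of $i$ equals $\bigl(x_i(q_{ir},p_r)-x_i^{\ast}(p_r)\bigr)^{+}$ for a threshold $x_i^{\ast}(p_r)$ that does \emph{not} depend on $q_{ir}$. The reason is that whenever $i$ is solvent it pays its creditors in full, so the clearing outcome of the rest of the network, and hence the inflow $d_i^{A}$ that $i$ receives, is pinned down at a level $d_i^{A,\mathrm{full}}(p_r)$ that is independent of $x_i$; and $i$'s inflows can never exceed that level, since any shortfall in $i$'s payments only depresses its counterparties, by the monotonicity noted after (\ref{eq-bookvalue-bankruptcy}). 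Hence $i$ is solvent in the greatest equilibrium exactly when $x_i+d_i^{A,\mathrm{full}}(p_r)\ge D_i^{L}$, i.e. when $x_i\ge x_i^{\ast}(p_r):=D_i^{L}-d_i^{A,\mathrm{full}}(p_r)$, and there $V_i=x_i-x_i^{\ast}(p_r)$; otherwise $V_i<0$ and $V_i^{+}=0$. Note $x_i^{\ast}(p_r)\le D_i^{L}\le 1+r$ because $d_i^{A,\mathrm{full}}\ge 0$.

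Given this, the dominance claim is a Jensen-type comparison. For each $p_r$ the map $\psi_{p_r}(x):=(x-x_i^{\ast}(p_r))^{+}$ is convex, and $x_i(q_{ir},p_r)$ is the convex combination of the endpoints $1+r$ (at $q_{ir}=0$) and $p_r$ (at $q_{ir}=1$), so $V_i^{+}(q_{ir},p_r)\le (1-q_{ir})\psi_{p_r}(1+r)+q_{ir}\psi_{p_r}(p_r)$. Taking expectations and using $x_i^{\ast}\le 1+r$ to drop a truncation, $\mathbb{E}[\psi_{p_r}(1+r)]=(1+r)-\mathbb{E}[x_i^{\ast}(p_r)]<\mathbb{E}[p_r]-\mathbb{E}[x_i^{\ast}(p_r)]\le \mathbb{E}[(p_r-x_i^{\ast}(p_r))^{+}]=\mathbb{E}[V_i^{+}(q_{ir}=1)]$, where the strict step is exactly $\mathbb{E}[p_r]>1+r$. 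Combining the two displays, $\mathbb{E}[V_i^{+}(q_{ir})]<\mathbb{E}[V_i^{+}(q_{ir}=1)]$ for every $q_{ir}<1$; since $\mathbf{q}_{-i}$ was arbitrary, $q_{ir}=1$ is strictly dominant. (Atomlessness is invoked to keep the solvency-threshold events negligible.)

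\emph{Inefficiency.} By the first part every bank plays $q_{jr}=1$ in the unique equilibrium. Since $d_i^{A}\le D_i^{A}$ always, $i$ is insolvent whenever $p_r<D_i^{L}-D_i^{A}$, an event of positive probability, so its expected bankruptcy cost at the equilibrium profile is at least $\chi\,\Pr(p_r<D_i^{L}-D_i^{A})>0$. Now hold $\mathbf{q}_{-i}$ fixed and let $i$ deviate to $q_{ir}=0$. Because $D_i^{L}\le 1+r$, bank $i$'s assets are at least $1+r\ge D_i^{L}$ in every state, so it never defaults; and by monotonicity every other bank is weakly healthier, so a bank that was solvent stays solvent while a bank that remains in default does so with weakly larger assets, hence a bankruptcy cost larger by at most $aD_j^{A}$ (the most its inflows can rise). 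Thus the deviation lowers total expected returns by exactly $\mathbb{E}[p_r]-(1+r)$ and changes total expected bankruptcy costs by at most $-\chi\,\Pr(p_r<D_i^{L}-D_i^{A})+a\sum_{j\ne i}D_j^{A}$, so it raises total surplus by at least $\chi\,\Pr(p_r<D_i^{L}-D_i^{A})-a\sum_{j\ne i}D_j^{A}-(\mathbb{E}[p_r]-(1+r))$, which is strictly positive once $\overline{a},\overline{\chi}$ are large enough (using $a\le 1$). Hence $i$'s equilibrium investment is inefficiently risky.

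The main obstacle is the structural fact opening the dominance argument: establishing $V_i^{+}=(x_i-x_i^{\ast})^{+}$ with $x_i^{\ast}$ independent of $q_{ir}$ requires handling the discontinuities of the clearing map, and the attendant equilibrium multiplicity, consistently through the greatest-fixed-point selection. Once that is in hand, both the dominance and the inefficiency claims are short.
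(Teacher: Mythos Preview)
Your argument is correct and takes a genuinely different route from the paper's for the dominance claim.

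The paper proceeds by a local perturbation argument: for any $q_{ir}<1$ it considers $q_{ir}+\varepsilon$, splits the solvency event into $A(\mathbf{q}_i)\cap A(\mathbf{q}_i')$ and $A(\mathbf{q}_i)\setminus A(\mathbf{q}_i')$, bounds the loss on the latter by $\varepsilon(1+r)$, and then shows (their Claim~2) that $\mu(A(\mathbf{q}_i)\setminus A(\mathbf{q}_i'))\to 0$ as $\varepsilon\to 0$; atomlessness is used precisely at that step. You instead isolate the structural fact $V_i^{+}=(x_i-x_i^{\ast}(p_r))^{+}$ with $x_i^{\ast}$ independent of $q_{ir}$, and then the result drops out of convexity of $(\cdot)^{+}$ and $\mathbb{E}[p_r]>1+r$ in one line. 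This is cleaner, and in fact your convexity argument does not use atomlessness at all---your parenthetical remark about it is unnecessary. The price you pay is that your structural fact has to be verified carefully through the greatest-fixed-point selection (which you correctly flag as the main work): one must check that the ``$i$ pays in full'' auxiliary clearing both yields an upper bound on $d_i^{A}$ in every equilibrium and is itself attained as the greatest equilibrium whenever $x_i\ge x_i^{\ast}$. Your sketch of this via monotonicity is right, but in a write-up it deserves a short lemma of its own. For the inefficiency half, your bound $-\chi\Pr(p_r<D_i^L-D_i^A)+a\sum_{j\ne i}D_j^{A}$ on the change in total bankruptcy costs is exactly the paper's (indeed slightly tighter, since the paper bounds by $\sum_{j\ne i}D_j^{A}$ without the factor $a$), and the conclusion follows the same way.
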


The proof of Proposition \ref{risky} involves showing that a Bank $i$ always benefits from marginally increasing its investment in the risky asset (up to a limit of full investment as we take available capital as given). Doing so leads to strictly higher expected returns on investment. It might trigger Bank $i$'s default in some states of the world, but this can only be the case if Bank $i$ is on the verge of insolvency in those states, such that the losses to owners are negligible. Bank $i$'s default might trigger other defaults that further decrease the value of its assets in discontinuous ways, but these losses do not affect the owners of Bank $i$'s equity value either, since $i$ is already defaulting and its owners are getting a payoff of zero. There is then a strict incentive to take on as much risk as possible.

Fully investing in the risky portfolio is often socially inefficient, since a bank's decision also
affects the rest of the financial network. First, a bank does not account for
its default and incurred bankruptcy costs when deciding its investment. Indeed, under limited liability, the bank's
shareholders only consider returns earned when solvent and completely disregard what happens under insolvency. Second, a bank's investment decision impacts others through counterparty risk. In particular, if $i$ defaults it will not honor its debt liabilities and its creditors may be driven to insolvency, causing bankruptcy costs to add up. Because of these, a planner would often prefer less risky investments.

The intuition behind this is straightforward.  The bank has incentives
to maximize
\[
\mathbb{E}_\mathbf{p}[ V_i(\mathbf{q_i},\mathbf{q_{-i}};\mathbf{p},\mathbf{D}) \, | \, V_i(\mathbf{q_i},\mathbf{q_{-i}};\mathbf{p},\mathbf{D}) > 0 ] \Pr[ V_i(\mathbf{q_i},\mathbf{q_{-i}};\mathbf{p},\mathbf{D}) > 0 ].
\]
Society's full payoff as a function of $i$'s portfolio choice
is however
\[
\mathbb{E}_\mathbf{p}\left[\sum_j \mathbf{q_j}\mathbf{p}
- b_j(\mathbf{V}(\mathbf{q_i},\mathbf{q_{-i}};\mathbf{p},\mathbf{D}),\mathbf{q_i},\mathbf{q_{-i}}, \mathbf{p},\mathbf{D})\right].
\]
Hence the bank enjoys the benefits of riskier investments coming from the risk premium,
but overlooks its externality when defaulting, and any additional defaults it can trigger.
This leads banks  to tend to make inefficiently  risky investments.

We emphasize that even though the above results are amplified by limited liability, the inefficiency is not dependent upon that limit.
Even if a bank maximizes $\mathbb{E}[V_i]$ and thus accounts for its expected failure costs and missed payments upon default, its default can cause defaults of other banks that are not part of its objective function.
Thus, even when fully accounting for its own potentially negative value, a bank can have incentives to over-invest in a risky
asset from a systemic-risk perspective (as illustrated in Section \ref{example}).

\subsection{Correlated Investments: Popcorn \emph{and} Dominoes}\label{corrdom}

The metaphor of ``popcorn or dominoes'' was made by Eddie Lazear, the chairman of the council of economic advisors under George W. Bush during the financial crisis.
The question was whether there really was any issue of potential contagion and ``dominoes'', or whether much of the crisis was instead simply due to all banks ``boiling in the same hot oil''---i.e., all having extensive exposure to an under-performing mortgage market.
The answer is that both were true.   Banks had highly correlated portfolios and all had dangerously
low values in their investments at the same time, and hence most were either barely solvent, or
even insolvent.  Nonetheless, they also had large exposures to each others' debts, as well as to
derivatives from AIG who could not even manage margin payments, and to securities issued by
Fannie Mae and Freddie Mac, which were both insolvent.\footnote{For discussion of this see Jackson \citeyearpar{jackson2019}, as well as the extensive analysis
and data in the {\sl Financial Crisis Inquiry Report}, commissioned by an act of the US congress.}
This highlights the fact that correlation of investments across banks matters for financial contagion:
many organizations holding
directly or indirectly similar subprime mortgages made the whole system substantially more
fragile.

We now investigate banks' incentives to correlate their investments.

\subsubsection{A General Result on Correlation and `Risk-Matching'}

There are $K=N$ assets available to all banks. Each asset yields a high gross return $p_{k}=\overline{R}$ with probability $\theta$ and a low gross return $p_{k}=\underline{R}<\overline{R}$ otherwise. Returns are independent across assets, but by choosing which asset to invest in, banks can effectively choose how to correlate their portfolios. For instance, if they all choose to invest in the same asset then their portfolios are maximally positively correlated, whereas if they all choose different assets then their portfolios are independent. Because we are interested in studying how banks correlate their portfolios rather than diversification incentives, we impose that each bank invests in only one asset; i.e., for all $i$, $q_{ik} = 1$ for some $k\in K$. Let $\overline{R}\geq D_i^L-D_i^A$ for all $i$, so that all banks remain solvent when they all receive a high return. If this condition does not hold, then some bank(s) always default irrespective of investments and realized returns, and we can redefine the network of interest to be the remaining banks with non-trivial solvency status.

To simplify notation in this section, we no longer make explicit the dependence of bank values $\mathbf{V}$ on the network of debt  $\mathbf{D}$, and we write bank values as a function of realized portfolio values $\mathbf{qp}$ only. Let $\mathbf{q}_{-i}\mathbf{p}=\mathbf{\overline{R}}$ denote that all banks other than $i$ have received high returns $\overline{R}$s, and
$\mathbf{q}_{-i}\mathbf{p}=\mathbf{\underline{R}}$ denote that they all have gotten low returns $\underline{R}$s.

\begin{proposition}
\label{correlation1}
Suppose that, for some Bank $i$, the value of a high portfolio realization is strictly higher when some Bank $j$ also has a high portfolio realization than when it does not---i.e.,
\begin{align*}
    \mathbb{E}_{\mathbf{q}_{-ij}\mathbf{p}}[V^+_i(\mathbf{q}_{i}&\mathbf{p}=\overline{R}, \mathbf{q}_{j}\mathbf{p}=\overline{R}, \mathbf{q}_{-ij}\mathbf{p})] -\mathbb{E}_{\mathbf{q}_{-ij}\mathbf{p}}[V^+_i(\mathbf{q}_{i}\mathbf{p}=\underline{R}, \mathbf{q}_{j}\mathbf{p}=\overline{R}, \mathbf{q}_{-ij}\mathbf{p})] \\&> \mathbb{E}_{\mathbf{q}_{-ij}\mathbf{p}}[V^+_i(\mathbf{q}_{i}\mathbf{p}=\overline{R},\mathbf{q}_{j}\mathbf{p}=\underline{R}, \mathbf{q}_{-ij}\mathbf{p})] -\mathbb{E}_{\mathbf{q}_{-ij}\mathbf{p}}[V^+_i(\mathbf{q}_{i}\mathbf{p}=\underline{R},\mathbf{q}_{j}\mathbf{p}=\underline{R}, \mathbf{q}_{-ij}\mathbf{p})]
\end{align*}
 for some $i$, $j$. Then there is no equilibrium of the investment game in which portfolios are jointly independent across banks.

 A sufficient condition for the above inequality to hold is that there exist two neighboring banks $i$, $j$ with $D_{ij}>0$ such that $\underline{R}\leq D_i^L-D_i^A<\overline{R}$ and $\underline{R}<D_j^L-D_j^A$.
\end{proposition}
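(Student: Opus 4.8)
The plan is to treat the two assertions separately, working throughout with the four conditional expectations
\begin{align*}
A&=\mathbb{E}_{\mathbf{q}_{-ij}\mathbf{p}}[V_i^+(\overline{R},\overline{R},\mathbf{q}_{-ij}\mathbf{p})], & B&=\mathbb{E}_{\mathbf{q}_{-ij}\mathbf{p}}[V_i^+(\overline{R},\underline{R},\mathbf{q}_{-ij}\mathbf{p})], \\
C&=\mathbb{E}_{\mathbf{q}_{-ij}\mathbf{p}}[V_i^+(\underline{R},\overline{R},\mathbf{q}_{-ij}\mathbf{p})], & E&=\mathbb{E}_{\mathbf{q}_{-ij}\mathbf{p}}[V_i^+(\underline{R},\underline{R},\mathbf{q}_{-ij}\mathbf{p})],
\end{align*}
where the first two displayed arguments denote the realized portfolio values of $i$ and of $j$. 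The displayed hypothesis is then exactly $A-C>B-E$, i.e.\ $A+E>B+C$. A preliminary remark I would record is that, because $\beta_i$ (hence $b_i$) depends on $(\mathbf{q},\mathbf{p})$ only through the realized portfolio values, the greatest solution of (\ref{eq-bookvalue-bankruptcy}) -- and hence each $V_i^+$ -- is a function of the portfolio values alone, so $A,B,C,E$ are well defined and the whole argument can be phrased in terms of portfolio realizations.

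For the first assertion I would argue by contradiction. Suppose there is a Nash equilibrium of the investment game in which portfolios are jointly independent. Since each bank holds exactly one asset and asset returns are independent across assets, joint independence forces all banks -- in particular $i$ and $j$ -- to hold \emph{distinct} assets, so $\mathbf{q}_i\mathbf{p}$, $\mathbf{q}_j\mathbf{p}$ and $\mathbf{q}_{-ij}\mathbf{p}$ are mutually independent, with $\mathbf{q}_i\mathbf{p}$ and $\mathbf{q}_j\mathbf{p}$ each equal to $\overline{R}$ with probability $\theta$. Bank $i$'s equilibrium expected equity value is then $\theta^2A+\theta(1-\theta)B+\theta(1-\theta)C+(1-\theta)^2E$. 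Consider the deviation in which $i$ switches to $j$'s asset, holding all other banks' choices fixed; then $\mathbf{q}_i\mathbf{p}=\mathbf{q}_j\mathbf{p}$, and $\mathbf{q}_{-ij}\mathbf{p}$ is still independent of $\mathbf{q}_j\mathbf{p}$, so $i$'s expected equity value becomes $\theta A+(1-\theta)E$. Subtracting, the gain from deviating equals $\theta(1-\theta)(A+E-B-C)$, which the hypothesis makes strictly positive for $\theta\in(0,1)$. This contradicts equilibrium, so there is no independent equilibrium.

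For the second assertion it suffices to verify $A+E>B+C$ under the stated conditions. The condition $\underline{R}\le D_i^L-D_i^A$ immediately gives $C=E=0$: whenever $\mathbf{q}_i\mathbf{p}=\underline{R}$, any solution has $V_i=\underline{R}+d_i^A-D_i^L-b_i\le\underline{R}+D_i^A-D_i^L\le 0$ (using $d_i^A\le D_i^A$ and $b_i\ge 0$), so $V_i^+=0$. It remains to show $A>B$. First, the right-hand side of (\ref{eq-bookvalue-bankruptcy}) is monotone in $\mathbf{V}$ (as noted after the equation) and also nondecreasing in each bank's realized portfolio value: on the default branch $V_i=(1-a)[\sum_k q_{ik}p_k+d_i^A]-D_i^L-\chi$, which is nondecreasing in $\sum_k q_{ik}p_k$ precisely because $a\le 1$, and the jump at the default threshold is upward; hence the greatest solution is nondecreasing in the portfolio realizations, so $V_i^+(\overline{R},\overline{R},x)\ge V_i^+(\overline{R},\underline{R},x)$ for every realization $x$ of $\mathbf{q}_{-ij}\mathbf{p}$, giving $A\ge B$. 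For strictness I would condition on the positive-probability event that every bank other than $i$ and $j$ receives a high return. In the state $(\overline{R},\overline{R},x)$ with this $x$ all banks receive $\overline{R}$, so by the maintained assumption $\overline{R}\ge D_h^L-D_h^A$ for all $h$ every bank is solvent and $V_i^+=\overline{R}+D_i^A-D_i^L>0$ (strict since $D_i^L-D_i^A<\overline{R}$). In the state $(\overline{R},\underline{R},x)$, the condition $\underline{R}<D_j^L-D_j^A$ forces $j$ to default (its assets are at most $\underline{R}+D_j^A<D_j^L$), so $V_j<0$ and, since $D_{ij}>0$, $d_{ij}=\frac{D_{ij}}{D_j^L}\max(V_j+D_j^L,0)<D_{ij}$; with $d_{ih}\le D_{ih}$ for all $h$ and $b_i\ge 0$ this yields $V_i\le\overline{R}+d_i^A-D_i^L<\overline{R}+D_i^A-D_i^L$, hence $V_i^+<\overline{R}+D_i^A-D_i^L$. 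Thus $V_i^+(\overline{R},\overline{R},\cdot)>V_i^+(\overline{R},\underline{R},\cdot)$ on an event of positive probability, upgrading $A\ge B$ to $A>B$, so $A+E=A>B=B+C$.

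I expect the crux to be the strictness step in the second assertion: everything else is bookkeeping with the lattice/fixed-point structure, but to get $A>B$ rather than merely $A\ge B$ one must (i) exhibit a concrete positive-probability state where $i$ strictly prefers $j$ solvent -- which is why I single out the ``all other banks high'' state and lean on the assumption that all-high implies all-solvent; (ii) check that the bankruptcy-cost term does not break monotonicity of the greatest fixed point in the portfolio realizations, where $a\le 1$ is exactly what is used; and (iii) deploy both thresholds in the right place -- $\underline{R}\le D_i^L-D_i^A$ to zero out the ``$i$ low'' terms, and $\underline{R}<D_j^L-D_j^A$ together with $D_{ij}>0$ to guarantee a \emph{strictly} smaller repayment from $j$ to $i$ when $j$'s asset fails.
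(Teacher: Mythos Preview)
Your proof is correct and follows essentially the same approach as the paper's: for the first assertion you both have $i$ deviate to $j$'s asset and compute the gain (you make the algebra $\theta(1-\theta)(A+E-B-C)$ explicit where the paper describes it verbally as ``shifting probability mass''), and for the second assertion you both use $\underline{R}\le D_i^L-D_i^A$ to zero out $C$ and $E$, invoke monotonicity for $A\ge B$, and obtain strictness on the event that all banks other than $i,j$ receive $\overline{R}$. Your treatment is in fact slightly more careful than the paper's in two places: you explicitly check that the greatest fixed point is monotone in portfolio realizations (noting the role of $a\le 1$), and you handle the ``does $i$ default?'' case split via a single inequality $V_i^+\le\max(V_i,0)<\overline{R}+D_i^A-D_i^L$ rather than splitting into cases as the paper does.
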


Proposition \ref{correlation1} gives sufficient conditions for independent portfolios not to be part of any equilibrium. These conditions are easily satisfied whenever the low return $\underline{R}$ is low enough. Indeed, they require the existence of two neighboring banks in the network, such that the debtor Bank $j$ necessarily defaults when it gets a low return ($\underline{R}<D_j^L-D_j^A$) and its creditor $i$ either defaults as well or is on the verge of defaulting ($\underline{R}\leq D_i^L-D_i^A$). Whenever this is the case, Bank $i$ has a strict incentive to correlate its portfolio to Bank $j$'s.

This incentive to correlate goes beyond neighboring banks: Proposition \ref{correlation2} shows that fairly weak conditions are sufficient to ensure that
correlation of \emph{all} banks' portfolios is an equilibrium.

\begin{proposition}
\label{correlation2}
Suppose that, for all banks, the value of a high portfolio realization is weakly higher when all other banks also have a high portfolio realization than when they all have low realizations---i.e.,
\begin{align*}
    V^+_i(\mathbf{q}_{i}\mathbf{p}=\overline{R}, \mathbf{q}_{-i}\mathbf{p}=\mathbf{\overline{R}}) -V^+_i(\mathbf{q}_{i}\mathbf{p}=\underline{R}, &\mathbf{q}_{-i}\mathbf{p}=\mathbf{\overline{R}}) \\&\geq V^+_i(\mathbf{q}_{i}\mathbf{p}=\overline{R}, \mathbf{q}_{-i}\mathbf{p}=\mathbf{\underline{R}}) -V^+_i(\mathbf{q}_{i}\mathbf{p}=\underline{R},\mathbf{q}_{-i}\mathbf{p}=\mathbf{\underline{R}})
\end{align*}
 for all $i$. Then there exists an equilibrium in which banks' portfolios are perfectly correlated.

Sufficient conditions for the above inequality to hold is that, for each Bank $i$, either (i) $i$ is not part of a cycle, or (ii) $D_i^L\leq \underline{R}$, or (iii) $ D_i^L-D_i^A \geq \underline{R}$.
\end{proposition}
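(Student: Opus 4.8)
The plan is to establish the two assertions separately: first that the displayed discrete‑supermodularity inequality (holding for every $i$) makes the profile in which every bank invests in one common asset a Nash equilibrium, and then that each of (i)--(iii) forces that inequality for the bank in question.

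For the equilibrium‑existence part, fix an asset, say $k=1$, and consider the profile in which $q_{i1}=1$ for every bank $i$. Along this profile only two states arise: with probability $\theta$ all portfolios return $\overline{R}$, and with probability $1-\theta$ all return $\underline{R}$, so Bank $i$'s payoff is $\theta V^+_i(\overline{R},\mathbf{\overline{R}})+(1-\theta)V^+_i(\underline{R},\mathbf{\underline{R}})$. Since a bank may hold only one asset, the only unilateral deviations are to some $k'\neq1$; as all assets are i.i.d.\ and mutually independent, any such $k'$ is independent of asset $1$ (which alone is held by the other banks), so it suffices to treat one. After the deviation the pair $(\mathbf{q}_i\mathbf{p},\mathbf{q}_{-i}\mathbf{p})$ falls into the four states $(\overline{R},\mathbf{\overline{R}})$, $(\underline{R},\mathbf{\overline{R}})$, $(\overline{R},\mathbf{\underline{R}})$, $(\underline{R},\mathbf{\underline{R}})$ with probabilities $\theta^2$, $(1-\theta)\theta$, $\theta(1-\theta)$, $(1-\theta)^2$, so the deviation payoff is the corresponding convex combination of the same four equity values (each evaluated at the greatest solution of (\ref{eq-bookvalue-bankruptcy}), as always). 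The equilibrium payoff minus the deviation payoff collapses to $\theta(1-\theta)$ times $\bigl[V^+_i(\overline{R},\mathbf{\overline{R}})-V^+_i(\underline{R},\mathbf{\overline{R}})\bigr]-\bigl[V^+_i(\overline{R},\mathbf{\underline{R}})-V^+_i(\underline{R},\mathbf{\underline{R}})\bigr]$, which is nonnegative precisely by hypothesis; hence no bank can gain by deviating. This part is essentially bookkeeping once the two‑state/four‑state split is set up.

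For the sufficient conditions I would check each of (i)--(iii) bank by bank, using throughout that the greatest fixed point is monotone in the exogenous returns (raising any bank's portfolio realization weakly raises every $V_j$, hence $V^+_i$). Under (iii), $D_i^L-D_i^A\ge\underline{R}$: since $d_i^A\le D_i^A$ always, a low own realization gives $\mathbf{q}_i\mathbf{p}+d_i^A\le\underline{R}+D_i^A\le D_i^L$, so $V_i\le0$ and $V^+_i(\underline{R},\cdot)=0$ regardless of the other banks; the inequality then reduces to $V^+_i(\overline{R},\mathbf{\overline{R}})\ge V^+_i(\overline{R},\mathbf{\underline{R}})$, which is monotonicity. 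Under (ii), $D_i^L\le\underline{R}$: Bank $i$ is always solvent and hence always repays its creditors in full, so its own realization never propagates through the network and the received payment $d_i^A$ depends only on the other banks' realizations; writing $A$ and $B$ for its value at $\mathbf{\overline{R}}$ and $\mathbf{\underline{R}}$, both sides of the inequality equal $\overline{R}-\underline{R}$ and it holds with equality. Under (i), no directed cycle of $\mathbf{D}$ passes through $i$, so again the payments $i$ receives cannot depend on $V_i$ and $d_i^A$ is a function of the others' realizations alone; splitting on whether $i$ stays solvent under a low own realization against $\mathbf{\overline{R}}$ and against $\mathbf{\underline{R}}$, and using $A\ge B$ (monotonicity) together with $b_i\ge0$, each branch yields $\Delta_i(\mathbf{\overline{R}})\ge\Delta_i(\mathbf{\underline{R}})$ for $\Delta_i(\mathbf{x}):=V^+_i(\overline{R},\mathbf{x})-V^+_i(\underline{R},\mathbf{x})$ --- for instance in the mixed branch the left side equals $\overline{R}-\underline{R}$ while the right side is at most $\overline{R}+B-D_i^L<\overline{R}-\underline{R}$.

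The main obstacle is exactly the dependence of the inbound payments $d_i^A$ on Bank $i$'s own realization through directed cycles in $\mathbf{D}$: without controlling it the four equity values are not transparent functions of $\overline{R}$, $\underline{R}$ and the network. Each of (i)--(iii) is precisely a hypothesis that neutralizes this feedback --- (ii) makes the flow out of $i$ constant, (i) removes the cycle through $i$ entirely, and (iii) pins $V^+_i$ at zero whenever $i$'s own realization is low so the feedback cannot register in $V^+_i$ --- after which the inequality follows from one application of monotonicity of the selected equilibrium.
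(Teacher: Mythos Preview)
Your proof is correct and follows essentially the same route as the paper's: you reduce the no-deviation condition for the fully correlated profile to the displayed supermodularity inequality (the paper does the same, though it phrases it directly as the incentive condition rather than writing out the $\theta(1-\theta)$ factor), and you verify each of (i)--(iii) by the same mechanism---severing the feedback from $i$'s own realization to $d_i^A$ in (i) and (ii), and collapsing both low-own-return terms to zero in (iii)---with the ``mixed branch'' for (i) being exactly the paper's ``worst case'' ($i$ defaults at $(\underline{R},\mathbf{\underline{R}})$ but not at $(\underline{R},\mathbf{\overline{R}})$).
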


To get intuition for Proposition \ref{correlation2}, take the point of view of some Bank $i$ and consider what happens when all banks $j\neq i$ choose the same asset. If Bank $i$ does not choose that same asset as well, there will be states in which all banks but
$i$ get a low return. If high returns are (weak) complements,
then Bank $i$  prefers to receive its high return when others receive a high return as well. For the condition in Proposition \ref{correlation2} not to hold, Bank $i$ must gain strictly more by having a high return when others have low returns. This can only be the case if a high return for Bank $i$ can prevent others' defaults and feed back to $i$ in a way that strictly increases $i$'s net payoff. So $i$ would have to belong to a dependency cycle (condition (i) not holding), and $i$ would need to be sufficiently integrated in the network (conditions (ii) and (iii) not holding). In addition, it would need to receive more from its counterparties than what it pays to them itself when $\mathbf{p}_{-i}=\underline{R}$. Only a combination of all the above could lead to a strict incentive to diversify investments from one's counterparties.

We refer to this incentive to correlate as `{\sl risk matching}' to distinguish it from
risk shifting.  Risk shifting refers to the general phenomenon that someone who faces some form of limited liability and/or partial returns has incentives to distort investments to ones that may be inefficient (e.g.,  Galai and Masulis \citeyearpar{galai1976option} and Jensen and Meckling \citeyearpar{jensenm1976}).
Basically, an investor has an incentive to arrange
a portfolio so that they earn the most returns, even if that results in
increased risks for other parties not involved in the portfolio choice, and even if this involves lower overall value investments.
Our setting does have this feature, as we have already demonstrated.  However, what is special in Proposition \ref{correlation2} is that the distortion here comes from incentives {\sl across} banks to correlate their investments due to complementarities in their balance sheets and the values received.
Banks have a special incentive to match their investments.  This is a feature that is not present in what is usually thought of as risk shifting, and so we give it a new name.
We note that the result in Elliott, Georg, and Hazell  \citeyearpar{elliottgh2018} is an example of risk
shifting, as banks want to correlate their assets to shift losses from
states in which the bank is solvent and the loss is incurred by
\emph{shareholders}, to states in which the bank defaults and the loss
is incurred by \emph{debt holders}.
This force is present in our setting as well but the incentive to correlate goes beyond it, since it applies even when banks account for their own failure costs and missed payments, as illustrated in Example \ref{ex:1} from Section \ref{example}.
The incentive to correlate here comes from the fact that high portfolio
realizations are
complements: a bank generally gains more by remaining solvent and getting
$p_i=\overline{R}$ when others also have high
portfolio realizations since its own value depends positively on others'
through financial interdependencies. 
This intuition holds regardless of how bankruptcies are resolved
or how large those costs are, and in particular without assuming that a bank bears the costs of
its counterparty's bankruptcy.\footnote{We also note that the intuition relies on network externalities generated by interbank contracts, which differ from price-based externalities studied in Acharya \citeyearpar{acharya2009theory}.}

We take the interbank network as given in our analysis, but several existing papers show that, if banks can choose their counterparties, they have an incentive to correlate their counterparty exposures as well (Elliott et al Elliott, Georg, and Hazell  \citeyearpar{elliottgh2018}; Erol and Vohra \citeyearpar{erolv2022}). The equilibrium network then has a clique structure, where banks in the same clique are highly connected to each other. If anything, this should only strengthen the incentive to correlate outside investments with other clique members.

\paragraph{Risk Aversion}

Although we have worked with risk-neutral banks to keep the analysis uncluttered and to emphasize the impact of financial interdependencies, it should be apparent that some incentive to correlate portfolios persists when investors are risk averse. An equilibrium in which all banks choose the same portfolio exists whenever the inequality in Proposition \ref{correlation2} holds, but with each $V_i^+$ transformed by some concave function $u_i$. This transformation makes the inequality harder to satisfy, but it still holds, for instance, if a bank cannot remain solvent when all banks but itself get a low return. In that case, the RHS is zero while the LHS is positive, irrespective of risk aversion.

\subsubsection{Uniqueness of the Full Correlation Equilibrium}

Proposition \ref{correlation3} provides insight into two of the most natural types of correlation, but does not address all possible correlation structures.
Generally, banks have incentives to line up their portfolio realizations, so as
to maximize their expected return from interbank assets. Nonetheless,
one does need stronger conditions to ensure that all banks want to perfectly correlate their portfolios in {\sl all} equilibria - such a strong form of uniqueness requires ruling out equilibria in which the network is partitioned into subsets of banks that correlate their portfolios within each subset, but choose (partially) uncorrelated portfolios across subsets.
As we show next, under stronger conditions, perfect correlation of portfolios is the unique equilibrium.

\begin{proposition}
\label{correlation3}
If the value from a high portfolio return  is
increasing in the number of high portfolio returns among other banks---i.e.,
\[V_i^+(\mathbf{q}_{i}\mathbf{p}=\overline{R}, \mathbf{q}_{-i}\mathbf{p}) -V^+_i(\mathbf{q}_{i}\mathbf{p}=\underline{R}, \mathbf{q}_{-i}\mathbf{p}) > V^+_i(\mathbf{q}_{i}\mathbf{p}=\overline{R}, \mathbf{q}_{-i}\mathbf{p}') -V^+_i(\mathbf{q}_{i}\mathbf{p}=\underline{R}, \mathbf{q}_{-i}\mathbf{p}')\]
 for each $i$, $\mathbf{q}_{-i}\mathbf{p}$, $\mathbf{q}_{-i}\mathbf{p}'$
 such that $|\{j\neq i: \mathbf{q}_{j}\mathbf{p}=\overline{R}\}|>|\{j\neq i:  \mathbf{q}_{j}\mathbf{p}'=\overline{R}\}|$, then \emph{any} equilibrium must feature perfect correlation of portfolios across banks.
\end{proposition}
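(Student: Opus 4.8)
The plan is to argue by contradiction: suppose there is an equilibrium $\mathbf{q}^*$ that does \emph{not} feature perfect correlation, i.e., not all banks invest in the same asset. Since each bank invests in exactly one asset, the partition of banks by their chosen asset has at least two nonempty blocks. The key observation is that the hypothesized strict single-crossing / increasing-differences condition says that Bank $i$'s marginal gain from a high own return is strictly larger the more of the \emph{other} banks have high returns. I would translate this into a statement about expected payoffs: if I fix Bank $i$ and let $A$ be the (random) set of other banks realizing $\overline{R}$, then the quantity $\mathbb{E}\big[V_i^+(\overline R,\mathbf{q}_{-i}\mathbf{p}) - V_i^+(\underline R,\mathbf{q}_{-i}\mathbf{p}) \,\big|\, |A|=m\big]$ is strictly increasing in $m$. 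Hence Bank $i$'s incentive to have its own asset pay off is strictly stronger in states where more other banks' assets pay off, which pushes $i$ toward choosing an asset whose payoff is maximally aligned with the realizations of the largest possible set of other banks.

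The core combinatorial step is then: take any candidate equilibrium profile, pick a Bank $i$ and consider deviating to the asset $k^*$ chosen by the \emph{largest} block of other banks (if $i$ is itself in a non-maximal block). I would compute the change in $i$'s expected equity value from switching from its current asset to $k^*$. Using independence of asset returns across the distinct assets and the single-crossing condition, the deviation strictly increases $\mathbb{E}[V_i^+]$: by moving to $k^*$, in every state the number of other high-return banks that coincide with $i$'s own high-return event stochastically increases (in the sense of first-order stochastic dominance on $|A|$ conditional on $i$ being high), and by the strict increasing-differences hypothesis this strictly raises $i$'s expected payoff. More carefully, I would couple the two scenarios (original asset versus $k^*$) on the same probability space by only resampling the return of $i$'s asset and of $k^*$; conditioning on the common returns of all other assets, a straightforward term-by-term comparison using the displayed inequality gives a strict improvement whenever the block of $k^*$ is strictly larger than $i$'s own block. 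This contradicts the assumption that $\mathbf{q}^*$ is an equilibrium, so every block must in fact be the whole set $\{1,\dots,n\}$, i.e., perfect correlation.

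There is one subtlety to handle: the deviation argument as stated shows a bank in a strictly non-maximal block has a profitable deviation, but one must also rule out equilibria in which all blocks have equal size (a "tie"). For that case I would note that moving from a block of size $s$ to another block of size $s$ is payoff-neutral at best by the weak part of the inequality, but the \emph{strict} increasing-differences hypothesis can still be leveraged: if there are at least two blocks, merging into one of them strictly increases, for each of the banks that switch, the conditional count of aligned high-return peers relative to staying put — because after the move that bank is correlated with more peers in the "both high" event than it was before. Concretely, I would argue that in any non-degenerate partition some bank can strictly increase the size of the set of peers whose high-return event coincides with its own by relocating, and strictness of the hypothesis then yields a strict payoff gain, again contradicting equilibrium. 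I expect this tie-breaking / partition-refinement bookkeeping to be the main obstacle: one has to be careful that the "coupling" correctly isolates the change in the number of aligned peers and that the strict inequality in the hypothesis is actually activated (i.e., that the relevant conditional distributions of $|A|$ genuinely differ after the deviation), which requires that at least one other bank's realization can differ between the two regimes — guaranteed precisely because the target block is nonempty and distinct from $i$'s.
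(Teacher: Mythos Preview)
Your approach is essentially the same as the paper's: both argue by contradiction, pick a bank, have it deviate to the asset chosen by the largest block, and use the strict increasing-differences hypothesis together with a coupling over the remaining assets' realizations to exhibit a strictly profitable deviation. The one place where the paper is cleaner is your ``tie'' worry. The paper selects $i$ from the \emph{smallest} nonempty block $N_{\underline{k}}$ and has it deviate to the largest block $N_{\overline{k}}$: the number of \emph{other} banks aligned with $i$'s high-return event is $|N_{\underline{k}}|-1$ before the deviation and $|N_{\overline{k}}|$ after, and $|N_{\overline{k}}|\geq |N_{\underline{k}}|>|N_{\underline{k}}|-1$ holds automatically---even when all blocks have equal size. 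So the strict inequality in the hypothesis is always activated in the relevant coupled states, and no separate tie-breaking case analysis is needed. Your eventual observation that ``some bank can strictly increase the size of the set of peers whose high-return event coincides with its own by relocating'' is exactly this point; you just did not need to treat it as a separate obstacle.
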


A sufficient condition for perfect correlation to be the unique equilibrium outcome
is that the marginal gain in equity value from a high realization of one's own portfolio is strictly
increasing in the number of other banks that also have a high portfolio realization. This requires that no bank would
prefer to correlate with some particular set of counterparties as opposed to some other larger set of banks.

For example, consider a network in which each bank owes $D>0$ to every other bank and $D_{0}\geq \underline{R}$ to outside depositors. Furthermore, let $\overline{R}>D_i^L=(n-1)D+D_0$ for each $i$, such that contagion is precluded.   In that case, the equity value of a bank with low return is always zero, while the value of a high return is strictly increasing in the number of other banks with high returns, thus satisfying the above condition.\footnote{The condition in Proposition \ref{correlation3} also holds in settings in which there is contagion if sufficiently many banks fail together, but such contagion is stopped by a regulator via bailouts. If such bailouts are anticipated and if solvent banks expect greater partial payments from insolvent banks when fewer banks are insolvent, then perfect correlation of investments is also the unique equilibrium.  Note that, with such bailouts, the reasons for wanting to correlate are different than those noted by \cite{acharya2007}, where banks correlate portfolios to induce bailouts.  Here, in the condition just mentioned, banks would expect bailouts and wish to correlate portfolios in order to be solvent when more banks make full payments on debts as they then enjoy higher returns.}

Without such a condition, there can exist other equilibria in which there is partial correlation of portfolios. 
Note that even when there exist other equilibria with partial correlation,
all banks get their highest possible payoff in a full correlation equilibrium.
Hence if they can jointly deviate and are not getting their highest possible expected payoff, then they would prefer to move to a perfect correlation equilibrium.

\subsubsection{The Inefficiency of Full Correlation}

In terms of efficiency,
maximizing the total value of all private investors in the economy is equivalent to minimizing the expected number of defaults.
Indeed, the correlation structure of investments across banks does not change the expected aggregate portfolio value $\sum_i \mathbf{q}_{i}\mathbf{p} =  N(\theta\overline{R}+ (1-\theta)\underline{R})$,  but it does impact the set of defaulting banks and hence the amount of bankruptcy costs incurred. Correlated investments across banks are then socially efficient if and only if they induce a lower expected number of defaults than some other configuration.
This holds if any bank that gets a low return always becomes insolvent irrespective of what else happens to other portfolios: independent investments do not attenuate systemic risk since high returns from some banks can never prevent another from defaulting. In that extreme case, correlated investments are socially efficient, and banks' incentives are aligned with that of the social planner.
However, as soon as correlation worsens contagion risk, the equilibrium is generally not socially optimal.
This will be true in many cases of interest, such as when a bank that gets a low return can
maintain its solvency by receiving its debt payments from its counterparties. Thus, in most cases of interest the decentralized equilibrium is socially inefficient.

\section{A Network Approach to Measuring and Regulating Risk-Taking}

Network externalities lead banks to take on too much risk and correlate the realizations of those risks, as they generally do not internalize how their investment decisions affect other organizations in the system. We now explore more deeply what these network externalities imply for optimal regulation.\footnote{In what follows, we suppose that the regulator knows what the network looks like. For an analysis of regulation in the face of uncertainty about banks' positions and network structure, see Ramirez \citeyearpar{ramirez2019}.}
We focus on the optimal regulation of the intensive margin, as it involves a real tradeoff between higher expected returns on investments and greater systemic risk. 
Banks' incentives to overly correlate their investments can be resolved directly by restricting the overlap between their portfolios.\footnote{Within our framework, the correlation of banks' investments only enters social welfare through its effect on expected insolvencies. Optimal regulation should then require whichever correlation level minimizes expected insolvencies. However, this overlooks the (unmodeled) economic benefits that arise from having multiple banks invest in the same assets (e.g., large investments that cannot be funded by just one bank). A fuller treatment of the correlation structure of banks' portfolios is left for further research, as it requires modeling these benefits.}

In general, characterizing optimal regulation is a complex problem as the value of regulating a particular bank's investments depends on how other banks are regulated. It is then necessary to consider all different subsets of banks to regulate, and compare the overall societal value in each case. Given the complexity of this problem, we first focus on characterizing optimal regulation in two classes of core-periphery networks. We then provide a broader framework and insights into optimal regulation for general networks.

\subsection{Regulation with Correlated Investments and Core-Periphery Networks}\label{sec:coreperiphery}

Core-periphery networks are empirically relevant network structures,\footnote{For example, the derivatives network between the large U.S. bank holding companies features a core-periphery structure (D'Erasmo, Erol and Ordonez \citeyearpar{d2022regulating}).} and they highlight how regulation can be asymmetric.
Importantly, we allow for correlation between returns to investment opportunities across banks, as this turns out to be a key consideration in
the optimal regulation.

\subsubsection{Symmetric Core-Periphery Networks}\label{core}

Consider a core-periphery network consisting of $n_c$ core banks and $n_p$ peripheral banks, where $n_p$ is a
multiple of $n_c$.   Core banks form a clique: they all have a debt claim and a debt liability of $D$ on each other.
Each core bank is linked to $n_p/n_c$ peripheral banks, to which it owes a total of $D_0$, and each peripheral bank is linked to just one core bank.
Peripheral banks have no additional investments: they just act as intermediaries between core banks and depositors, such that $D_{0i}=D_i^L=D_i^A$ for each peripheral bank $i$. Intuitively, this is a setting where regional banks collect deposits, send those along to core banks, which make the investments, and then return an amount back through the peripheral banks to depositors.  The results directly extend to having more (or no) peripheral banks.
See Figure \ref{fig:core_peri} for an illustration with $n_c=n_p=4$.
\begin{figure}[!h]
\centering
\includegraphics[width=0.6\textwidth]{core_peri_example.tikz}
\caption{\small
Arrows point in the direction that debts are owed. There are four core banks (dark/red) linked together via debt claims of $D$ on one another. There are four peripheral banks (light/blue), each having a debt claim of $D_0$ on a single core bank and a liability of $D_0$ to depositors. Peripheral banks act as intermediaries between depositors and core banks. }\label{fig:core_peri}
\end{figure}

Let $a=0$ and $\chi>0$,  which is large enough so that no payments are made by a defaulting core bank---alternatively, payments are delayed enough so that creditors become insolvent as well.
This cost includes the costs of the bankruptcy of its associated peripheral banks.
Importantly, this cost is incurred by someone even if the bank has limited liability, and so is a deadweight loss. In what follows, we focus on the best-case equilibrium for bank values, but the analysis easily extends to other equilibria.

Core banks can either invest in a risk-free asset or in their own proprietary asset. That is, there are $K=n_C+1$ assets and $K_i = \{i,n_C+1\}$, where asset $k=i$ is $i$'s proprietary asset and asset $k=n_C+1$ is the risk-free asset. Each bank's risky asset  yields $p_i=R$ with probability $\theta<1$ and zero otherwise, with the maintained assumption that $\theta R>1+r$ so that the risky asset is not dominated by the risk-free asset.

Given the multi-dimensional return vector, we examine a natural parameterization in which we can clearly order joint-distress probabilities.
In particular, risky investments are correlated across banks as follows:
\[
\begin{cases}
\text{all banks get $p_i=R$ with probability }1-\frac{n_c(1-\theta)}{m},\\\text{$m$ banks get $p_i=0$ and $n_c-m$ get $p_i=R$ with probability }\frac{n_c(1-\theta)}{m},
\end{cases}
\]
 where $\theta\geq 1- \frac{m}{n_c}$ ensures that these are well-defined.  Moreover, the $m$ banks that get the 0 realization at the same time are chosen uniformly at random,
 so that every combination of core banks getting 0 realizations is equally likely.

 Thus, either all banks get a high investment realization, or $m$ randomly picked core banks get zero. Since the overall probability of getting a low realization for each bank is fixed, the larger the set of banks that get zero together ($m$), the smaller the probability of that event $n_c(1-\theta)/m$.
 So there is a trade-off between having fewer banks getting a zero return at once, but more often, or having many banks getting zero return together but with smaller probability. In this specification, $m$ captures the correlation in returns to risky investments across banks: in the extreme case where $m=n_c$, then all core banks effectively have the same investment, and either all banks get a high return or all banks get a zero gross return. 

Here we take the level of correlation $m$ as exogenous. However, the insights from Section \ref{corrdom} hold in this setting: core banks have a strong incentive to correlate their risky investments. Thus, one can think of $m$ as the maximal level of correlation banks can achieve, which is then the one they choose.

We presume that $R< D_0+(n_c-1)D = D_i^L$, so that a bank needs at least some of its debt to be paid back to be solvent, even if its risky asset pays off.
This ensures that there is some probability of contagion, as otherwise the joint regulation problem is uninteresting.

Let $k^R$ be the maximum number of defaults that a core bank can sustain when it gets a high investment realization and still remain solvent:
 \[R+(n_c-1-k)D\geq (n_c-1)D+D_0\iff k \leq k^R\equiv \lfloor\frac{R-D_0}{D}\rfloor.\]

Absent regulatory intervention, all banks invest fully in the risky asset.\footnote{See Appendix A for a formal proof.}  They then remain solvent when they all get $p_i=R$. However, if $m>k^R$, they all default in the state of the world in which $m$ of them get zero realizations.

\begin{remark}
Without any regulation, if $n_c-2\geq k^R\geq (1-\theta)n_c$,\footnote{This ensures that it is possible to have $m \leq k^R$ (the second inequality from the assumption that $\theta\geq 1- \frac{m}{n_c}$), and have a region in which there is a decreasing  expected number of defaults (the first inequality).}  then the expected number of defaults is non-monotonic in the amount of correlation $m$, and reaches a maximum at  $m=k^R+1$.
\end{remark}

When the correlation of investments is sufficiently small (i.e.,  $m\leq k^R$),
then the number of banks who get zero realizations and default is never large enough to cause those who get a high realization
to default. Then the expected number of defaults simply equals the expected number of banks who get a zero realization; that is, $(1-\theta)n_c$.
At the other extreme where $m=n_c$ there is perfect correlation in banks' investment realizations and so again there is no contagion.
The contagion occurs in the intermediate region in which $ k^R< m< n_c$ and then decreases in $m$, as pictured in Figure \ref{fig:defaults}.
When banks' assets are  correlated enough, sufficiently many banks get zero return at the same time to trigger a cascade leading all banks to default.  This happens with a probability that is decreasing in $m$, so systemic risk is highest when $m$ is just above the threshold $k^R$: investments are correlated enough to trigger a default cascade if they fail, but not so correlated that the likelihood of such an event is small.  This impacts the optimal regulation.

\begin{figure}[!h]
\begin{center}
\begin{tikzpicture}[scale=1]
\draw[->, thick] (-0.1,0)--(6,0) node[right, below=0.5em]{$m$};
\draw[->, thick] (0,-0.1)--(0,3) node[above, left]{Expected  \#};
\node at (-1.3,2.5) {of defaults};
\foreach \Point/\PointLabel in {(0,1), (0.5,1), (1.5,1), (1,1), (2,1), (2.5,2.5), (3,2.2), (3.5,1.9), (4,1.6), (4.5, 1.3), (5,1) }
\draw[cyan!50!black, fill=cyan!50!black] \Point circle (0.05);
\draw[thick] (2,-0.1)--(2,0.1) node[below=0.5em]{$k^R$}; \draw[thick] (5,-0.1)--(5,0.1) node[below=0.5em]{$n_c$};
  \end{tikzpicture}
  \end{center}
  \caption{\small
Expected number of defaults as a function of the correlation of investments $m$. }\label{fig:defaults}
\end{figure}

We break the analysis into two cases.  The first case is when the correlation between investments is low enough such that there is no possibility of contagion:  that is, $m\leq k^R$.
The second case is one in which it is possible that sufficiently many core banks get no return on their risky investments so that, if unregulated,
their defaults would lead other core banks to also become insolvent, even if they had high returns.

The first case is covered in Proposition \ref{symmetric}. We first consider only one type of regulatory intervention: macroprudential policy that restricts banks' investments in the risky asset.  We discuss the use of bailouts further below.

\begin{proposition}
\label{symmetric}
Suppose that $m\leq k^R$. Regulation is optimal if and only if\footnote{This corresponds to $\chi> \frac{D_0[R\theta-(1+r)]}{(1+r)(1-\theta)}$ and so holds for higher levels of bankruptcy costs.}
\[\theta< \frac{\chi+D_0}{\chi+\frac{R}{1+r}D_0}.\] When some regulation is optimal, it is symmetric and has all core banks invest $\frac{D_0}{1+r}$ in the risk-free asset and the remainder in the risky asset.
\end{proposition}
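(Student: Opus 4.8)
The plan is to reduce the (a priori infinite-dimensional) choice of regulation to a one-dimensional problem indexed by the number of banks whose portfolios are restricted, and then to solve that discrete problem. First set up the welfare accounting: because interbank payments are transfers that net out, welfare under any policy is $\sum_i\mathbb{E}[\mathbf{q}_i\mathbf{p}]-\chi\,\mathbb{E}[\#\{\text{defaulting core banks}\}]$, where each core default also carries the deadweight loss of its peripheral banks and (with $a=0$ and $\chi$ large) a defaulting core bank pays nothing to its creditors. The first term is pinned down by how much each core bank places in the risky asset, insensitive to correlation and equilibrium selection; only the expected count of core defaults depends on the regulation. Then I would argue it is without loss of optimality to restrict attention to policies under which each core bank is either left free ($q_i=1$) or forced to hold exactly $s^\star:=D_0/(1+r)$ in the safe asset: a bank holding strictly less than $s^\star$ safely has assets below $D_i^L$ whenever its risky draw is low (its counterparties can deliver at most $D_i^A=(n_c-1)D$), so it defaults in exactly the same states as a fully risky bank and, since $\chi$ is large, pays nothing either way, yet earns strictly less because $\theta R>1+r$; and a bank holding strictly more than $s^\star$ merely forgoes return for no benefit. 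By symmetry of the network and of the correlation specification, the outcome of any such policy depends only on the number $\ell\in\{0,\dots,n_c\}$ of restricted banks.

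Next, count defaults under an $\ell$-restriction. In the ``all high'' state no bank fails. In the event that $m$ uniformly chosen core banks draw the low return: if all $m$ of them are among the restricted banks (probability $\binom{\ell}{m}/\binom{n_c}{m}$), the greatest fixed point has every bank solvent, each low-return restricted bank having value exactly $0$ (it gets $D_0$ from its safe holding and $(n_c-1)D$ from its solvent counterparties); otherwise at least one free bank is hit, defaults, and pays nothing, which leaves every restricted low-return bank below $D_i^L$ even with full payment from its remaining counterparties, so all $m$ low-return core banks default. Hence $\mathbb{E}[\#\text{defaults}]\ge n_c(1-\theta)(1-\binom{\ell}{m}/\binom{n_c}{m})$, with equality at $\ell=0$ (laissez-faire: exactly the $m$ low-return banks fail, since the $n_c-m$ high-return banks survive as $m\le k^R$) and at $\ell=n_c$ (no free bank is ever hit, so no default). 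Together with the per-restricted-bank loss in expected return $D_0\frac{\theta R-(1+r)}{1+r}$, this gives $W(\ell)\le\widetilde{W}(\ell):=n_c\theta R-\ell D_0\frac{\theta R-(1+r)}{1+r}-\chi n_c(1-\theta)(1-\binom{\ell}{m}/\binom{n_c}{m})$, with equality at $\ell\in\{0,n_c\}$.

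The crux is then to show $\max_{0\le\ell\le n_c}\widetilde{W}(\ell)=\max\{\widetilde{W}(0),\widetilde{W}(n_c)\}$. On $\{0,\dots,m-1\}$ the binomial term vanishes, so $\widetilde{W}$ is affine and strictly decreasing; on $\{m,\dots,n_c\}$ the map $\ell\mapsto\binom{\ell}{m}$ is convex, so $\widetilde{W}$ is affine-plus-convex and maximized at an endpoint of that block. The one remaining value $\ell=m$ cannot strictly beat both extremes: writing out $\widetilde{W}(m)-\widetilde{W}(0)$ and $\widetilde{W}(m)-\widetilde{W}(n_c)$, simultaneous positivity would require $m\binom{n_c}{m}<n_c$, which fails for every $1\le m\le n_c-1$ (and here $m\le k^R\le n_c-2$) since $\binom{n_c}{m}\ge n_c$ on that range. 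Hence the optimum is laissez-faire ($\ell=0$) or the symmetric restriction $s_i=D_0/(1+r)$ for all $i$ ($\ell=n_c$). Finally $\widetilde{W}(0)=n_c\theta R-\chi n_c(1-\theta)$ and $\widetilde{W}(n_c)=n_c[(1-\tfrac{D_0}{1+r})\theta R+D_0]$, and $\widetilde{W}(n_c)>\widetilde{W}(0)$ is equivalent to $\chi(1-\theta)>\frac{D_0[\theta R-(1+r)]}{1+r}$, which rearranges to $\theta<\frac{\chi+D_0}{\chi+\frac{R}{1+r}D_0}$; so regulation strictly improves on laissez-faire exactly when this inequality holds, and when it does the unique optimum is the stated symmetric policy.

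The main obstacle is the combination of the equilibrium bookkeeping in the second step — verifying that in the \emph{best} equilibrium a single free bank drawing the low return still triggers failure of all $m$ low-return core banks, so partial restriction buys no protection unless it covers every bank that could be hit — with the discrete optimization in the third step, where the elementary inequality $m\binom{n_c}{m}\ge n_c$ is precisely what rules out an interior (asymmetric) optimum in this low-correlation regime.
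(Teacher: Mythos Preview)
Your reduction to binary policies has a real gap. The claim that ``a bank holding strictly more than $s^\star$ merely forgoes return for no benefit'' is false as soon as at least one core bank is left free. If $\ell<n_c$ banks are restricted at exactly $s^\star=D_0/(1+r)$ and a free bank draws $p=0$, that free bank defaults and (since $\chi$ is large) pays nothing; a restricted low-return bank then has assets $D_0+(n_c-2)D<D_i^L$ and defaults too --- which is exactly what your own default count uses two paragraphs later. But this means a restriction $s>s^\star$, say $s=(D_0+jD)/(1+r)$, \emph{does} buy something: it keeps the regulated low-return bank solvent whenever at most $j$ counterparties default. So your $\widetilde W(\ell)$ is not an upper bound on the welfare achievable by restricting $\ell$ banks; it is only the welfare of restricting $\ell$ banks \emph{at the particular level} $s^\star$. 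Your convexity/endpoint argument therefore rules out interior $\ell$ at that one level, but says nothing about partial regulation at stricter levels, which is precisely the relevant alternative.

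The paper's proof handles this differently. It observes that, with $X$ banks left unregulated, the minimal restriction that actually prevents a regulated low-return bank from being dragged down is $\bar q_X=1-(D_0+XD)/(1+r)$, not $\bar q=1-D_0/(1+r)$. It then compares this $X$-policy to laissez-faire and to full symmetric regulation at $\bar q$: because the required reserve $D_0+XD$ grows with $X$ while expected bankruptcy costs are $X(1-\theta)\chi$, whenever the $X$-policy beats laissez-faire the full-regulation policy beats it by more. Your discrete optimization over $\ell$ with fixed $s^\star$ and the inequality $m\binom{n_c}{m}\ge n_c$ are correct as far as they go, but to close the argument you need either to run it over the two-parameter family $(\ell,s)$ or to adopt the paper's observation that the \emph{effective} restriction under partial regulation is $s_X>s^\star$, which makes partial regulation strictly more costly per regulated bank than full regulation and forces the optimum to an endpoint.
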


Proposition \ref{symmetric} is straightforward.  First, note that peripheral banks are only intermediaries, and so regulation concerns core banks.
 When $m\leq k^R$, a core bank that gets a high investment realization never defaults and so there is no contagion inside the core.  The only contagion comes from a core bank's default leading to failure of payments to its peripheral banks. Thus, optimal regulation is to regulate each core bank independently -- as if it were a stand-alone bank just with its peripheral banks and indirect depositors. Optimal regulation is then necessarily symmetric: it treats all core banks in the same way, and regulates them if and only if the risky asset is ``too'' risky; i.e., the probability $\theta$ that it pays off is too low.

The more complex, and potentially asymmetric, regulation arises when there is sufficient correlation among core banks' investments.
This is characterized in Proposition \ref{asymmetric}.

Let $k^r\equiv \lfloor \frac{1+r-D_0}{D}\rfloor $ be the maximum number of defaults that a core bank can sustain when it is investing fully in the risk-free asset.

\begin{proposition}
\label{asymmetric}
If $m>k^R$ and $k^r\geq 1$, then regulation is optimal if
\begin{equation}
\label{thetahigh}
\theta< \overline{\theta}\equiv \frac{\frac{n_c}{m}\chi+D_0}{\frac{n_c}{m}\chi+\frac{R}{1+r}D_0}.
\end{equation}
If regulation is optimal, then either (i) it is symmetric and has all core banks invest $\frac{D_0}{1+r}$ in the risk-free asset and the remainder in the risky asset,
or (ii) it is asymmetric.
In particular, if $ \left(n_c -k^r\right)(D_0+k^rD)<n_c$ and
\begin{equation}
\label{thetalow}
\theta > \underline{\theta} \equiv  \frac{ n_c-\left(n_c-k^r\right)(D_0+k^rD)+k^r\chi}{\left[n_c-\left(n_c-k^r\right)(D_0+k^rD)\right]\frac{R}{1+r} + k^r\chi},
\end{equation}
then optimal regulation is necessarily asymmetric.  That is,
some core banks face different limits on risky investments than other core banks
even though all core banks are ex ante identical in all ways.
\end{proposition}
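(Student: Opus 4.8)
The plan is to reduce the regulator's problem to choosing, for each core bank $i$, a minimum risk-free holding $c_i\in[0,1]$ (its \emph{buffer}), and to evaluate social welfare as the expected total portfolio return of the core banks minus $\chi$ times the expected number of core defaults (the peripheral bankruptcy costs being folded into $\chi$, and the peripheral banks contributing nothing to output). The first observation is that, under any buffer profile, each core bank best-responds by investing the maximum permitted amount $1-c_i$ in its risky asset: in any state where drawing a low return would make the bank default, it is already insolvent even if it receives full repayments (since $R<D_i^L$, and more generally $c_i(1+r)+(n_c-1)D<D_i^L$ for any $c_i$ below the buffer that renders the bank unconditionally solvent), so extra risk costs its equity nothing in those states, while it strictly raises equity in states where the bank stays solvent because $\theta R>1+r$. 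Thus the regulator effectively selects the realized portfolios, and laissez-faire is $c_i\equiv 0$.

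Next I would pin down the best-case equilibrium of bank values under a given buffer profile. Using that the core is a clique and $R<D_i^L$: a core bank drawing $p_i=0$ defaults unless its buffer alone covers its net shortfall ($c_i(1+r)\geq D_0$), and a core bank with buffer $c_i$ survives iff the number of its defaulting counterparties is at most $\lfloor(c_i(1+r)+(1-c_i)p_i-D_0)/D\rfloor$. Because $m>k^R$, laissez-faire produces full contagion in the bad state (all $m$ low-draw banks default, which pushes even the high-draw banks past $k^R$), so expected defaults are $\tfrac{n_c(1-\theta)}{m}\cdot n_c$. The symmetric buffer $c_i\equiv D_0/(1+r)$ instead lets every low-draw bank be exactly solvent provided no one else defaults, and the best equilibrium is then zero defaults. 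More generally, leaving $j$ core banks unregulated (hence fully risky) while giving the other $n_c-j$ a buffer $(jD+D_0)/(1+r)$ yields, in the best equilibrium, exactly the low-draw unregulated banks defaulting whenever $j\leq k^r$ (the regulated banks then absorb up to $j\leq k^r$ defaults, and the high-draw unregulated ones withstand $k^R\geq k^r$), so expected defaults are $j(1-\theta)$.

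For Part 1, plugging into welfare: the symmetric-buffer policy gives $n_c\!\left[(1-\tfrac{D_0}{1+r})\theta R+D_0\right]$ while laissez-faire gives $n_c\theta R-\tfrac{n_c^2(1-\theta)\chi}{m}$, and a direct rearrangement shows the former exceeds the latter exactly when $\theta<\overline\theta$, so some regulation is optimal there. For Part 2, among \emph{symmetric} policies any buffer in $(0,D_0/(1+r))$ still produces full contagion but with strictly lower output than $c=0$, and any buffer above $D_0/(1+r)$ gives zero contagion with strictly lower output than $c=D_0/(1+r)$; hence the best symmetric policy is either laissez-faire or the buffer $D_0/(1+r)$, so when $\theta<\overline\theta$ the optimum is a genuine regulation that, if symmetric, must be exactly the $D_0/(1+r)$ policy of case (i), and otherwise is asymmetric. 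For Part 3, take the asymmetric policy above with $j=k^r$ (feasible since $k^r\geq 1$, $k^rD+D_0\leq 1+r$, and $m>k^R\geq k^r$): comparing its welfare, $k^r\theta R+(n_c-k^r)\!\left[(1-\tfrac{k^rD+D_0}{1+r})\theta R+k^rD+D_0\right]-k^r(1-\theta)\chi$, with that of the symmetric buffer policy and rearranging, the asymmetric policy is strictly better precisely when $\theta>\underline\theta$, the hypothesis $(n_c-k^r)(D_0+k^rD)<n_c$ being exactly what makes $\underline\theta$ a well-defined threshold in $(0,1)$. Since it then beats \emph{every} symmetric policy (and laissez-faire), Part 2 forces the optimum to be asymmetric.

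The main obstacle is the equilibrium step: carefully showing that, for an arbitrary buffer profile, the best equilibrium is the ``no-cascade'' one whenever the regulated banks can collectively absorb the low-draw defaults, and that partial-contagion configurations never let an \emph{asymmetric} policy outperform both the symmetric $D_0/(1+r)$ policy and laissez-faire without being dominated by one of these clean policies, so that for ``necessarily asymmetric'' it suffices to exhibit the single $j=k^r$ policy and check implementability (the $k^r$ unregulated banks do choose full risk, a low-draw one being insolvent even with full repayments; the regulated banks do choose the minimum buffer, a larger one being pure forgone return). The remaining algebra — deriving $\overline\theta$ and $\underline\theta$, and verifying the domination rankings among symmetric buffers — is routine once the equilibrium map is fixed.
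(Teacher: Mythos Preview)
Your proposal is correct and follows essentially the same route as the paper's own proof: you compare the symmetric buffer $D_0/(1+r)$ to laissez-faire to obtain $\overline{\theta}$; you argue that any nontrivial symmetric buffer must equal $D_0/(1+r)$ (smaller buffers still yield full contagion since $m>k^R$ forces even high-draw banks below threshold, larger buffers waste return); and you exhibit the same asymmetric policy---$k^r$ unregulated banks with the remaining $n_c-k^r$ buffered at $(D_0+k^rD)/(1+r)$---and compare its welfare to the symmetric one to obtain $\underline{\theta}$. The welfare expressions match the paper's up to the constant $n_cD_0$ (you do not net out depositor payments, the paper does), which is irrelevant for the comparisons. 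Your best-response step is stated a bit loosely (the paper proves it separately and handles the cases $m\le k^R$ and $m>k^R$, tracking the number $X$ of defaulting counterparties), but the logic is right: in the high-draw state shifting to risk only raises portfolio value, and in the low-draw state the bank is already past its solvency point for any buffer below $D_0/(1+r)$.
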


Note that (\ref{thetahigh}) and (\ref{thetalow}) together correspond to\footnote{We also assume throughout this analysis that bankruptcy costs are large enough so that a defaulting bank pays back none of its debt. This requires $\chi\geq R+(n_c - 2 - k^R)D$ since a defaulting bank can have assets of value up to $R+(n_c - 2 - k^R)D$. Note that these conditions are not mutually exclusive---the ones in Proposition \ref{asymmetric} depend on $\theta$ while this one does not. }
$$\left(n_c-\left(n_c-k^r\right)(D_0+k^rD)\right) \frac{R\theta-(1+r)}{(1+r)(1-\theta)} > \chi>\left(\frac{m}{n_c}\right) \frac{D_0[R\theta-(1+r)]}{(1+r)(1-\theta)}. $$

When $m>k^R$, banks that get a high return $p_i=R$ become insolvent when $m$ of their counterparties get zero realizations and default on their payments. The possibility of such default cascades makes regulatory intervention more likely to be optimal.  That is,  $ \overline{\theta}$ is larger than the corresponding term in Proposition \ref{symmetric}.

Most importantly,
optimal regulation can often (i.e., for an open set of parameters) be asymmetric: when the probability that the risky asset pays off is neither too high nor too low ($\underline{\theta}<\theta<\overline{\theta}$), the regulator should restrict the investments of only a subset of core banks so as to reduce the likelihood of a default cascade, while still allowing some banks to benefit from the higher returns of the risky asset.
This is true even though the network and core banks are all fully identical ex ante.
The intuition is subtle, as one could imagine that allowing an intermediate investment level for all banks would be optimal, rather than asymmetric regulation.
The idea is that if instead one tried to impose symmetric regulation, then (given that $k^r\geq 1$) regardless of the level of that regulation, any core bank getting positive risky returns could withstand some
number of defaults (at least $k^r$) without having any contagion.  Then allowing $k^r$ banks to invest freely ends up increasing their expected returns, without increasing the risk of contagion.
Roughly, the level of total expected return has some convexity in the level of regulation, making it better to split the regulation to high and low levels rather than having all banks at an
intermediate level.  It is multidimensional, so that intuition is rough, but captures some aspects of the proof.

We emphasize that Proposition \ref{asymmetric} does not rely on indivisibilities in the network. Similar results would arise in a model with a continuum of core banks, where $m$ represents the share of core banks that get a zero return and $k^R$ the share of defaults a bank can sustain when it gets a high return. Asymmetric regulation that only imposes restrictions on some fraction of core banks would outperform symmetric regulation under the same conditions as in Proposition \ref{asymmetric}.

\begin{example}
We illustrate the intuition behind asymmetric regulation with the following example. There are $n_c=3$ core banks, and parameters are as follows: $m=2$, $k^R=k^r=1$. That is, the return distribution of risky assets is such that $m=2$ banks might get a zero return at the same time, but any bank with positive return can only sustain the default of one of its counterparties. If all banks invest in the risky asset, there is thus a risk of contagion: with probability $(3/2)(1-\theta)$, \emph{all} banks default, including whichever bank got a high return. The best symmetric regulation that prevents contagion requires each bank to invest $D_0/(1+r)$ in the safe asset. Such symmetric regulation improves over laissez-faire whenever
\[3\left[\theta R\left(1-\frac{D_0}{1+r}\right)+D_0\right]>3\left[\theta R - \frac{3}{2}(1-\theta)\chi\right]\iff \frac{D_0}{1+r}[\theta R - (1+r)]<\frac{3}{2}(1-\theta)\chi.\]
How can asymmetric regulation improve over symmetric regulation? Note that contagion is prevented if at most one bank gets a zero return and defaults at once. A natural policy candidate is then to ensure the solvency of two banks, by having them invest only in the safe asset, while allowing the third bank to invest fully in the risky asset. That third bank will sometimes default, but its default will not bring others to insolvency, and thus will not spread through the network. Total surplus associated with this asymmetric regulation is then
\[2(1+r) + \theta R - (1-\theta)\chi.\]
This improves over the above symmetric regulation whenever
\[(1-\theta)\chi< \left[3\frac{D_0}{1+r}-2\right][\theta R - (1+r)],\]
that is, whenever bankruptcy costs are small enough that they are worth risking absent contagion. \qed

\end{example}

If bailouts are possible at a cost $c$, then they are always optimal ex post whenever $c<\chi$. In that case, bailout costs simply replace bankruptcy costs. The more interesting case is when $c>\chi$, that is, when bailing out a bank is not worth it if it only prevents its own default.  Bailouts are then never optimal when correlation across banks' risky investments is low ($m\leq k^R$), since a bank defaulting never triggers another's default. Bailouts can however be optimal when the correlation across banks' investments is high ($m>k^R$). They are then necessarily asymmetric: the regulator only bails out sufficiently many banks so as to prevent a default cascade, and lets the others default.

The above analysis makes it clear that optimal regulation very much depends on how correlated banks' investments are. We however restricted banks' portfolio choices to either invest in a risky or a risk-free asset, and did not examine banks' options of correlations.
When they do have such options, one can still regulate them
simply via portfolio restrictions that then guarantee certain levels of solvency by limiting exposures even in the face of large correlations, as described above.
However,  in such cases, there can be improvements from controlling correlation without forcing less risky investments: banks can be allowed to bear greater risks, provided that those risks are less correlated with those of other banks.

Finally, optimal regulation also depends on the network structure. In particular, laissez-faire is less likely to be optimal when claims between banks are larger.

\begin{corollary}
As the size of interbank claims $D$ increases, the set of parameter values under which some regulatory intervention is optimal expands.
\end{corollary}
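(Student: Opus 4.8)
The plan is to read off monotonicity in $D$ directly from the characterizations in Propositions \ref{symmetric} and \ref{asymmetric}, treating the two regimes ($m\leq k^R$ and $m>k^R$) in turn and checking that the ``regulation is optimal'' region weakly grows as $D$ increases. The only subtlety is that $D$ enters both through the closed-form $\theta$-thresholds and through the integer quantities $k^R=\lfloor (R-D_0)/D\rfloor$ and $k^r=\lfloor (1+r-D_0)/D\rfloor$, so I need to track both channels and argue they push in the same direction.

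First, consider the regime $m\leq k^R$. By Proposition \ref{symmetric}, regulation is optimal iff $\theta<\frac{\chi+D_0}{\chi+\frac{R}{1+r}D_0}$, a threshold that does not depend on $D$ at all. However, the \emph{set of parameter configurations} for which we are in this regime does depend on $D$: since $k^R=\lfloor (R-D_0)/D\rfloor$ is (weakly) decreasing in $D$, increasing $D$ shrinks the set of $m$ with $m\leq k^R$ and correspondingly enlarges the set with $m>k^R$. So I would phrase the argument as: for a fixed $(n_c,n_p,D_0,R,r,\chi,\theta,m)$, as $D$ grows we either stay in the same regime or cross from the first into the second. Within the first regime the optimality condition is unchanged; crossing into the second regime, I need to show that any configuration that satisfied the first-regime condition continues to satisfy the second-regime condition \eqref{thetahigh}.

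Second, and this is the crux, I would compare the thresholds. In the $m>k^R$ regime, regulation is optimal if $\theta<\overline{\theta}=\frac{(n_c/m)\chi+D_0}{(n_c/m)\chi+\frac{R}{1+r}D_0}$. Since $m\leq n_c$, we have $n_c/m\geq 1$, and one checks (the function $x\mapsto \frac{x\chi+D_0}{x\chi+\frac{R}{1+r}D_0}$ is monotone in $x$ because $\frac{R}{1+r}>1$) that $\overline{\theta}$ is at least the symmetric-regime threshold $\frac{\chi+D_0}{\chi+\frac{R}{1+r}D_0}$. Hence any $\theta$ below the first threshold is also below $\overline{\theta}$, so crossing the regime boundary (caused by increasing $D$) never removes a configuration from the ``regulate'' set. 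It remains to confirm that, \emph{within} the second regime, increasing $D$ (without changing $k^R$ or $k^r$) does not shrink the region: $\overline{\theta}$ itself is independent of $D$, and the lower threshold $\underline{\theta}$ from \eqref{thetalow} only governs whether the optimal regulation is asymmetric versus symmetric, not whether regulation occurs at all — so the ``some regulatory intervention is optimal'' set in the second regime is governed purely by $\theta<\overline{\theta}$, which is $D$-free. The one remaining case is when increasing $D$ drops $k^r$ below $1$; then Proposition \ref{asymmetric}'s stated sufficient condition no longer applies, but I would note that $\overline{\theta}$ still certifies optimality of (symmetric) regulation whenever $\theta<\overline{\theta}$, as the symmetric policy of having each core bank invest $D_0/(1+r)$ risk-free and the rest risky remains feasible and its comparison with laissez-faire is exactly the $\overline{\theta}$ inequality.

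The main obstacle I anticipate is bookkeeping around the integer floors: I must make sure that every time $k^R$ or $k^r$ jumps down as $D$ increases, the accompanying change of regime or change of applicable proposition still lands inside the ``regulate'' region, rather than silently falling into a gap not covered by either Proposition. The clean way to avoid gaps is to observe that in \emph{all} cases the relevant comparison is ``expected foregone risk premium from the best contagion-preventing portfolio restriction'' versus ``expected saved bankruptcy costs,'' and that increasing $D$ raises the latter (more debt to default on, larger cascades averted) while the former — the cost of the minimal safe-asset holding $D_0/(1+r)$ that guarantees a core bank's solvency given at most $k^R$ counterparty defaults — does not increase in $D$. Framing the final write-up around that single monotone comparison, with the closed-form thresholds as the special-case confirmations, is the cleanest route and sidesteps the floor-function casework.
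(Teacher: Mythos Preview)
Your proposal is correct and rests on the same core observation as the paper: increasing $D$ lowers $k^R=\lfloor (R-D_0)/D\rfloor$, which pushes configurations from the no-contagion regime ($m\leq k^R$) into the contagion regime ($m>k^R$), where the sufficient threshold $\overline{\theta}$ for regulation is weakly higher than the regime-1 threshold. The paper's own justification is a single informal sentence to that effect; your write-up is considerably more careful (tracking the floor functions, verifying the monotonicity of $x\mapsto \frac{x\chi+D_0}{x\chi+\frac{R}{1+r}D_0}$, and handling the $k^r<1$ edge case by noting that the symmetric-regulation-versus-laissez-faire comparison underlying $\overline{\theta}$ does not use $k^r$), but the route is the same.
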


 Indeed, larger interbank claims imply that core banks can sustain fewer defaults before being dragged down to insolvency---i.e., $k^R$ is lower. Thus, default cascades are more easily triggered and some regulation more likely to be optimal to prevent them.

\subsubsection{Nested-Split Graphs}

We now consider a more general class of core-periphery networks that allows for asymmetries between core banks.  Such asymmetries are observed empirically, so we want to allow for them in our analysis.\footnote{For instance, \cite{d2025unintended} estimate that derivative exposures between the largest US banks have been shifting toward a tiered structure since the implementation of clearing regulations from the Dodd-Frank Act.} 
The setup is the same as above, except that core banks no longer form a complete clique but a nested-split graph. Core banks then differ in how connected they are to other core banks, creating a hierarchy within the core.

Formally, the set of core banks is partitioned into tiers $\ell=1,\dots, L$, such that banks in tier $L$ are connected to all other core banks, and banks in tier $\ell$ all have the same set of core counterparties, which is a strict superset of the core-counterparties of banks in tier $\ell'<\ell$. By construction, banks in tier $\ell$ then have more core counterparties than banks in tier $\ell'$ whenever $\ell>\ell'$.\footnote{The tiers can be characterized by their degrees, with banks in tier $L$ having degree $n_c-1$ and then core banks in lower tiers having lower degrees. If a core bank is connected to some other core bank with degree $d$, then it is connected to all core banks with degree at least $d$.}
These network structures have the property that the core can be partitioned into two sets: a clique of nodes that are all exposed to each other, and an independent set that is only exposed to the clique. That is, there exists a threshold $\overline{\ell}$ such that banks in tiers $\ell>\overline{\ell}$ form a clique, and banks in tiers $\ell\leq \overline{\ell}$ form an independent set.\footnote{Note that tiers below $\overline{\ell}$ might still have different connectivity to tiers above $\overline{\ell}$.  So, with multiple tiers, these can be quite rich objects. }

As before, if two core banks are counterparties, they both have a claim and liability of $D$ to each other. Higher-tier banks are then more ``central'' in the sense that they are connected to more core banks. They also have larger gross balance-sheets as they have greater total assets and liabilities. The previous section looked at the special case of a single tier, $L=1$.

Following a similar logic as above, each core bank can sustain up to $k^R\equiv \lfloor\frac{R-D_0}{D}\rfloor$ defaults when it gets a high realization, irrespective of the tier $\ell$ to which it belongs. As before, the expected number of defaults absent regulation is non-monotonic in the amount of correlation $m$. If few banks get zero return at the same time ($m\leq k^R$),  there is no contagion and only these banks default. Optimal regulation then remains symmetric.

\begin{proposition}
\label{prop:optregulation_NS}
If $m\leq k^R$, then regulation is optimal if and only if
\[\theta< \frac{\chi+D_0}{\chi+\frac{R}{1+r}D_0}.\]
When some regulation is optimal, it is symmetric and has all core banks invest $\frac{D_0}{1+r}$ in the risk-free asset and the remainder in the risky asset.
\end{proposition}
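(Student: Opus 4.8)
The plan is to show that when $m\le k^R$ the hierarchy within the core is irrelevant, so the problem collapses to the one behind Proposition~\ref{symmetric}. The first step is to establish the absence of contagion inside the core. If a core bank $i$ has $d_i\le n_c-1$ core counterparties (each with mutual claim and liability $D$) and a liability $D_0$ to its peripheral banks, then when $i$ gets a high return and $j$ of those counterparties fail to pay, its value is $R+(d_i-j)D-d_iD-D_0=R-jD-D_0$, which is nonnegative exactly when $j\le k^R$; hence the default tolerance $k^R$ is the same in every tier. Since a defaulting core bank pays nothing (by the maintained assumption on $\chi$), ``every high-return core bank is solvent'' is a self-consistent value profile: high-return banks then pay one another in full, so each loses payments from at most the $\min(m,d_i)\le m\le k^R$ of its counterparties that drew a low return and remains solvent. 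This is the greatest fixed point, because no unregulated low-return core bank can be solvent in any equilibrium (its value is at most $d_iD-d_iD-D_0=-D_0<0$ even collecting everything), and the argument only uses that regulation moves banks toward safer portfolios, so it survives any macroprudential policy. Thus the set of defaulting core banks is always contained in the random set of low-return core banks, whose expected size is $n_c(1-\theta)$ regardless of regulation, and no core default ever triggers another.

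Given this, I would decompose welfare. Peripheral banks are pure pass-throughs ($D_{0i}=D_i^L=D_i^A$, no outside portfolio) that default exactly when their core bank does, and interbank claims among solvent core banks are pure transfers, so total surplus equals $\sum_i\bigl(\mathbb{E}[\text{portfolio value of }i]-\chi\,\Pr[i\text{ defaults}]\bigr)$ up to an additive constant, the sum over core banks with $\chi$ bundling the associated peripheral costs. Each core bank invests the maximum permitted share in its risky asset, since its expected equity is increasing in that share (as in the no-regulation argument of Appendix~A, which applies unchanged once the share is merely capped, because $\theta R>1+r$ and the marginal state in which the bank flips to insolvency carries negligible equity value). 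So a policy is summarized by the floors $\beta_i$ it imposes on safe-asset shares, with per-bank expected portfolio value $\theta R-\beta_i(\theta R-(1+r))$.

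Next I would pin down the optimal floors. A low-return core bank with $\beta_i(1+r)<D_0$ defaults for sure even collecting all interbank payments (its value is then $\beta_i(1+r)-D_0<0$), so any $\beta_i\in(0,D_0/(1+r))$ is strictly dominated by $\beta_i=0$; and raising $\beta_i$ above $D_0/(1+r)$ only forgoes returns. Hence it is without loss to take $\beta_i\in\{0,\,D_0/(1+r)\}$: a bank with $\beta_i=0$ defaults precisely when it draws a low return and contributes $\theta R-(1-\theta)\chi$, while a bank with $\beta_i=D_0/(1+r)$ contributes at most $(1-\tfrac{D_0}{1+r})\theta R+D_0$. Thus welfare is at most the larger of the two corresponding totals over the $n_c$ core banks. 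Laissez-faire attains the first; the symmetric policy $\beta_i\equiv D_0/(1+r)$ attains the second exactly, since ``all core banks solvent'' is then self-consistent (each low-return bank has value $D_0-D_0=0$) and there are no defaults. Comparing them, $(1-\tfrac{D_0}{1+r})\theta R+D_0>\theta R-(1-\theta)\chi$ rearranges to $\chi>\frac{D_0[\theta R-(1+r)]}{(1+r)(1-\theta)}$, equivalently $\theta<\frac{\chi+D_0}{\chi+\frac{R}{1+r}D_0}$; so some regulation is optimal iff this holds, and when it does the (effectively) unique optimum is the symmetric full-safety policy.

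The step I expect to need the most care is the ``without loss'' reduction: ruling out that an \emph{asymmetric} policy restricting only a proper nonempty subset $S$ of core banks --- possibly with $\beta_i>D_0/(1+r)$ --- beats both endpoints by exploiting the fact that a low-tier protected bank has few core neighbors and might survive a low return of its own. The resolution is that, because the $m$ zero-return core banks are drawn uniformly, every protected bank still defaults with nonnegative probability (and with strictly positive probability for at least one bank in any proper nonempty $S$, since in a nested-split graph a member of $S$ necessarily has a neighbor outside $S$), so its contribution never exceeds $(1-\tfrac{D_0}{1+r})\theta R+D_0$; combined with the per-bank decomposition this caps welfare by $|S|$ copies of the regulated contribution plus $n_c-|S|$ copies of the laissez-faire contribution, which lies between the two totals above and hence cannot beat the better of them. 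This is exactly where the absence of contagion --- and thus the irrelevance of the nested-split hierarchy --- does the work, paralleling Proposition~\ref{symmetric}.
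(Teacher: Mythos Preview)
Your argument is correct and, at the technical level, actually a bit cleaner than the paper's. Both proofs start the same way: use $m\le k^R$ to show there is no contagion inside the core (so the tier structure plays no role), and then compare full symmetric regulation at $\beta=D_0/(1+r)$ with laissez-faire. Where the two diverge is in how asymmetric policies are ruled out. The paper argues structurally via \emph{complementarities}: if only a subset of banks is regulated, then to guarantee those banks' solvency one must impose the stricter floor $(D_0+XD)/(1+r)$ to absorb up to $X$ unregulated counterparty defaults, and whenever that is worthwhile, regulating everyone at the smaller floor $D_0/(1+r)$ is strictly better since bankruptcy costs scale linearly in the number of unregulated banks. That argument is elegant but, as written, only compares asymmetric policies that fully insulate the regulated subset; it does not explicitly treat intermediate floors where regulated banks are allowed to default with small probability. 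Your per-bank upper-bound argument covers \emph{all} floors at once: any bank with $\beta_i<D_0/(1+r)$ still fails whenever its own asset does, so its contribution is at most $\theta R-(1-\theta)\chi$; any bank with $\beta_i\ge D_0/(1+r)$ has contribution at most its expected portfolio value $\le(1-\tfrac{D_0}{1+r})\theta R+D_0$; summing gives a bound linear in $|S|$ that is attained only at the two endpoints. One small wording issue: your sentence ``raising $\beta_i$ above $D_0/(1+r)$ only forgoes returns'' is not literally true under an asymmetric policy (a higher floor can lower default probability), but your upper-bound argument in the final paragraph does not rely on it, so the proof goes through.
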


As correlation increases ($m>k^R$), so does the extent of contagion in the network, but in contrast to the case in which the core was a clique, contagion can occur in waves when there are multiple tiers. Absent regulation, the $m$ banks with zero realization necessarily default, which depresses the balance-sheets of those banks' counterparties.  If some core banks are exposed to sufficiently many of these $m$ initial defaults, they might default as well. Since banks in higher tiers are exposed to larger sets of banks, they are more at risk of contagion. In particular, banks from tier $L$ are, by construction, exposed to all other core banks, and so must default too. Whether contagion stops there or continues depends on the size of tier $L$ and on precisely which banks in the core got zero realization. For instance, if tier $L$ contains more than $k^R$ banks, then contagion spreads to the entire network.

Let $n_c^\ell$ denote the number of core banks in tier $\ell$.

\begin{proposition}
\label{prop:optregulation_NS2}
Suppose that $m>k^R$ and $k^r \leq \sum_{\ell\leq \overline{l}}n_c^\ell$ (so there are at least $k^r$ banks in the independent set).

If
\[\theta >\frac{\chi+D_0-\frac{(1+r-D_0)}{k^r}\sum_{\ell> \overline{l}}n_c^\ell}{\chi+\frac{R}{1+r}(D_0-\frac{(1+r-D_0)}{k^r}\sum_{\ell> \overline{l}}n_c^\ell)},\]
then a (nontrivial) regulation that treats core banks symmetrically is outperformed by one that imposes stricter limits on investments on higher-tier core banks while allowing some lower-tier banks to invest freely.
\end{proposition}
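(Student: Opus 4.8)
The plan is to reduce the claim to a one‑line welfare comparison between the best nontrivial symmetric regulation and one explicit asymmetric regulation of the stated type. First I would pin down the symmetric benchmark. A symmetric policy is a common cap on core banks' risky investment, so by the dominance logic of Section \ref{overly} each core bank invests up to that cap; write $x\in[0,1]$ for the resulting common amount placed in the safe asset. If $x\ge D_0/(1+r)$, then a core bank drawing a zero return is (weakly) solvent provided its counterparties pay in full, the all‑solvent profile is a fixed point of (\ref{eq-bookvalue-bankruptcy}) and hence selected, there is no contagion, and welfare is $n_c[(1-x)\theta R+x(1+r)]$, which is decreasing in $x$ since $\theta R>1+r$; so the best contagion‑free symmetric policy is $x=D_0/(1+r)$, with welfare $S_{\mathrm{sym}}=n_c[(1-\tfrac{D_0}{1+r})\theta R+D_0]$. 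If instead $x<D_0/(1+r)$, in the bad state the $m$ banks drawing zero default under any repayments and, since the tolerance $k^R(x)=\tfrac{(1-x)R+x(1+r)-D_0}{D}$ is decreasing in $x$ and below $m$ throughout, they set off a cascade; lowering $x$ both raises returns and weakly enlarges $k^R(x)$, hence weakly reduces the number of contagious defaults, so among such policies the best is $x=0$, the trivial (laissez‑faire) one. Thus the nontrivial symmetric benchmark is $S_{\mathrm{sym}}$.

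Next I would exhibit the asymmetric policy: have all $\sum_{\ell>\overline{\ell}}n_c^{\ell}$ clique banks invest their entire unit in the safe asset; have $k^r$ of the independent‑set core banks (these exist because $k^r\le\sum_{\ell\le\overline{\ell}}n_c^{\ell}$) invest their entire unit in their own risky asset; and have each remaining independent‑set core bank invest $D_0/(1+r)$ in the safe asset and the rest in its risky asset. This is feasible ($D_0<1+r$ since $k^r\ge1$) and has the claimed form, since the higher‑tier (clique) banks are restricted strictly more than under $S_{\mathrm{sym}}$ while $k^r$ lower‑tier banks invest freely. The crux is that the independent set has no internal debt edges, so a defaulting free bank transmits losses only to clique banks; a fully safe clique bank has portfolio value $1+r$ and remains solvent as long as at most $\lfloor(1+r-D_0)/D\rfloor=k^r$ of its core counterparties fail to pay in full, so it absorbs the default of all $k^r$ free banks. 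Hence the profile ``exactly the free banks whose own asset drew zero are insolvent'' is a fixed point of (\ref{eq-bookvalue-bankruptcy}): clique banks pay in full, each restricted independent‑set bank is then fully repaid and meets its liability $D_0+d_iD$ ($d_i$ its core‑degree), and a high‑return free bank is solvent since $R>D_0$; and since a free bank drawing zero is insolvent under any repayments, this is the greatest solution. So contagion never spreads beyond these $\le k^r$ banks.

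I would then compute welfare. Interbank transfers net out, so welfare equals expected portfolio returns minus expected bankruptcy costs, the latter already bundling in the associated peripheral banks' costs by hypothesis. Clique banks contribute $1+r$ each and never fail; each restricted independent‑set bank contributes $(1-\tfrac{D_0}{1+r})\theta R+D_0$ and never fails; each free bank contributes $\theta R$ in expectation and fails exactly when its asset draws zero, i.e.\ with probability $1-\theta$. Thus $S_{\mathrm{asym}}=(\sum_{\ell>\overline{\ell}}n_c^{\ell})(1+r)+k^r[\theta R-(1-\theta)\chi]+(\sum_{\ell\le\overline{\ell}}n_c^{\ell}-k^r)[(1-\tfrac{D_0}{1+r})\theta R+D_0]$. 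Writing $n_{\mathrm{cl}}=\sum_{\ell>\overline{\ell}}n_c^{\ell}$ and subtracting, the common restricted‑bank and peripheral terms cancel and one is left with $S_{\mathrm{asym}}-S_{\mathrm{sym}}=\tfrac{\theta R-(1+r)}{1+r}\big[k^rD_0-n_{\mathrm{cl}}(1+r-D_0)\big]-k^r(1-\theta)\chi$, using the identities $(1+r)-[(1-\tfrac{D_0}{1+r})\theta R+D_0]=-(1+r-D_0)\tfrac{\theta R-(1+r)}{1+r}$ and $[\theta R-(1-\theta)\chi]-[(1-\tfrac{D_0}{1+r})\theta R+D_0]=D_0\tfrac{\theta R-(1+r)}{1+r}-(1-\theta)\chi$. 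Dividing by $k^r>0$, $S_{\mathrm{asym}}>S_{\mathrm{sym}}$ is equivalent to $\big(D_0-\tfrac{1+r-D_0}{k^r}n_{\mathrm{cl}}\big)\tfrac{\theta R-(1+r)}{1+r}>(1-\theta)\chi$, and solving this linear inequality for $\theta$ gives exactly the displayed threshold.

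The main obstacle is the no‑contagion verification of the second step: one must rule out contagion propagating in waves through the nested‑split core, handle the knife‑edge solvency identities carefully, and invoke the best‑equilibrium selection to pick the all‑(but‑free‑banks)‑solvent fixed point; a secondary subtlety is confirming in the benchmark step that no symmetric policy strictly between laissez‑faire and $x=D_0/(1+r)$ can do better, so that $S_{\mathrm{sym}}$ is the right object to beat. One should also record that the final rearrangement presumes the denominator $\chi+\tfrac{R}{1+r}\big(D_0-\tfrac{1+r-D_0}{k^r}n_{\mathrm{cl}}\big)$ is positive; when it is not, the inequality $S_{\mathrm{asym}}>S_{\mathrm{sym}}$ simply fails, which is consistent with the candidate asymmetric policy being too costly in that regime.
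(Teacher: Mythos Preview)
Your proposal is correct and follows essentially the same approach as the paper: identify the nontrivial symmetric benchmark as the common reserve $D_0/(1+r)$, construct the same asymmetric policy (clique fully in the safe asset, $k^r$ independent-set banks free, the rest at $D_0/(1+r)$), verify it eliminates contagion, and compare welfare to obtain the displayed threshold. Your treatment is in fact more careful than the paper's on two points the paper waves at---why no symmetric $x$ strictly between $0$ and $D_0/(1+r)$ can dominate, and the fixed-point verification that contagion stops at the free banks---and your flag about the sign of the denominator is a legitimate caveat the paper does not record.
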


Intuitively, by regulating higher-tier core banks more than lower-tier ones, the regulator can insulate the network from contagion at a lower opportunity cost in terms of foregone risk premia. Consider for instance the following regulation. All core banks in tiers $\ell>\overline{l}$---that is, all core banks in the clique---are required to only invest in the risk-free asset. Core banks in the independent set are less restricted: $k^r$ of them face no restriction at all and invest freely, while the remaining have to invest $D_0/(1+r)$ in the risk-free asset. By construction, these latter banks  have enough reserves to cover their debts to the periphery, and so they remain solvent whenever they get their debt payments from their core counterparties. Importantly, they are only exposed to core banks from the clique. Banks from the clique are fully regulated and remain solvent as long as at most $k^r$ of their counterparties default. Since only $k^r$ core banks invest freely, there is never any contagion: even if these $k^r$ banks get low realizations and default, they are not very central and only exposed to the fully regulated clique, which stops contagion in its nest.

Proposition \ref{prop:optregulation_NS2} shows that, in the case of a hierarchical core, symmetric regulation is dominated by a particular form of asymmetric regulation in which the banks in higher tiers are regulated and then some number of the banks in lower tiers are allowed to invest freely.

Even though a hierarchical form of regulation outperforms symmetric regulation, that does not mean that the optimal (asymmetric) regulation is necessarily one in which the highest-tier banks are those
that are regulated.  The issue is that the probability of contagion differs across banks and in ways that depend on which banks are regulated.
The following example illustrates the issue.

\begin{example}\label{ex:complexity}
Consider a core with $L=2$ tiers. The high tier is comprised of $n_c^L=2$ banks, which form a clique and are connected to all core banks. The low tier is comprised of  $n_c^1=3$ banks that are just connected to the first two high-tier banks.
\begin{figure}[!h]
\centering
\includegraphics[width=0.4\textwidth]{NS_example.tikz}
\caption{\small
Arrows point in the direction that debts are owed. There are two core banks in the high tier (dark pink) which are exposed to all other core banks. The three core banks in the low tier (light pink) are only exposed to core banks from the high tier. Each core bank also has a liability of $1$ to $n_P$ peripheral banks (blue). To keep the graph uncluttered, we only drew the peripheral banks for one core bank.}\label{fig:NS}
\end{figure}
Let $m=3, k^R=2,  k^r=1$, so when some banks get the low realization it is a group of three that get it, and high realization banks can withstand 2 defaults while banks investing in the risk-free asset can withstand one default.

First, consider regulating the two clique banks and requiring that they invest only in the safe asset, so that they remain solvent as long as at most one of their counterparties defaults. Banks from the independent set are unregulated; they only have two core counterparties so they always remain solvent when $p_i=R$ and always default when $p_i=0$. Banks in the clique then remain solvent if and only if at most one bank from the independent set gets $p_i=0$, which happens with probability 3/10 conditional on $m$ banks getting $p_i=0$. They default otherwise. Thus, the social surplus is
\[\underbrace{2\left[(1+r)-\frac{n_c(1-\theta)}{m}\frac{7}{10}b\right]}_{\text{surplus from the two regulated banks}}\qquad\quad +\underbrace{3[\theta R - (1-\theta)b]}_{\text{surplus from the three unregulated banks}}.\]
Next, consider a different regulatory policy under which the clique is unregulated, but 2 out of the 3
 banks in the independent set are required to invest fully in the safe asset. These regulated banks default only when the two clique banks get $p_i=0$, which happens with probability 3/10 (conditional on $m$ banks getting $p_i=0$). The social surplus is
\[2\left[(1+r)-\frac{n_c(1-\theta)}{m}\frac{3}{10}\chi\right]+3[\theta R - (1-\theta)\chi],\]
which is higher, and so the second asymmetric regulatory policy improves over the first one.

Intuitively, banks in the independent set have fewer core counterparties, and so if regulated are less likely to be brought to insolvency by their counterparties. They are then more efficient as contagion buffer than high-tier banks, who are exposed to all other banks, and are most easily affected by contagion. \qed
\end{example}

Example \ref{ex:complexity} hints at the complexity of finding optimal regulation in general networks, which we now turn to.\footnote{This is reminiscent of the complexity of optimal bailout policies identified in \cite{jacksonp2020}.} We start by providing a measure of financial centrality that can guide regulatory decisions.

\subsection{A Measure of Financial Centrality}

We provide a network-based measure of financial impact of a given organization.
Conceptually, given our approach, there is a unique and clear way to assess financial impact.  What limits its implementation
is a lack of regulation requiring all counterparties to be revealed to a central bank or other
oversight agency.
The actual computation of the measures below can be demanding in practice, but it is feasible to
accurately approximate them, especially once a monitoring system is in place and constantly updated.

Let the {\sl net financial centrality} of $i$, given a
network $\mathbf{D}$ and vector of investments $\mathbf{q}$,
from a change
in its investment vector from $\mathbf{q}_i$ to $\mathbf{q}_i'$ be\footnote{Implicit in defining financial centrality, one has to take a stand on which equilibrium set of values is being used since those define the values $\mathbf{V}(\mathbf{q},\mathbf{p},\mathbf{D})$ and thus the bankruptcy costs $b(\mathbf{V}(\mathbf{q},\mathbf{p},\mathbf{D}),\mathbf{q}, \mathbf{p},\mathbf{D})$. Typically we are interested in either the best or worst equilibrium, but one could make other choices, or change from best to worst if one anticipates a freezing of payments in response to the failure of some organization(s) (e.g., see \cite{jacksonp2020}).  For some discussion about the importance of uncertainty about which equilibrium applies, see D'Errico and Roukny \citeyearpar{rouknybs2018}, and for strategic choices of defaults see Allouch and Jalloul \citeyearpar{allouch2017strategic}.}

\[
NFC_i(\mathbf{q},\mathbf{q}_i';\mathbf{D})=\mathbb{E}_{\mathbf{p}} \left[
\sum_j b_j(\mathbf{V}(\mathbf{q},\mathbf{p},\mathbf{D}),\mathbf{q},\mathbf{p},\mathbf{D}) - b_j (\mathbf{V}((\mathbf{q}_i', \mathbf{q}_{-i}),\mathbf{p},\mathbf{D}),(\mathbf{q}_i', \mathbf{q}_{-i}), \mathbf{p},\mathbf{D}) \right],
\]
This is the total impact on the economy  that comes
from a change in $i$'s investment strategy, based on the bankruptcy costs that are incurred. If there are no changes in bankruptcy costs, then the net financial centrality of $i$ is 0. This is {\sl net} since it is the impact beyond the direct change in $i$'s portfolio value.

This measure captures chains of cascading defaults, and so summarizes everything that is relevant in the network to be able to assess systemic risk. As a result, it can be a fairly complex function of the network structure $\mathbf{D}$, chosen portfolios $\mathbf{q}$, and distribution over asset returns $\mathbf{p}$. It often does not coincide with any standard measures of centrality, and we see this as a feature of real-world financial networks more than a bug, since financial contagion is a form of ``complex'' contagion.

The following examples illustrate the subtleties behind the notion of net financial centrality.

First consider a star network: Bank 1 owes $D_{i1}=D$ to every other Bank $i>1$ and similarly, every Bank $i>1$ owes $D_{1i}=D$ to Bank 1.
Furthermore, all banks owe $D_{0i}=1$ to outside depositors.
There are two assets as in Section \ref{core}: each bank has a safe asset that always pays $1+r\geq 1$ and a risky asset that pays $R$ with probability $\theta$ and zero otherwise. The correlation between banks' risky assets is governed by the parameter $m$ (the number of banks that get 0 realizations on the risky asset at the same time), as described in Section \ref{core}.
For simplicity, let $a$ and $\chi$ be large enough so that a defaulting bank repays nothing to its creditors.
Given the interbank star network $\mathbf{D}$, any standard network centrality measure would assign Bank 1 the highest centrality.
Consider what happens when $m=n$ such that risky assets are perfectly correlated across banks, for example all investing in the same subprime mortgages.
If banks all invest in the risky asset, all banks default with probability $1-\theta$. What happens if we changed Bank 1's portfolio to be only composed of the safe asset? If $1+r<1+(n-1)D$, nothing changes: in the bad state of the world, all of Bank 1's counterparties default on their debts to 1, which is enough to bring Bank 1 to bankruptcy. Bank 1's net financial centrality is then 0. However, changing a peripheral bank's portfolio to the safe asset strictly reduces total expected bankruptcy costs if that bank can sustain the default of its only counterparty (i.e., if $1+r\geq 1+D$). In that case, each Bank $i>1$ has strictly greater net financial centrality than Bank 1.
Of course, other parameterizations for portfolio returns and outside debts, would lead to other centrality rankings, some in line with Bank 1 being the most central bank.

This is not particular to correlated asset values, but depends more generally on the details of interdependencies and investments.
To see this, consider a different variation on a star-network.  Suppose that Bank 1 owes $D$ to each of Banks 2 through $n$, with $n\geq 4$ (who owe nothing back to Bank 1).
In addition, Bank 2 owes 2$D$ to Bank 3,  Bank 3 owes 2$D$ to Bank 4,  and so on,  with  Bank $n$ owing 2$D$ to Bank 2.  So, the peripheral banks form a sort of directed wheel. Furthermore, every bank owes 1 to outsiders.
Again, let $a$ and $\chi$ be large enough so that a defaulting bank repays nothing to its creditors.
Bank 1 has a larger total debt to more banks than any other bank, and again would be more central by standard network measures that ignore defaults and bankruptcy.
Suppose that the risky assets are independent.   Bank 1's risky asset pays $R_1>(n-1)D + 1 $ with probability $\theta$ and 0 with probability $1-\theta$,  and other banks' risky assets pay $D+1 > R_{-1}>1$ with probability $\theta$ and 0 with probability $1-\theta$.   Thus, Bank 1 also has more at risk than other banks.
Note, however, that Bank 1's default will not trigger a default by any other bank, unless the bank is already defaulting for other reasons.
Despite Bank 1 having the most total debt and to the most banks, no other bank has enough exposure to Bank 1 for it to trigger any form of cascade.  In contrast, the other banks have more concentrated exposures to each other so that any single one of their defaults triggers default by all other peripheral banks even if they all have positive returns.
The key in this example is that even though the peripheral banks have much less debt and liabilities, those are more concentrated and can be pivotal and cascade because of the pattern of exposures.
By changing Bank 1's investment to the risk-free asset, its own bankruptcy costs are saved but no systemic costs.   By changing any other bank's investment to the risk-free asset, there is a probability that the bankruptcy costs of all banks $2$ through $n$ are saved.   For a variety of parameters, the net financial centralities of the peripheral banks are (much) larger than Bank 1's.

What these examples show is that net financial centrality depends not only on the base network structure, but also upon how the pattern of debts and investments relate to the network structure - it embodies exactly what is needed to determine the potential impact of changing any given bank's portfolio.

More generally, if enough structure is imposed on the investments and debts, then net financial centrality can simplify and align with existing centrality measures. For instance, net financial centrality aligns with the threat index in Demange \citeyearpar{demange2016} if there is no uncertainty about asset returns, at realized portfolio returns $\mathbf{qp}$ all banks default, and bankruptcy costs consist solely of some share of a bank's assets (i.e., $\chi=0$).\footnote{Demange \citeyearpar{demange2016} does not consider bankruptcy costs, so her model corresponds to $a=0$, but her analysis can easily be extended to accommodate for $a>0$.} How changes in a bank's portfolio affects how much all banks repay each other, and the associated bankruptcy costs, can then be written as an infinite sum of walks in the interbank network, which is reminiscent of Katz-Bonacich centrality.

\subsection{Macroprudential Regulation versus Bailouts}\label{reg}

Consider a regulator who is deciding on the best way to regulate Bank $i$'s investments,
taking as given the financial network $\mathbf{D}$
and other banks' (possibly regulated) equilibrium  investments $ \mathbf{q}_{-i}$.
Focusing on a single bank while taking the remaining network of
banks' investments as given provides necessary conditions that optimal regulation must satisfy on the margin as well as insights into
regulatory tradeoffs.
Optimal regulation of a whole network involves interacting incentives, as we saw for core-periphery networks in Section \ref{sec:coreperiphery}.

In this analysis, we take into account the bank's optimum investment in response to the regulation.

Suppose the regulator can intervene in two ways to reduce the inefficiency of  banks'
investments: it can use a macroprudential policy that restricts the types of investments banks can make, or it can bailout insolvent banks ex post at some cost. 
Macroprudential policies and bailouts induce different investment decisions, and hence different returns on banks' investments as well as different default risks. 
We examine that tradeoff to compare ex post bailouts to policies of
portfolio regulation, as well as to laissez-faire where banks are left to invest as they
choose and not bailed out if they default.

We consider the setting of Section \ref{overly} in which a Bank $i$ has access to two assets, a risky asset and a safe asset. Without regulation, it would
fully invest in the risky asset.
As before, a macroprudential policy takes the form of an  upper bound $\bar{q}_{ir}$ on the share of its portfolio that $i$ can invest in the risky asset.
In this model, that could equivalently be imposed as either a reserve requirement of a minimum amount
of risk-free asset to be held, an upper bound on the variance of a bank's portfolio,
or a limit on the amount of the portfolio that is at risk.

Proposition \ref{risky} implies that a restriction on the amount a bank can invest in the risky asset
will be binding, and the bank best-responds to regulation by investing the maximum it can in the risky asset.
The optimal restriction balances the gain from the risk premium with
the \emph{overall societal} expected bankruptcy costs,
and solves
\[\max_{q_{ir}}\; q_{ir}\mathbb{E}[p_r]+ (1-q_{ir})(1+r) - \mathbb{E}_{\mathbf{p}}\left[\sum_{j} b_j(\mathbf{V}((1-q_{ir},q_{ir}),  \mathbf{q}_{-i},\mathbf{p},\mathbf{D}),\mathbf{q}, \mathbf{p},\mathbf{D})\right].\]
The first part of the objective function captures the expected value of the portfolio, which is linearly increasing in $q_{ir}$ because of the risk premium. The second term, capturing expected bankruptcy costs, jumps down discontinuously at some values of $q_{ir}$ where marginally changing $i$'s investment changes the set of defaulting banks in some states of the world. Because of these discontinuities, there are several levels of investment that can be optimal and need to be compared to each other. A natural one is to not regulate Bank $i$, and to let it choose $q_{ir}=1$.  Another one is the critical level of investment in the risky asset $\bar{q}_{ir}$ under which Bank $i$ remains solvent irrespective of the realization of the risky asset. This threshold solves
\[
(1-\bar{q}_{ir})(1+r) = D_i^L  \ \ \ \  {\rm or} \ \ \ \     \bar{q}_{ir} = 1-\frac{D_i^L}{1+r}.
\]

Letting $\mathbf{\bar{q}_{i}} =(1-\bar{q}_{ir},\bar{q}_{ir})$, it then follows that imposing a cap on risky investment of $\bar{q}_{ir}$ (equivalently, a reserve requirement of $1-\bar{q}_{ir}$) on Bank $i$ improves on laissez-faire if and only if

\begin{align*}
    \mathbb{E}[p_r] - &\mathbb{E}\left[\sum_{j} b_j(\mathbf{V}((0,1),  \mathbf{q}_{-i},\mathbf{p},\mathbf{D}),((0,1), \mathbf{q}_{-i}), \mathbf{p},\mathbf{D})\right]\\[5pt]
&< \left( 1-\frac{D_i^L}{1+r}\right) \mathbb{E}[p_r] +  \left(\frac{D_i^L}{1+r}\right) (1+r) - \mathbb{E}\left[\sum_{j} b_j(\mathbf{V}(\mathbf{\bar{q}_{i}},  \mathbf{q}_{-i},\mathbf{p},\mathbf{D}), (\mathbf{\bar{q}_{i}}, \mathbf{q}_{-i}), \mathbf{p},\mathbf{D})\right],
\end{align*}
which simplifies to
\begin{equation}
\label{reserveR}
\left[\frac{\mathbb{E}[p_r]}{1+r} - 1\right]D_i^L<
NFC_i((0,1), \mathbf{q}_{-i},\mathbf{\bar{q}_{i}}; \mathbf{D})
\end{equation}
Thus, preventing $i$'s default via restrictions on its portfolio is beneficial if the opportunity cost of doing so --
the loss in risk premium needed to ensure $i$'s solvency --
is less than the expected reduction in bankruptcy costs.
A regulator maximizing social welfare will therefore restrict the portfolio
when the risk premium is below a threshold, or alternatively when a bank's net financial centrality
is above a threshold. Hence, optimally, macroprudential regulation should be discretionary: tailored to the returns the bank could achieve as well as the centrality of its position in the network.

We remark that the opportunity cost of restricting investments (the left hand side of the above inequality) is increasing in the bank's outstanding debt.
As debt increases, a greater investment in the safe asset is required to avoid default,
which is increasingly costly in terms of expected returns to investment.
However, the net financial centrality of $i$ can also be
increasing in $D_i^L$, and so whether reserve requirements are
less likely to be optimal as outstanding debt increases
can depend on details of the
network structure and bankruptcy costs.

Next, suppose that the regulator also has the possibility of bailing out Bank $i$
when it is insolvent,
but this involves some expected cost $c_i>0$.\footnote{This cost not only captures the net capital injection the regulator makes,  but also any additional indirect
costs (not including costs that are recoverable). Lucas \citeyearpar{lucas2019measuring} estimates that the (unrecovered) direct cost of the 2008 crisis-related
bailouts in the United States was around \$500 billion.}
Bailing out Bank $i$ improves over laissez-faire if $i$'s bailout cost is lower than the total expected bankruptcy costs it would trigger by defaulting: $c_i<NFC_i((0,1), \mathbf{q}_{-i},\mathbf{\bar{q}_{i}}; \mathbf{D})$.

The comparison between bailing out a bank and restricting its investments is more subtle, as both prevent a bank's default but at different (opportunity) costs for society. A bailout is preferred to a reserve requirement of $1-\bar{q}_{ir}$ whenever
\begin{align*}
\left[\frac{\mathbb{E}[p_r]}{1+r} - 1\right]D_i^L> &\underbrace{\mathbb{E}_{\mathbf{p}}\left[\sum_{j} b_j((V_i((0,1), \mathbf{q}_{-i},  \mathbf{p},\mathbf{D})^+, \mathbf{V}_{-i}((0,1), \mathbf{q}_{-i},  \mathbf{p},\mathbf{D})),(0,1), \mathbf{q}_{-i}, \mathbf{p},\mathbf{D})\right]}_{\text{bankruptcy costs when $i$ invests freely to $\mathbf{q}_i=(0,1)$ and is bailed out when insolvent}}\\[5pt]
 &\qquad\qquad\qquad\qquad\qquad\qquad-\underbrace{ \mathbb{E}_{\mathbf{p}}\left[\sum_{j} b_j(\mathbf{V}(\mathbf{\bar{q}_{i}}, \mathbf{q}_{-i},  \mathbf{p},\mathbf{D}), \mathbf{\bar{q}_{i}}, \mathbf{q}_{-i},  \mathbf{p},\mathbf{D}) )\right]}_{\text{bankruptcy costs when $i$'s portfolio is regulated  }}+c_i
\end{align*}
which is equivalent to
\begin{equation}
\label{bailoutC}
\left[\frac{\mathbb{E}[p_r]}{1+r} - 1\right]D_i^L 
> c_i.
\end{equation}

Restricting a portfolio or providing bailouts each imply that a bank never defaults, which can have large effects on social surplus if, for instance,
the bank's solvency triggers a repayment cascade.
Net financial centrality captures how preventing $i$ from defaulting affects the
overall expected bankruptcy costs.

The above inequalities characterize the optimal regulation.  
We see that bailouts are optimal when both the expected excess returns from the risky investment and Bank $i$'s centrality are high.
A large risk premium means that one wants to take advantage of the risky asset's returns, while
a high centrality implies that bailouts are preferred to laissez-faire.
If expected returns are lower, but centrality is still high, then imposing portfolio restrictions are preferable to laissez-faire.
In contrast, if centrality is low enough, but expected returns are high, then laissez-faire becomes optimal.
In short:
\begin{itemize}
\item laissez-faire is best when centrality is low and excess returns are relatively high;
\item portfolio restrictions become optimal when excess returns are low and centrality is relatively high; and
\item bailouts are optimal when excess returns are high and so is centrality.
\end{itemize}

This is what was illustrated in Figure \ref{figure-regulate} in the introduction, which depicts how optimal regulation depends on a bank's expected excess returns from its available investment opportunities and the bank's centrality, holding bailout costs constant. 
Since there are three variables in question (expected excess return, centrality, and bailout cost), we also offer a different depiction in the right-hand panel of Figure \ref{figure-regulate_cs}, which depicts how optimal regulation depends on the regulator's baliout cost and a bank's centrality, holding expected excess returns constant.   

\begin{figure}[H]
\centering
\includegraphics[width=0.45\textwidth]{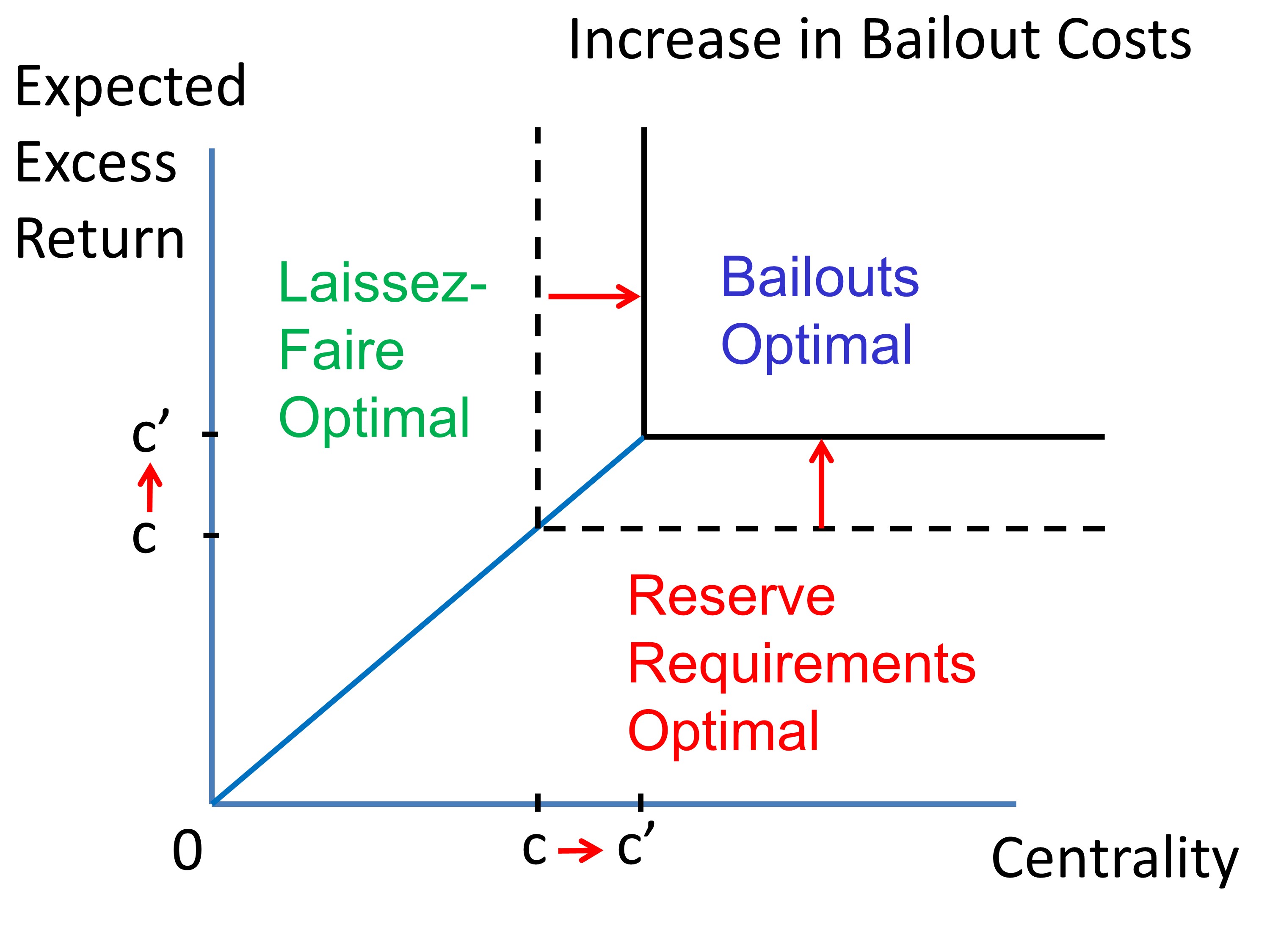}
\includegraphics[width=0.05\textwidth]{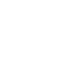}
\includegraphics[width=.45\linewidth]{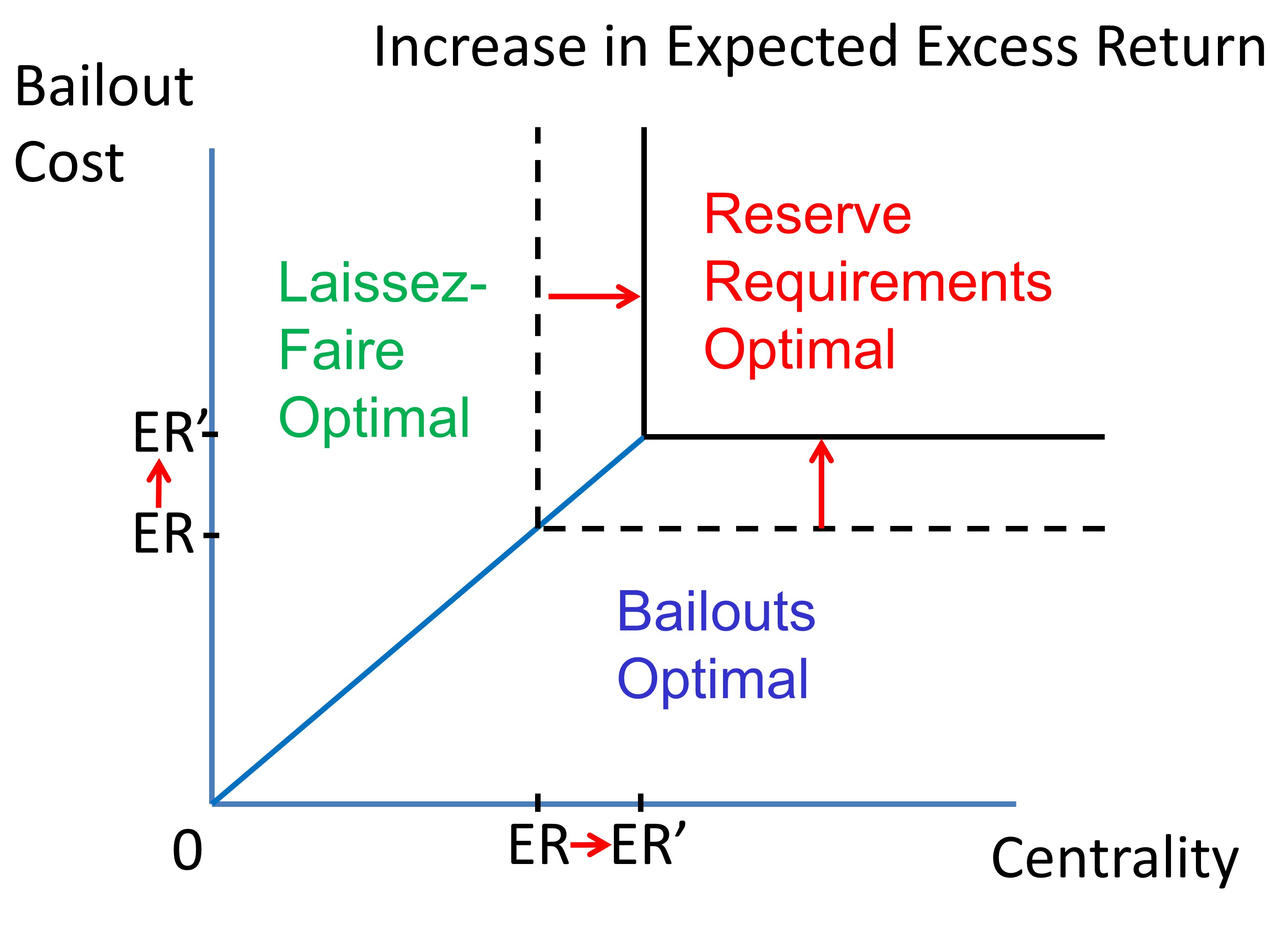}

\caption{\label{figure-regulate_cs}
The optimal regulation of a bank as a function of the expected excess return of the bank's available risky investments, the centrality of the bank, and the bailout cost.
The level of the reserve requirement is $1-\bar{q}_{ir}$.
The first panel pictures the optimal regulation as a function of the bank's expected excess return and centrality and shows how it changes due to an increase in the regulator's bailout costs;  while the second panel pictures the optimal regulation as a function of the baliout cost and centrality and shows the change in optimal regulation due to  an increase in the bank's expected excess return.
}
\end{figure}

Figure \ref{figure-regulate_cs} also depicts how optimal regulation 
changes as a function of bailout costs and expected excess returns.  Increasing bailout costs makes bailouts less attractive, while increasing expected excess returns makes reserve requirements less attractive.

\section{Discussion}

\subsection{Exogeneity of Interbank Exposures}\label{sec:endonetwork}

A limitation of our analysis is that it takes the network of interbank exposures $\mathbf{D}$ as given. This is relevant if these exposures capture long-lasting relationships between financial institutions that do not respond significantly to changes in their outside investments. For instance, the cross-ownerships that tie together banks in the French Cr\'edit Agricole Group are persistent and effectively constant.
In some settings, however, the amount that banks are willing to lend to each other responds to their portfolios and risk profiles. If a bank takes on excessively risky investments and may fail on its obligations, then others may ask for greater interest rates or not lend to it altogether.\footnote{For some evidence of this in short-term lending, see \cite{afonso2011stressed,cassola20132007}.}
Such considerations could then influence a bank's decision of whether to take on a risky investment, thus providing some market discipline and attenuating inefficiencies.
We consider this in the following example.
In particular, we show that it is not enough to induce efficient investments because of the cascade externalities that extend beyond a bank's partners.

\begin{example}
The example centers around Bank 1, which decides how to allocate its portfolio between a safe asset and a risky asset that pays $R$ with probability $\theta$ and nothing otherwise. The main difference with our above analysis is that Bank 1 is not exogenously given one unit of capital to invest, but instead raises capital from outside depositors (nodes 0 and 0'). Bank 1 has direct access to some depositors (node 0, on the left in Figure \ref{fig:exendogenous}) but not to all depositors (node 0'); e.g., because the latter are located in a different geographic area. To raise capital from these other depositors, it does so indirectly through Bank 2 and Bank 3. Both depositors 0 and 0' have a unit of capital to invest. All banks have access to a safe asset but only Bank 1 has access to the risky asset, thus providing a rationale for interbank lending.\footnote{We focus on the case in which these deposits or loans to Bank 1 are not insured, as otherwise there would be no response by others to Bank 1's investment portfolio, and hence no possibility for market discipline.}
The example is designed to highlight the role of cascades of defaults in magnifying inefficiencies, and even though it is stylized it captures essential features of interbank markets.

\begin{figure}[!h]
\begin{center}
\begin{tikzpicture}[scale=1]
\definecolor{afblue}{rgb}{0.36, 0.54, 0.66};
\foreach \Point/\PointLabel in {(0,0)/0, (3,0)/1, (6,0)/2, (9,0)/3, (12,0)/0'}
\draw[fill=afblue!40] \Point circle (0.35) node {$\PointLabel$};
\draw[<-, thick] (0.5,0) to node[above] {$D_{01}$} (2.5,0);
\draw[->, thick] (3.5,0) to node[above] {$D_{21}$} (5.5,0);
\draw[->, thick] (6.5,0) to node[above] {$D_{32}$} (8.5,0);
\draw[->, thick] (9.5,0) to node[above] {$D_{0'3}$} (11.5,0);
  \end{tikzpicture}
  \end{center}
  \captionsetup{singlelinecheck=off}
  \caption[]{\small Bank 1 has liabilities to both depositors (node 0) and Bank 2. Bank 2 has liabilities to Bank 3, which has liabilities to (some other) depositors (node 0'). }\label{fig:exendogenous}
\end{figure}

Even though who might want to borrow from/lend to whom is exogenously fixed, the size of the deposits and debts are not. They are determined endogenously as they depend on how likely Bank 1 is to default, and hence on its equilibrium investments. Indeed, if Bank 1's counterparties expect it to make risky investments, then they will require a greater promised payment to still be willing to lend capital to Bank 1. Let $a=0$ and $\chi>0$ large enough such that if Bank 1 defaults, it does not make any partial payments.
Since depositors do not have investment opportunities of their own, they are willing to lend capital to banks as long as the expected net return on this investment, net of bankruptcy costs, is greater than zero, i.e., $\mathbb{E}[d_{01}(\mathbf{V}) - \chi\mathbbm{1}\{V_1<0\}]\geq 1$ and $\mathbb{E}[d_{0'3}(\mathbf{V}) - \chi\mathbbm{1}\{V_3<0\}]\geq 1$.\footnote{Inefficiencies would be even larger if outside depositors did not internalize bankruptcy costs. }\footnote{Technically, interbank repayments and bankruptcy costs are also functions of investments and realized asset values, but we keep this dependence implicit to make notation less cluttered.    }
Bank 2 on its own cannot raise capital, so it is willing to borrow capital from Bank 3 and lend it to Bank 1 if $\mathbb{E}[V_2^+] =\mathbb{E}[(d_{21}(\mathbf{V}) -d_{32}(\mathbf{V}))^+]\geq 0$.
Bank 3's outside option is to invest its depositors' capital into the risk-free asset, in which case its expected equity value is simply $r$. It is willing to lend capital to Bank 2 instead if $\mathbb{E}[(d_{32}(\mathbf{V}) -d_{0'3}(\mathbf{V}))^+]\geq r$.

There are only two portfolios that can be optimal for Bank 1: either it invests just enough in the risk-free asset so as to always be solvent (i.e., set $q_i$ such that $(1-q_i)(1+r) = D_{01}+D_{21}$) or it invests fully in the risky asset.
Consider what happens in the first case.
By construction, Bank 1 never defaults and so the promised payments can be as small as $D_{01} = 1$,  $D_{21} = D_{32}=1+r$ and $D_{0'3}=1$. Bank 1's expected equity value is
\[\mathbb{E}[V_2^+] = \theta R \left[2-\frac{D_{01}+D_{21}}{1+r}\right]= \theta R\frac{r}{1+r}.\]
If instead Bank 1 invests fully in the risky asset, then it defaults with probability $1-\theta$, in which case Bank 2 also defaults on Bank 3 and Bank 3 on its own depositors. To compensate, depositors must be offered greater promised payments: $D_{01} =D_{0'3}= [1+(1-\theta)\chi]/\theta$. To be willing to lend the capital to Bank 2, Bank 3 must be offered $D_{32}$ such that $\theta[D_{32}-D_{0'3}]\geq r$, which can be as low as $D_{32}=1+r +(1-\theta)\chi$. To break even, Bank 2 must be offered at least $D_{21} =D_{32}=1+r +(1-\theta)\chi$. Overall, Bank 1's expected equity value is then
\[\mathbb{E}[V_2^+]  = \theta\left[2R - D_{01}+D_{21}\right]=2\theta R -2-r-2(1-\theta)\chi.\]
Bank 1 finds it optimal to invest fully in the risky asset whenever
\[2\theta R -2-r-2(1-\theta)\chi\geq  \theta R\frac{r}{1+r}, \]
which simplifies to
\[(2+r)\left(\frac{\theta R-(1+r)}{1+r}\right)\geq 2(1-\theta)\chi.\]
However, investing fully in the risky asset is socially efficient if and only if
\[(2+r)\left(\frac{\theta R-(1+r)}{1+r}\right)\geq 3(1-\theta)\chi.\]
Note the difference on the right-hand-side, which stems from the fact that Bank 2's bankruptcy cost is not accounted for by its counterparties when deciding on promised payments. Hence, even though Bank 1's risk-taking translates into higher interest rates on its liabilities to both of its counterparties, its incentives are still misaligned with social efficiency. Note that this inefficiency gets worse the longer the chain of interbank lending as well as the greater the bankruptcy costs. 
Overall, even if banks can adjust their exposures to each other in response to investment decisions, they do not account for the overall network externalities when doing so. \qed
\end{example}

The above example takes depositors to be risk-neutral. If they were risk averse, higher promised payments  would be required to compensate for the probability of default. That being said, Bank 2's bankruptcy costs would still not be accounted for, nor any additional network externalities, so the inefficiencies would persist. 
Moreover, disciplining becomes even more challenging if banks can only contract bilaterally and if observing a counterparty's investments is costly or impossible. 
The basic qualitative ideas underlying our results---i.e., that bilateral contracts between banks generate externalities, distort incentives, and must be addressed while keeping the network in mind---should extend.

A deeper analysis of such considerations requires modeling the endogenous formation and benefits from the network, which is beyond the scope of this paper but an important issue to merge with our analysis.\footnote{Distorted incentives are also to be expected in the formation of the network itself,
as shown in the recent literature on network formation in financial
settings (see, e.g., Gofman \citeyearpar{gofman2017},
Babus and Hu \citeyearpar{babush2017}, Farboodi \citeyearpar{farboodi2017},
Wang \citeyearpar{wang2017}, Erol \citeyearpar{erol2019}, and Blasques \citeyearpar{blasques2018}.)}

\subsection{Feedback between Private Incentives and Regulation}

Our analysis of optimal regulation allows banks to change their investments in response to the specific macroprudential and bailout policies chosen by the regulator. Indeed, in our model, banks choose their portfolios $(q_i)_i$ taking as given the regulation. At a very basic level, reserve requirements impose restrictions on the types of portfolios banks can choose. Similarly, bailouts can, in principle, also impact banks' investment decisions and lead them to take on even more risks. However, under most network structures, banks in equilibrium already choose the riskiest investments available, and so this margin is absent in our analysis of optimal regulation. It could be incorporated and would mostly make bailouts less appealing compared to macroprudential regulation and laissez-faire. It should not however change our main qualitative results or the optimality of asymmetric regulation.

As discussed in the previous section, we have also taken the network as given, but it can react to the regulation.  For instance, Erol \citeyearpar{erol2019} shows the anticipation of bailouts leads firms to create more links than is socially optimal, generating a network hazard. On the contrary, tight liquidity requirements can hinder interbank lending and cross-bank insurance (Erol and Ordonez \citeyearpar{erol2017network}).

Regulation can also be on the network itself.   For instance,  the use of Central Counterparty Clearing Houses (CCPs) can help mitigate some incentive and regulation issues.
CCPs can also monitor positions and impose margin requirements and mitigate some excessive risk-taking, although it is not clear
that they show appropriate concern over portfolio correlations.
Moreover, one has to worry about providing the CCPs with appropriate incentives and should be concerned about their extreme centrality and size.\footnote{See for instance, Duffie and Zhu \citeyearpar{duffiez2011}.}  Large government-sponsored enterprises that process huge
amounts of securities have an uneven history of success, especially if one examines Fannie Mae and Freddie Mac's failures in the 2008 crisis.
More generally, considering
regulations that change the network is an important topic for further research.

\subsection{Concluding Remarks}

Externalities in financial networks lead banks to take overly risky positions and to overly correlate their portfolio with those of their counterparties.\footnote{There could
be benefits to correlated portfolios that we overlook -- such as some gains to specialization and
an easier understanding of a counterparty's investment -- and that require a fuller investigation.}
The incentives of financial institutions to overly correlate portfolios have been largely ignored.  That can
lead banks to have higher centrality, as the situations in which they might default are ones in which the
system is then more naturally fragile. Excessive correlation can also exacerbate the multiple equilibria problem, the implications of firesales, and negative inferences, given that banks'
balance sheets become correlated.

As we have shown, macroprudential regulation
and/or bailouts can address excessive risk.
Historically, such intervention has
been sub-optimal due to the fact that it has rarely been tailored to banks' specific network positions, and is usually only applied to
a subset of financial organizations that involve externalities (e.g., it misses much of the shadow banking system). Moreover,
as we have shown, optimal regulation can involve asymmetries that policymakers might not anticipate, even though those asymmetries can substantially improve societal welfare.

In this paper we have taken the set of ``banks'' in question as given.  As we have seen in the past decade, this set is a constantly moving target.  A dramatic increase in the amount of intermediation done by non-bank financial intermediaries (NBFIs, a form of ``shadow banking'') has shifted a sizeable amount of the intermediation from regulated to non-regulated institutions (e.g., \cite{acharya2024banks,fsb2025}).  Given that less regulated institutions can sometimes offer higher rates of return, and may be presumed by investors to be implicitly insured by a government, they can divert funds from well-supervised and relatively safer sectors to ones that become riper for systemic disruptions.  
This presents a challenge for a regulator who can only regulate part of the network, and also presents challenges for governments that have to constantly redefine the boundaries of regulation.

Regardless of the precise policy that one undertakes, developing and maintaining
a more detailed picture of the network and of the portfolios of banks and other financial institutions is a necessary
first step both to improving crisis management and
to better understanding and monitoring incentive distortions.

\smallskip

\begin{spacing}{1.0}
\bibliographystyle{ecta}
\bibliography{financeNetworks}



\begin{appendices}

\section{Proofs} \label{appendixA}

\noindent{\bf Proof of Proposition \ref{risky}:}
Fix any investment strategy of the other banks in the network $ \mathbf{q_{-i}}$. Let $\mu$ be the measure on  $\mathbf{p}$, the vector of all portfolio gross returns, and let
\[
A( \mathbf{q_i}) = \{ \mathbf{p} \ | \   q_{is}(1+r) + q_{ir}p_{ir}  + d_i^{A}(\mathbf{V}(\mathbf{q_i},\mathbf{q_{-i}},\mathbf{p} ,\mathbf{D} ), \mathbf{D})>
D_i^L \}.
\]
Note that $\mu(A((0,1)))>0$ by Chebychev's inequality since $\mathbf{p}$ is bounded,
$\mathbb{E}[  p_{ir} ]> D_i^L$, and all other variables are nonnegative.
This implies $\mu(A(\mathbf{q_i}))>0$ for any possible optimizing level of $\mathbf{q_i}$.

Consider any $\mathbf{q_i}\neq (0,1)$ for which $\mu(A(\mathbf{q_i}))>0$, and
let us examine the gain in utility that results from  increasing $q_{ir}$ to $q_{ir}+\varepsilon$.
We show that for any such $q_{ir}$ there is an $\varepsilon>0$ for which this gain is strictly positive, which then implies the only optimizer is $q_{ir}=1$.

Let $\mathbf{q_{i}'}=(q_{is}-\varepsilon,q_{ir}+\varepsilon)$. Note that
\begin{align*}
&\int_{A(\mathbf{q_{i}'})} V_i(\mathbf{q_{i}'}, \mathbf{q_{-i}},\mathbf{p} ,\mathbf{D}) d\mu(\mathbf{p})  -  \int_{A(\mathbf{q_{i}})} V_i(\mathbf{q_{i}}, \mathbf{q_{-i}},\mathbf{p} ,\mathbf{D}) d\mu(\mathbf{p})
  \\[5pt]
  &  \geq
  \int_{A(\mathbf{q_{i}'})\cap A(\mathbf{q_{i}})} \left[V_i(\mathbf{q_{i}'}, \mathbf{q_{-i}},\mathbf{p} ,\mathbf{D}) - V_i(\mathbf{q_{i}},\mathbf{q_{-i}},\mathbf{p} ,\mathbf{D}) \right] d\mu(\mathbf{p}) - \int_{A(\mathbf{q_{i}})\setminus A(\mathbf{q_{i}'}) } V_i(\mathbf{q_{i}}, \mathbf{q_{-i}},\mathbf{p} ,\mathbf{D}) d\mu(\mathbf{p}).
\end{align*}
Next, it must be that $d_i^{A}(\mathbf{V}(\mathbf{q_{i}'}, \mathbf{q_{-i}},\mathbf{p} ,\mathbf{D}), \mathbf{D}) = d_i^{A}(\mathbf{V}(\mathbf{q_{i}}, \mathbf{q_{-i}},\mathbf{p} ,\mathbf{D}), \mathbf{D})$ for all $\mathbf{p}\in A(\mathbf{q_{i}'})\cap A(\mathbf{q_{i}})$. Indeed, for all $\mathbf{p}\in A(\mathbf{q_{i}'})\cap A(\mathbf{q_{i}})$, Bank $i$ is solvent under both investment strategies $\mathbf{q_{i}}$ and $\mathbf{q_{i}'}$. It repays its debts in full in either case, and so it cannot be that what it receives from its counterparties depends on how much it invested in the risky asset. Then $V_i(\mathbf{q_{i}},\mathbf{q_{-i}},\mathbf{p} ,\mathbf{D}) - V_i(\mathbf{q_{i}}, \mathbf{q_{-i}},\mathbf{p} ,\mathbf{D}) = \varepsilon \left( p_{ir} -(1+r)\right)$ for all   $\mathbf{p}\in A(\mathbf{q_{i}'})\cap A(\mathbf{q_{i}})$. 

Furthermore, for $\mathbf{p}\in A(\mathbf{q_{i}})\setminus A(\mathbf{q_{i}'})$, it must be that $V_i((q_{is}-\varepsilon',q_{ir}+\varepsilon'), \mathbf{q_{-i}},\mathbf{p} ,\mathbf{D}) = 0$ for some $\varepsilon'<\varepsilon$ and that
$V_i((q_{is}-\varepsilon'',q_{ir}+\varepsilon''), \mathbf{q_{-i}},\mathbf{p} ,\mathbf{D}) > 0$, for all $\varepsilon'' \in [0,\varepsilon')$.
Thus, for all  $\mathbf{p}\in A(\mathbf{q_{i}})\setminus A(\mathbf{q_{i}'})$
\[
V_i(\mathbf{q_{i}}, \mathbf{q_{-i}},\mathbf{p} ,\mathbf{D})\leq   \varepsilon' (1+r)  \leq \varepsilon (1+r) .
\]

Then
\begin{align*}
 \int_{A(\mathbf{q_{i}'})} V_i(\mathbf{q_{i}'}, \mathbf{q_{-i}},\mathbf{p} ,\mathbf{D}) &d\mu(\mathbf{p})  -  \int_{A(\mathbf{q_{i}})} V_i(\mathbf{q_{i}}, \mathbf{q_{-i}},\mathbf{p} ,\mathbf{D}) d\mu(\mathbf{p}) \\[5pt]
 & \geq
  \int_{A(\mathbf{q_{i}'})\cap A(\mathbf{q_{i}})} \varepsilon \left( p_{ir} -(1+r)\right) d\mu(\mathbf{p})
  -   \varepsilon (1+r) \int_{A(\mathbf{q_{i}})\setminus A(\mathbf{q_{i}'}) } d\mu(\mathbf{p}).
\end{align*}

\begin{claim}\label{c1}
If $\mu(A(\mathbf{q_{i}}))>0$, then
 $\mu(A(\mathbf{q_{i}}) \setminus A(\mathbf{q_{i}'}) ) \longrightarrow 0$  as $\varepsilon\longrightarrow 0$ while
  $\mu(A(\mathbf{q_{i}}) \cap A(\mathbf{q_{i}'}) ) \longrightarrow \mu(A(\mathbf{q_{i}}))$ as $\varepsilon\longrightarrow 0$.
\end{claim}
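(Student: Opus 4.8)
The plan is to trap $A(\mathbf{q_{i}})\setminus A(\mathbf{q_{i}'})$ between the empty set and a set that visibly shrinks to measure zero as $\varepsilon\to 0$. Concretely, I would introduce
\[
B_\varepsilon \;:=\; \{\,\mathbf{p} \;:\; 0 < V_i(\mathbf{q_{i}},\mathbf{q_{-i}},\mathbf{p},\mathbf{D}) \le \varepsilon(1+r)\,\},
\]
and first establish the containment $A(\mathbf{q_{i}})\setminus A(\mathbf{q_{i}'})\subseteq B_\varepsilon$. This has two halves. On $A(\mathbf{q_{i}})$ Bank $i$ is \emph{strictly} solvent, i.e.\ its realized assets strictly exceed $D_i^L$, so $b_i=0$ and $V_i(\mathbf{q_{i}},\mathbf{q_{-i}},\mathbf{p},\mathbf{D})>0$ there; and the displayed inequality obtained in the lines just above the claim gives $V_i(\mathbf{q_{i}},\mathbf{q_{-i}},\mathbf{p},\mathbf{D})\le\varepsilon'(1+r)\le\varepsilon(1+r)$ for every $\mathbf{p}\in A(\mathbf{q_{i}})\setminus A(\mathbf{q_{i}'})$. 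Together these two bounds place every such $\mathbf{p}$ in $B_\varepsilon$.

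Next I would exploit the fact that $V_i(\mathbf{q_{i}},\mathbf{q_{-i}},\cdot,\mathbf{D})$ is a fixed, $\varepsilon$-independent (measurable) function of $\mathbf{p}$ — only the threshold $\varepsilon(1+r)$ moves. Hence the family $\{B_\varepsilon\}_{\varepsilon>0}$ is nested and decreasing as $\varepsilon\downarrow 0$, with $\bigcap_{\varepsilon>0}B_\varepsilon=\emptyset$, since no $\mathbf{p}$ can satisfy $0<V_i\le 0$. Because $\mu$ is a probability measure, continuity of measure from above yields $\mu(B_\varepsilon)\to 0$, and therefore $\mu\big(A(\mathbf{q_{i}})\setminus A(\mathbf{q_{i}'})\big)\le\mu(B_\varepsilon)\to 0$, which is the first assertion. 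For the second assertion, $A(\mathbf{q_{i}})$ is the disjoint union of $A(\mathbf{q_{i}})\cap A(\mathbf{q_{i}'})$ and $A(\mathbf{q_{i}})\setminus A(\mathbf{q_{i}'})$, so $\mu\big(A(\mathbf{q_{i}})\cap A(\mathbf{q_{i}'})\big)=\mu(A(\mathbf{q_{i}}))-\mu\big(A(\mathbf{q_{i}})\setminus A(\mathbf{q_{i}'})\big)\to\mu(A(\mathbf{q_{i}}))$ by the first part. (One can equivalently phrase this as dominated convergence applied to the indicators $\mathbbm{1}_{A(\mathbf{q_{i}})\setminus A(\mathbf{q_{i}'})}\le\mathbbm{1}_{A(\mathbf{q_{i}})}$.)

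The only step that is not pure bookkeeping is the containment $A(\mathbf{q_{i}})\setminus A(\mathbf{q_{i}'})\subseteq B_\varepsilon$, and it rests entirely on the fact established just before the claim: if $i$ is solvent under $\mathbf{q_{i}}$ but insolvent under $\mathbf{q_{i}'}$, then moving the risky share continuously from $0$ up to $\varepsilon$ must cross $i$'s solvency boundary, and while $i$ stays solvent it repays all its debts in full, so its incoming repayments $d_i^A$ are unchanged and the value $i$ loses along the way is at most $\varepsilon(1+r)$. Everything downstream of that is routine measure theory, so I do not anticipate any genuine obstacle beyond keeping the $\varepsilon$-dependence clean.
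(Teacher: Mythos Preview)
Your argument is correct and follows the same core idea as the paper: both exploit the bound $0<V_i(\mathbf{q_i},\mathbf{q_{-i}},\mathbf{p},\mathbf{D})\le\varepsilon(1+r)$ on $A(\mathbf{q_i})\setminus A(\mathbf{q_i'})$, which was established just before the claim. The only difference is in how the limit is taken: the paper observes that the sandwiching inequalities force the set to shrink toward a level set $\{q_{ir}p_r+q_{is}(1+r)+d_i^A(\mathbf{p}_{-i},\mathbf{D})=D_i^L\}$ and then invokes atomlessness of $\mathbf{p}$ to conclude this level set has measure zero, whereas you package the same bound as $B_\varepsilon\downarrow\emptyset$ and apply continuity of measure from above. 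Your route is slightly cleaner in that it does not need atomlessness for \emph{this} step (atomlessness is still used elsewhere in the proof of the proposition), but the substance is identical.
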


\begin{proof}[Proof of Claim \ref{c1}]
Recall that we keep equilibrium selection fixed as we vary a bank's investment $\mathbf{q_{i}}$. This implies that for all $\mathbf{p}\in A(\mathbf{q_{i}})\setminus A(\mathbf{q_{i}'}) $, Bank $i$ receives enough repayments from others to be solvent under $\mathbf{q_{i}}$, but that it defaults under $\mathbf{q_{i}'}$ even if we were to assume its interbank repayments are unchanged. That is, for all $\mathbf{p}\in A(\mathbf{q_{i}})\setminus A(\mathbf{q_{i}'}) $,
\begin{align*}
q_{ir}p_{ir}+q_{is}(1+r) + &d^A_i(\mathbf{V}(\mathbf{q_{i}},\mathbf{q_{-i}},\mathbf{p} ,\mathbf{D} ), \mathbf{D})-D_i^L\geq 0\\ &>q_{ir}p_{ir}+q_{is}(1+r) -\varepsilon(1+r-p_{ir})+ d^A_i(\mathbf{V}(\mathbf{q_{i}},\mathbf{q_{-i}},\mathbf{p} ,\mathbf{D} ), \mathbf{D})-D_i^L.
\end{align*}
Hence, for all $\mathbf{p}\in A(\mathbf{q_{i}})\setminus A(\mathbf{q_{i}'}) $,
\begin{align*}
\varepsilon(1+r-p_{ir})>q_{ir}p_{ir}+q_{is}(1+r) + d^A_i(\mathbf{V}(\mathbf{q_{i}},\mathbf{q_{-i}},\mathbf{p} ,\mathbf{D} ), \mathbf{D})-D_i^L&\geq 0.
\end{align*}
Importantly, $i$ is solvent at $\mathbf{q_{i}}$ for all $\mathbf{p}\in A(\mathbf{q_{i}})\setminus A(\mathbf{q_{i}'}) $. Hence it repays its debts in full and how much it receives from its counterparties is independent of its own portfolio: $d^A_i(\mathbf{V}(\mathbf{q_{i}},\mathbf{q_{i}'},\mathbf{p} ,\mathbf{D} ), \mathbf{D})=d^A_i(\mathbf{p}_{-i}, \mathbf{D})$. As $\varepsilon\longrightarrow 0$, the LHS converges to zero and $A(\mathbf{q_{i}})\setminus A(\mathbf{q_{i}'}) $ converges to the set of $\mathbf{p}$ such that $q_{ir}p_{ir}+q_{is}(1+r) +d^A_i(\mathbf{p}_{-i}, \mathbf{D})$ is equal to a constant. This set has measure zero since the price vector $\mathbf{p}$ has an atomless distribution:  $\mu(A(\mathbf{q_{i}}) \setminus A(\mathbf{q_{i}'}) ) \longrightarrow 0$.

We now show that $\mu(A(\mathbf{q_{i}}) \cap A(\mathbf{q_{i}'}) ) \longrightarrow \mu(A(\mathbf{q_{i}}))$ as
$\varepsilon\longrightarrow 0$. Since Bank $i$ is still solvent for $\mathbf{p}\in
A(\mathbf{q_{i}'})$, it pays back its debts in full under the alternative investment strategy and must get the same debt payments from others: $V_i(\mathbf{q_{i}'}, \mathbf{q_{-i}},\mathbf{p} ,\mathbf{D} ) = V_i( \mathbf{q_{i}}, \mathbf{q_{-i}},\mathbf{p} ,\mathbf{D} )+\varepsilon(p_{ir}-1-r) \longrightarrow  V_i(\mathbf{q_{i}},\mathbf{q_{-i}},\mathbf{p} ,\mathbf{D} )$ as $\varepsilon\longrightarrow 0$ and $\mu(A(\mathbf{q_{i}}) \cap A(\mathbf{q_{i}'}) ) \longrightarrow \mu(A(\mathbf{q_{i}}))$.

\end{proof}

Therefore,
for any $\delta>0$, for all small enough $\varepsilon$ the gain in utility is
  at least
  \[
\varepsilon
\left[ [\mu(A( \mathbf{q_{i}}) )-\delta]
\mathbb{E} \left[  p_{ir} -(1+r) \left| A(\mathbf{q_{i}})\right. \right]
-
\mu(A(\mathbf{q_{i}}) \setminus A(\mathbf{q_{i}'}) ) ] (1+r)\right],
\]
which is at least
 \[
\varepsilon
[\mu(A(\mathbf{q_{i}}) )-\delta]
\mathbb{E} \left[  p_{ir} -(1+r) \right]
- \varepsilon (1+r)\mu(A(\mathbf{q_{i}}) \setminus A(\mathbf{q_{i}'}) ),
\]
which is strictly positive for small enough $\delta$ and $\varepsilon$, establishing that $\mathbf{q_{i}}=(0,1)$ is a strictly dominant strategy.

We have left to prove that for sufficiently large bankruptcy costs, Bank $i$'s dominant investment strategy is socially inefficient. Recall that $1+r\geq D_i^L$, such that if Bank $i$ were to fully invest in the safe asset, it is guaranteed to remain solvent. Given some investment strategy for other banks $\mathbf{q_{-i}}$, total surplus if Bank $i$ sets $\mathbf{q_{i}}=(1,0)$ equals
\[1+r +\sum_{j\neq i }\left(\mathbb{E}[\mathbf{q_{j}}\mathbf{p}-b_j(\mathbf{V}(\mathbf{q_{i}}=(1,0),\mathbf{q_{-i}},\mathbf{p} ,\mathbf{D} ),\mathbf{p}, \mathbf{D})]\right).\]
If instead Bank $i$ invests fully in the risky asset, total surplus equals
\begin{align*}
\mathbb{E}(p_{ir})  -\mathbb{E}[b_i(\mathbf{V}(\mathbf{q_{i}}=(0,1),\mathbf{q_{-i}}, &\mathbf{p}, \mathbf{D}),\mathbf{p}, \mathbf{D})]\\[5pt]
&+ \sum_{j\neq i }\mathbb{E}\left[\mathbf{q_{j}}\mathbf{p}-b_j(\mathbf{V}(\mathbf{q_{i}}=(0,1),\mathbf{q_{-i}},\mathbf{p}, \mathbf{D}),\mathbf{p}, \mathbf{D})\right].
\end{align*}
Note that in all states of the world in wich $p_{ir}<D_i^L-D^A_i$, Bank $i$ must default, since it does not have enough to cover its liabilities even if it receives as much as possible from its debtors. If bankruptcy costs are bounded below by $\overline{\chi}>0$, total surplus is then bounded above by
\[\mathbb{E}(p_{ir})  -\overline{\chi} \Pr(p_r<D_i^L-D^A_i)+ \sum_{j\neq i }\mathbb{E}\left[ \mathbf{q_{j}}\mathbf{p}-b_j(\mathbf{V}(q_i=1,q_{-i};\mathbf{p}, \mathbf{D}),\mathbf{p}, \mathbf{D})\right].\]
A sufficient condition for $\mathbf{q_{i}}=(0,1)$ to be socially inefficient is if social surplus under $\mathbf{q_{i}}=(1,0)$ is greater. This is true if
\begin{align*}
1+r-&\mathbb{E}(p_{ir})  +\overline{\chi} \Pr(p_{ir}<D_i^L-D^A_i) \\
&- \sum_{j\neq i}(\mathbb{E}[b_j(\mathbf{V}(\mathbf{q_{i}}=(1,0),\mathbf{q_{-i}};\mathbf{p}, \mathbf{D}),\mathbf{p}, \mathbf{D})]-\mathbb{E}[b_j(\mathbf{V}(\mathbf{q_{i}}=(0,1),\mathbf{q_{-i}};\mathbf{p}, \mathbf{D}),\mathbf{p}, \mathbf{D})])>0.
\end{align*}
Note that $i$ is always solvent if it invests fully in the safe asset, and $d_{ji}(\mathbf{V}(\mathbf{q_{i}}=(1,0),\mathbf{q_{-i}};\mathbf{p}, \mathbf{D}), \mathbf{D})=D_{ji}$. Thus $b_j(\mathbf{V}(\mathbf{q_{i}}=(1,0),\mathbf{q_{-i}};\mathbf{p}, \mathbf{D}),\mathbf{p}, \mathbf{D})\neq b_j(\mathbf{V}(\mathbf{q_{i}}=(0,1),\mathbf{q_{-i}};\mathbf{p}, \mathbf{D}),\mathbf{p}, \mathbf{D})$ only if $i$ defaults on (some of) its payments under $\mathbf{q_{i}}=(0,1)$ at $\mathbf{p}$. Furthermore, since $d^A_j(\mathbf{V}, \mathbf{D})- \beta_j\left(\mathbf{V},\mathbf{p}, \mathbf{D}\right)$ is nondecreasing in $\mathbf{V}$, it must be that $ b_j(\mathbf{V}(\mathbf{q_{i}}=(1,0),\mathbf{q_{-i}};\mathbf{p}, \mathbf{D}),\mathbf{p}, \mathbf{D})- b_j(\mathbf{V}(\mathbf{q_{i}}=(0,1),\mathbf{q_{-i}};\mathbf{p}, \mathbf{D}),\mathbf{p}, \mathbf{D})\leq D_{j}^A$. Thus the last term is is bounded below by $-\sum_j  D_{j}^A$. For $\overline{\chi}$ high enough, the above inequality strictly holds and Bank $i$'s dominant investment strategy is socially inefficient.\eproof

\bigskip

\noindent{\bf Proof of Proposition \ref{correlation1}:}
Suppose there exists an equilibrium in which banks all choose different assets, such that their portfolio returns are independent. Take the point of view of some Bank $i$, and consider what happens if $i$ were to choose the same asset as some Bank $j$ instead. Doing so would not change the joint distribution between $i$'s portfolio return $\mathbf{q}_{i}\mathbf{p}$ and that of other banks $(\mathbf{q}_{l}\mathbf{p})_{l\neq i,j}$. For any realization of $(\mathbf{q}_{l}\mathbf{p})_{l\neq i,j}$, it would however move all probability mass from states in which $(p_i, p_j)=(\overline{R}, \underline{R})$ to states in which $(p_i, p_j)=(\overline{R}, \overline{R})$, and all mass from states in which $(p_i, p_j)=(\underline{R}, \underline{R})$ to states in which $(p_i, p_j)=(\underline{R}, \underline{R})$. This is then a strictly profitable deviation for Bank $i$ whenever
\begin{align*}
    \mathbb{E}_{\mathbf{q}_{-ij}\mathbf{p}}[V^+_i(\mathbf{q}_{i}&\mathbf{p}=\overline{R}, \mathbf{q}_{j}\mathbf{p}=\overline{R}, \mathbf{q}_{-ij}\mathbf{p})] -\mathbb{E}_{\mathbf{q}_{-ij}\mathbf{p}}[V^+_i(\mathbf{q}_{i}\mathbf{p}=\underline{R}, \mathbf{q}_{j}\mathbf{p}=\overline{R}, \mathbf{q}_{-ij}\mathbf{p})] \\&> \mathbb{E}_{\mathbf{q}_{-ij}\mathbf{p}}[V^+_i(\mathbf{q}_{i}\mathbf{p}=\overline{R},\mathbf{q}_{j}\mathbf{p}=\underline{R}, \mathbf{q}_{-ij}\mathbf{p})] -\mathbb{E}_{\mathbf{q}_{-ij}\mathbf{p}}[V^+_i(\mathbf{q}_{i}\mathbf{p}=\underline{R},\mathbf{q}_{j}\mathbf{p}=\underline{R}, \mathbf{q}_{-ij}\mathbf{p})].
\end{align*}
We now prove that  $\underline{R}\leq D_i^L-D_i^A<\overline{R}$ and $\underline{R}<D_j^L-D_j^A$ for some $i$, $j$ with $D_{ij}>0$ is a sufficient condition for this inequality to hold. We prove it holds weakly for each realization of $\mathbf{q}_{-ij}\mathbf{p}$, and strictly for some. Since $\underline{R}\leq D_i^L-D_i^A<\overline{R}$, Bank $i$ must either default or be on the verge of defaulting whenever $\mathbf{q}_{i}\mathbf{p}=\underline{R}$. Thus $V_i^+(\mathbf{q}_{i}\mathbf{p}=\underline{R}, \mathbf{q}_{-i}\mathbf{p})=0$ for all $\mathbf{q}_{-i}\mathbf{p}$, and the second term on both the LHS and RHS is zero. Since debt repayments are weakly increasing in asset returns, it must be that $V^+_i(\mathbf{q}_{i}\mathbf{p}=\overline{R}, \mathbf{q}_{j}\mathbf{p}=\overline{R}, \mathbf{q}_{-ij}\mathbf{p})\geq V^+_i(\mathbf{q}_{i}\mathbf{p}=\overline{R},\mathbf{q}_{j}\mathbf{p}=\underline{R}, \mathbf{q}_{-ij}\mathbf{p})$, and the inequality always holds weakly. We have left to show that it holds strictly for some realization of $\mathbf{q}_{-ij}\mathbf{p}$. Set $\mathbf{q}_{-ij}\mathbf{p} = \mathbf{\overline{R}}$. Since all banks are solvent when they receive a high return, $V^+_i(\mathbf{q}_{i}\mathbf{p}=\overline{R}, \mathbf{q}_{j}\mathbf{p}=\overline{R}, \mathbf{q}_{-ij}\mathbf{p}=\mathbf{\overline{R}}) = \overline{R}+D_i^A-D_i^L>0$. Either $i$ defaults or not when $\mathbf{q}_{i}\mathbf{p}=\overline{R}$, $\mathbf{q}_{j}\mathbf{p}=\underline{R}$, $\mathbf{q}_{-ij}\mathbf{p}=\mathbf{\overline{R}}$. If it does, then the RHS is zero and the inequality holds strictly. If it does not, then $V_i^+(\mathbf{q}_{i}\mathbf{p}=\overline{R}, \mathbf{q}_{j}\mathbf{p}=\underline{R}, \mathbf{q}_{-ij}\mathbf{p}=\mathbf{\overline{R}}) = \overline{R}+d_i^A(\mathbf{v})-D_i^L<\overline{R}+D_i^A-D_i^L$ since Bank $j$ must default when $\mathbf{q}_{j}\mathbf{p}=\underline{R}$ and so cannot pay back its debt to $i$ in full. Thus the inequality holds strictly for Bank $i$ and independent portfolios is not part of an equilibrium. \eproof

\bigskip

\noindent{\bf Proof of Proposition \ref{correlation2}:}
Next, we show that there exists an equilibrium in which all banks invest in the same asset, such that their portfolio returns are perfectly correlated.
Given that all other banks are symmetric under the conjectured equilibrium, we look at the problem faced
by $i$. Note that if $i$ decides not to perfectly correlate its portfolio
then there must be at least one state in which its portfolio pays off but none of the others does,
and at least one state in which its portfolio does not pay off but all the others do.
The incentive condition boils down to
\begin{align*}
V_i(\mathbf{q}_{i}\mathbf{p}=\overline{R}, \mathbf{q}_{-i}\mathbf{p}=\mathbf{\overline{R}})^+ + V_i(\mathbf{q}_{i}\mathbf{p}=\underline{R},& \mathbf{q}_{-i}\mathbf{p}=\mathbf{\underline{R}})^+ \\
&\geq V_i(\mathbf{q}_{i}\mathbf{p}=\underline{R}, \mathbf{q}_{-i}\mathbf{p}=\mathbf{\overline{R}})^+
+V_i(\mathbf{q}_{i}\mathbf{p}=\overline{R}, \mathbf{q}_{-i}\mathbf{p}=\mathbf{\underline{R}})^+\\
\iff V_i(\mathbf{q}_{i}\mathbf{p}=\overline{R}, \mathbf{q}_{-i}\mathbf{p}=\mathbf{\overline{R}})^+ -V_i(\mathbf{q}_{i}\mathbf{p}=&\underline{R}, \mathbf{q}_{-i}\mathbf{p}=\mathbf{\overline{R}})^+ \\
&\geq V_i(\mathbf{q}_{i}\mathbf{p}=\overline{R}, \mathbf{q}_{-i}\mathbf{p}=\mathbf{\underline{R}})^+
-V_i(\mathbf{q}_{i}\mathbf{p}=\underline{R}, \mathbf{q}_{-i}\mathbf{p}=\mathbf{\underline{R}})^+.
\end{align*}
Thus, if this condition holds for all $i$, then all banks choosing perfectly correlated portfolios is an equilibrium.

We now prove that any of the followings is a sufficient condition: either (i) $i$ is not part of a cycle, or (ii) $D_i^L\leq \underline{R}$, or (iii) $ D_i^L-D_i^A \geq \underline{R}$. If $i$ is not part of a cycle, then how much debt payments it receives is completely independent of its own solvency status, and then of its portfolio: $d_i^A(\mathbf{q}_{i}\mathbf{p}=\overline{R}, \mathbf{q}_{-i}\mathbf{p}) = d_i^A(\mathbf{q}_{i}\mathbf{p}=\underline{R}, \mathbf{q}_{-i}\mathbf{p})$ for all $\mathbf{q}_{-i}\mathbf{p}$. If $i$ never defaults, then both the LHS and RHS are equal to $\overline{R}-\underline{R}$, and the inequality holds weakly. The worst case for the inequality if when $i$ defaults when $(\mathbf{q}_{i}\mathbf{p}=\underline{R}, \mathbf{q}_{-i}\mathbf{p}=\mathbf{\underline{R}})$ but not when $(\mathbf{q}_{i}\mathbf{p}=\underline{R}, \mathbf{q}_{-i}\mathbf{p}=\mathbf{\overline{R}})$. The LHS of the inequality is then equal to $\overline{R}-\underline{R}$, while the RHS is $\overline{R}+d_i^A(\mathbf{q}_{-i}\mathbf{p}=\mathbf{\underline{R}})-D_i^L$. However, for $i$ to default when $(\mathbf{q}_{i}\mathbf{p}=\underline{R}, \mathbf{q}_{-i}\mathbf{p}=\mathbf{\underline{R}})$, it must be that $\underline{R}+d_i^A(\mathbf{q}_{-i}\mathbf{p}=\mathbf{\underline{R}})-D_i^L<0$, and so the inequality holds.

If $D_i^L\leq \underline{R}$ (condition (ii)), then $i$ is always solvent. If $i$ is always solvent, it always pays back its debts in full to others, and so its own portfolio return cannot impact how much debt payments it receives from others. As above, both the LHS and RHS are equal to $\overline{R}-\underline{R}$, and the inequality holds weakly.

Finally, if  $ D_i^L-D_i^A \geq \underline{R}$ (condition (iii)), then $i$ either defaults or has zero equity value whenever $\mathbf{q}_{i}\mathbf{p}=\mathbf{\underline{R}}$. Then $V_i^+(\mathbf{q}_{i}\mathbf{p}=\underline{R}, \mathbf{q}_{-i}\mathbf{p})=0$ for all $\mathbf{q}_{-i}\mathbf{p}$, and only the first term on the LHS and RHS remain. Since bank values and debt repayments are both weakly increasing in portfolio returns, the inequality holds as well. \eproof

\bigskip

\noindent{\bf Proof of Proposition \ref{correlation3}:} Towards a contradiction, suppose that there exists an equilibrium in which banks do not perfectly correlate their portfolios. Let $N_k$ be the set of banks who choose asset $k$ in equilibrium. By assumption, there does not exist an asset $k$ such that $N_k = N$. Let $\underline{k}=\arg\min_{k:N_k\neq \emptyset} |N_k|$ and take any Bank $i\in N_{\underline{k}}$. Consider the deviation for Bank $i$ under which it invests fully in asset $\overline{k}=\arg\max_{k:N_k\neq \emptyset} |N_k|$. By definition, $|N_{\overline{k}}|$ of $i$ counterparties invest in asset $\overline{k}$ while only $|N_{\underline{k}}|-1<|N_{\overline{k}}|$ of them invest in asset $\underline{k}$. Such deviation does not change the correlation with portfolios of banks not in $N_{\overline{k}}\cup N_{\underline{k}}$. Indeed, for each possible realization of $(\mathbf{q}_{j}\mathbf{p})_{j\notin N_{\overline{k}}\cup N_{\underline{k}}}$, this deviation shifts probability mass $\theta(1-\theta)$ from states in which $p_i = \overline{R}$, $(\mathbf{q}_{j}\mathbf{p})_{j\in N_{\overline{k}}} = \mathbf{\underline{R}}$, $(\mathbf{q}_{j}\mathbf{p})_{j\in N_{\underline{k}}} = \mathbf{\overline{R}}$, to states in which $p_i = \overline{R}$, $(\mathbf{q}_{j}\mathbf{p})_{j\in N_{\overline{k}}} = \mathbf{\overline{R}}$, $(\mathbf{q}_{j}\mathbf{p})_{j\in N_{\underline{k}}} = \mathbf{\underline{R}}$. Similarly, it shifts  probability mass $\theta(1-\theta)$ from states in which $p_i = \underline{R}$, $(\mathbf{q}_{j}\mathbf{p})_{j\in N_{\overline{k}}} = \mathbf{\overline{R}}$, $(\mathbf{q}_{j}\mathbf{p})_{j\in N_{\underline{k}}} = \mathbf{\underline{R}}$, to states in which $p_i = \underline{R}$, $(\mathbf{q}_{j}\mathbf{p})_{j\in N_{\overline{k}}} = \mathbf{\underline{R}}$, $(\mathbf{q}_{j}\mathbf{p})_{j\in N_{\underline{k}}} = \mathbf{\overline{R}}$. Doing so shifts high returns to states in which strictly more banks also receive a high return, and so it constitutes a strictly profitable deviation whenever the value of a high return is strictly increasing in the number of other banks getting a high return. \eproof

\bigskip

\noindent{\bf Proof that all banks invest fully in the risky asset in Section \ref{sec:coreperiphery}}: Note that, in the state of the world in which $p_i=R$ for all $i$, all banks must be solvent in the best equilibrium irrespective of their exact portfolio. Indeed, their portfolio must have a value of at least $1+r$, which is enough to cover their liabilities to peripheral banks, and is thus enough to ensure solvency of all in the best equilibrium. So the only thing that matters from the point of view of Bank $i$ is the number of its counterparties that default in the adverse state of the world, even when $i$ pays back its debt. Call this number $X$.

There are two cases. When $m\leq k^R$, a bank that gets $p_i=R$ never defaults. Investing fully in the risky asset yields
\[\left[1-\frac{n_c(1-\theta)}{m}\right](R-D_0) + \frac{n_c(1-\theta)}{m}\frac{n_c-m}{n_c}(R-D_0-XD) = \theta (R-D_0) -\frac{(n_c-m)(1-\theta)}{m}XD.\]
Investing in the safe asset cannot improve over this. There are again, two sub-cases. Either a bank fully invested in the safe asset can sustain $X$ of its counterparties defaulting, or it cannot. In the first case, the associated payoff is then $1+r-D_0-\frac{n_c(1-\theta)}{m}XD$, which is worse than above since $\theta R > 1+r$. In the second case, Bank $i$ defaults in the adverse state of the world, such that its overall expected payoff is $[1-\frac{n_c(1-\theta)}{m}][1+r-D_0]$. Again, this does worse than investing fully in the risky asset since $R>D_0+XD$.

The second case is when $m>k^R$, that is when there can be a default cascade in the bad state of the world. Hence it is now possible that $X>m$, i.e. that some banks with high return default. There are again two sub-cases: either getting a high return $R$ allows Bank $i$ to sustain $X$ defaults (if $X\leq k^R$) or it doesn't. If it does, then investing in the risky asset yields the same payoff as above, and the same arguments apply. If getting $R$ is not enough to ensure $i$'s solvency, then fully investing in the risky asset yields
\[\left[1-\frac{n_c(1-\theta)}{m}\right][R-D_0].\]
But then investing in the safe asset cannot allow a bank to sustain $X$ defaults, and hence investing in the safe asset yields a lower total expected payoff of $[1-\frac{n_c(1-\theta)}{m}][1+r-D_0]$. Overall, investing in the risky asset is always a best response. \eproof
\bigskip

\noindent{\bf Proof of Proposition \ref{symmetric}}: Recall that when $m\leq k^R$, there is no default cascade: in the best equilibrium, all banks that get $p_i=R$ always remain solvent. Hence the only banks that default under laissez-faire are those with $p_i=0$. Furthermore, these banks can remain solvent only if they have some amount invested in the safe asset, i.e., only if they are regulated. If all banks are regulated, then the optimal limit on risky investments is $\bar{q}=1-D_0/(1+r)$, as this guarantees their solvency at minimal cost in terms of lost risk premium. Now suppose that only a subset of banks are regulated, and that $X$ banks remain unregulated. Then only a stricter restriction of $\bar{q}_X=1-(D_0+XD)/(1+r)$ on regulated banks guarantee their solvency when $p_i=0$.  But if such restriction improves over laissez-faire (i.e. if preventing the default of these banks when $p_i=0$ is worth imposing $\bar{q}_X$ on them) then only imposing $\bar{q}$ but on all banks yields an even greater social surplus as expected bankruptcy costs scale linearly with the number of unregulated banks. Hence the only macroprudential policy that can be optimal is symmetric, and imposes a reserve of $D_0/(1+r)$ on all banks.

This improves over laissez-faire if and only if
\begin{align*}
n_c\left[\theta R - \frac{D_0[\theta R - (1+r)]}{1+r}\right]>n_c[\theta R - (1-\theta)\chi]\iff \theta <\frac{\chi+D_0}{\chi+\frac{R}{1+r}D_0}.
\end{align*}

\bigskip

\noindent{\bf Proof of Proposition \ref{asymmetric}}:
For now, suppose that some regulation is optimal.

Suppose that all banks are symmetrically regulated to hold at most $q$ of the risky asset, for some $q<1$.

First we argue that when $m$ banks get 0 returns on the risky portfolio, it must be that they do not default.
To see this, note that the remaining banks who got a positive return on their risky investments have a total portfolio value of less than $R$ (given that $q<1$).
So, if the $m$ banks default, then since $m>k^R$ the remaining banks will also default, and so there is full contagion.   Thus, the regulation cannot have been optimal since it limited total expected returns but
had no impact on the defaults.

Thus, banks that get 0 returns must be solvent when all other banks are solvent.   This means that they must be able to pay their periphery banks.
Thus,
\[
(1-q)(1+r)\geq D_0
\]
It then follows that $(1-q)=\frac{D_0}{1+r}$, as any lower bound simply lowers return
without affecting solvencies.
This establishes (i).

The total expected net return from  investments by  core banks under such regulation (accounting for what they owe their depositors) is
\begin{equation}
\label{asymmetric1}
n_c
\left[ (1-q)(1+r) + q \theta R  - D_0 \right]  = n_c
\left[\theta R-\frac{D_0[\theta R - (1+r)]}{1+r} -D_0 \right].
\end{equation}

Consider the following alternative regulation.
Note that $k^r= \lfloor \frac{1+r-D_0}{D}\rfloor \geq 1$ is the maximum number of couterparty defaults that a bank fully
invested in the risk-free asset can sustain without defaulting itself.
Allow $k^r$ banks to invest freely, and regulate the remaining $n_c-k^r$ to hold enough of the safe asset to be able to sustain $k^r$ defaults: $(1-q)(1+r) = D_0+k^rD \iff q=  1-(D_0+k^rD)/(1+r)$.  Under this, at most $k^r$ banks ever default together and so there is no contagion.

Note that $k^r$ is the maximum number of correlated core defaults that the economy can take without having contagion within the core, because $k^r<m$ and so at least
one of the remaining regulated/non-defaulting banks will have no return on its risky investments, and the most it could withstand would be if it is fully invested in the risk-free asset.

Under this alternative, the expected net return from the investments is
\begin{equation}
\label{asymmetric2}
k^r \left(\theta R  - (1-\theta) \chi\right) +\left( n_c-k^r \right)\left(\theta R - (D_0+k^r D )\frac{\theta R - (1+r)}{1+r}\right)
-
n_cD_0,
\end{equation}
where the last term accounts for what they owe depositors.

Note that (\ref{asymmetric1}) is less than (\ref{asymmetric2}) whenever
\begin{align*}
(1-\theta)\chi<\frac{\left[ n_c -\left( n_c -k^r \right)(D_0+k^r D)\right]}{k^r}\left(\frac{\theta R - (1+r)}{1+r}\right).
\end{align*}

This is satisfied whenever $n_c>\left( n_c -k^r \right)(D_0+k^r D)$ and
$\theta >\underline{\theta}$.   This establishes that, if these conditions are satisfied, then the optimal regulation cannot be symmetric,
which establishes (ii).

Finally, let us consider conditions under which some regulation is optimal.

We establish conditions under which symmetric regulation to $(1-q)=\frac{D_0}{1+r}$ improves over no regulation.

No regulation is worse than symmetric regulation if and only if:
\[
n_c\left[\theta R -  \chi  \frac{n_c}{m}(1-\theta)  -D_0\right]< n_c\left[ D_0 + \left(1-\frac{D_0}{1+r}\right)\theta R -D_0 \right],
\]
where the term $ \frac{n_c}{m}(1-\theta)$ on the left hand side is the expected probability that some banks default, in which case $m$ default and there is total contagion.
This simplifies to:
\[
\theta< \overline{\theta} = \frac{\frac{n_c}{m}\chi+D_0  }{\frac{n_c}{m}\chi+\frac{R}{1+r}D_0 }  <1 ,
\]
which is equivalent to (\ref{thetahigh}).
This completes the proof of the proposition.

Finally, we check that there exist parameter values under which optimal regulation must be asymmetric, which is the case when:
\begin{align*}
\overline{\theta}> \underline{\theta}&\iff
\frac{\frac{n_c}{m}\chi+D_0}{\frac{n_c}{m}\chi+\frac{R}{1+r}D_0}> \frac{ n_c-\left(n_c-k^r\right)(D_0+k^rD)+k^r\chi}{\left[n_c-\left(n_c-k^r\right)(D_0+k^rD)\right]\frac{R}{1+r} + k^r\chi}\\
&\iff \frac{n_c}{m}\left[n_c-\left(n_c-k^r\right)(D_0+k^rD)\right]\geq k^rD_0.
\end{align*}
\eproof

\bigskip

\noindent{\bf Proof of Proposition \ref{prop:optregulation_NS}}:
When $m\leq k $,  there is no contagion in the best equilibrium: only banks with return $p_i=0$ default while all others remain solvent. As argued in the proof of Proposition \ref{symmetric}, there are complementarities in ensuring the solvency of banks jointly as then the necessary reserve requirements are lower: if a bank's counterparties are regulated and remain solvent, then the safe investment that that bank needs to remain solvent itself is lower. So either all banks in a tier are regulated or none of them are.

Consider first the least restrictive regulation that ensures all core banks always remain solvent. This is achievable by setting a (potentially tier-dependent) cap $\overline{q}^\ell$ on the risky investments of banks in tier $\ell$, for each $\ell$, such that
\[(1-\overline{q}^\ell)(1+r) + (x^\ell-1)D = (x^\ell-1)D + D_0,\]
where $x^\ell$ is the number of core counterparties of banks in tier $\ell$. This simplifies to
\[  \overline{q}^\ell = 1-\frac{D_0}{1+r},\]
This improves over laissez-faire if and only if
\[\theta \overline{q}^\ell R+ (1-\overline{q}^\ell)(1+r)> \theta R - (1-\theta) \chi ,\]
which simplifies to
\[[\theta R - (1+r)]\frac{D_0}{1+r}<(1-\theta) \chi \iff \theta <\frac{\chi+D_0}{\chi+\frac{R}{1+r}D_0}.\]
Only regulating some tier(s) but not others cannot be optimal, as that would require a stricter restriction on investments of the regulated tier(s) for the same expected benefits. \eproof

\bigskip

\noindent{\bf Proof of Proposition \ref{prop:optregulation_NS2}}: As argued in the proofs of previous propositions, an optimal symmetric regulation either imposes no restriction on core banks or requires all of them to hold $1-\overline{q}=D_0/(1+r)$ in the risk-free asset. No bank ever default under the latter, and social surplus per bank equals
\[(1-\overline{q})(1+r) + \overline{q} \theta R   =
\theta R-\frac{D_0[\theta R - (1+r)]}{1+r}.\]
We build an alternative asymmetric regulation that imposes more restrictions on higher-tier banks. Note that the subnetwork of core banks has a nested-split structure, which means that it can be partitioned into two sets: one set of banks that form a clique and an independent set (i.e., a set of core banks that are only exposed to banks from the clique but not to each other). That is, there exists a threshold $\overline{l}$ such that banks in tiers $\overline{l}+1, \dots L$ are all exposed to each other and form a clique, whereas banks in tiers $1,\dots \overline{l}$ are only exposed to banks in tiers $l>\overline{l}$. Consider the following regulation. All banks in the clique are fully regulated: they invest only in the safe asset, and so can sustain up to $k^r$ defaults. For now, suppose that the independent set contains at least $k^r$ banks. Arbitrarily choose $k^r$ banks from that set and let them invest freely. Finally,  impose a reserve requirement of $D_0/(1+r)$ on the remaining banks from the independent set. Note that the latter are only exposed to the clique, and so if the clique remains solvent, so do they. The clique remains solvent if at most $k^r$ of its counterparties default. Hence only the $k^r$ that invest freely default when they get $p_i=0$, and the regulated clique insulates the rest of the network from these defaults.

Letting $N^{clique}\equiv \sum_{\ell>\overline{\ell}}n_c^\ell$ and $N^{ind}\equiv \sum_{\ell\leq\overline{\ell}}n_c^\ell$ denote the number of core banks in the clique and independent set, respectively, the above policy yields a total surplus of
\[N^{clique}(1+r) + (N^{ind}-k^r)\left[\theta R \frac{r}{1+r}+1\right]+k^r[\theta R - (1-\theta)\chi].\]
This is better than the above symmetric policy whenever
\begin{align*}
N^{clique}(1+r) + (N^{ind}-k^r)\left[\theta R \frac{(1+r-D_0)}{1+r}+D_0\right]&+k^r[\theta R - (1-\theta)\chi]
\\&>(N^{clique}+N^{ind})\left[\theta R \frac{(1+r-D_0)}{1+r}+D_0\right],
\end{align*}
which simplifies to
\[\theta >\frac{\chi+D_0-\frac{(1+r-D_0)}{k^r}N^{clique}}{\chi+\frac{R}{1+r}(D_0-\frac{(1+r-D_0)}{k^r}N^{clique})}.\]
  \eproof

\bigskip

\setcounter{page}{0}\thispagestyle{empty} \newpage

{\bf Supplemental (Online) Appendix for ``Optimal Regulation and Investment Incentives in
Financial Networks''  by Jackson and Pernoud}

\section{Additional Discussions} \label{appendixB}

\subsection{General Contracts Between Financial Organizations}
\label{generalf}

Accounting for both debt and equity contracts can be important since they can lead to different
incentives for investment and are both prevalent in practice, and thus in this section we examine that model and comment on extensions of the results of the paper to this case.\footnote{
Some banks' balance sheets are mostly debt-based, involving deposits,
loans, collateralized debt obligations, and other sorts of
debt-like instruments. Other organizations are however hybrids that involve substantial portions of both types of
exposures. For instance, some of the largest banks in France, including the French Cr\'edit
Agricole Group, consist of a network of numerous banks with significant cross-ownership between
each other, sometimes in excess of fifty percent. This also captures interactions within financial
conglomerates, as organizations
within the same group often have both debt-like and equity-like exposures
to each other (e.g., see the report of the Basel Committee on Banking Supervision \citeyearpar{BISreport_conglomerates}
for a description of intra-group exposures).
In addition, many syndicated investment portfolios, as well as venture capital firms and many other
sorts of investment funds, typically hold equity in other organizations.
}
Thus, here we show how the model and some of the results extend beyond debt interdependencies.

\subsubsection{Debt and Equity}

We begin with the extension to include equity interdependencies.
In addition to the debt holdings in the paper, if Bank $i$ owns an equity share in Bank $j$, it is represented as
a claim that $i$ has on a fraction $S_{ij}\in [0,1]$ of $j$'s value. Let $\mathbf{D}$ the matrix of debt claims and $\mathbf{S}$ the matrix of
equity claims.
A bank cannot have a debt or equity claim on itself, and so
$S_{ii}=D_{ii}=0$ for all $i$.
Equity shares sum to one, so whatever share is not owned by other
banks accrues to some outside investor:
$S_{0i}=1-\sum_{j\neq 0} S_{ji}$.\footnote{Here, we simply model
any fully
privately held banks as having some outside investor owning
an equity share equal to 1.
This has no consequence, but allows us to trace where
all values ultimately accrue.}
The one exception is that no shares are held in the outside investors so that $S_{i0}=0$ for all $i$ -- shares held by banks in private enterprises are modeled via the $p_i$'s.

Finally, in order to ensure that the economy is well-defined,
we presume that there exists a
directed equity path from every bank to some private investor (hence to node 0).
This rules out nonsensical cycles where each bank is {\sl entirely}
owned by others
in the cycle, but none are owned in any part by any private investor.
For instance
if $i$ owns all of $j$ and vice versa, then there is no solution to equity values.

\begin{equation}
\label{debtvalue2}
d_{ij}(\mathbf{V}) =   \frac{D_{ij}}{\sum_h D_{hj}} \max\left(\sum_k q_{jk} p_k+ d_{j}^A(\mathbf{V}) +\sum_{h}S_{jh} V_h^+ - \beta_j(\mathbf{V},\mathbf{p}),0\right),
\end{equation}
where $d_{j}^A(\mathbf{V})\equiv \sum_{h}d_{jh}(\mathbf{V})$ and $V_h^+\equiv\max \{V_h, 0\}$.

Equity holders of Bank $j$ do not receive any payment when $j$ is insolvent:
\[
S_{ij}V_j^+= 0.
\]
This reflects limited liability: the value to $i$ of its equity holding in $j$ cannot be negative.

A bank defaults whenever the value of its assets does not cover its liabilities. The bankruptcy costs it incurs are then
\begin{equation}
\label{bankrupt2}
  b_i\left(\mathbf{V},\mathbf{p}\right) =
  \begin{cases}
   0 & \text{ if  } \sum_k q_{ik} p_k +\sum_{j} S_{ij} V_j^+  +d_i^A(\mathbf{V})  \geq D_i^L \\
   \beta_i\left(\mathbf{V},\mathbf{p}\right) & \text{ if  } \sum_k q_{ik} p_k +\sum_{j} S_{ij} V_j^+  +d_i^A(\mathbf{V})  <D_i^L.
  \end{cases}
\end{equation}

The book value of a bank is:
$$
V_i=\sum_k q_{ik} p_k +\sum_{j} S_{ij} V_j^+  +   d_j^A(\mathbf{V})-D^L_i  - b_i(\mathbf{V},\mathbf{p}),
$$
where $b_i(\mathbf{V},\mathbf{p})$ is defined by (\ref{bankrupt2}) and $ d_j^A(\mathbf{V})$ by (\ref{debtvalue2}).
Banks' equilibrium values solve:
\begin{equation}
\label{eq-bookvalue-bankruptcy2}
\mathbf{V}=(\mathbf{I}-\mathbf{S}(\mathbf{V}))^{-1} \left(\left[ \mathbf{q} \mathbf{p} + \mathbf{d}^A(\mathbf{V})  - \mathbf{D}^L\right] - \mathbf{b}(\mathbf{V},\mathbf{p})\right),
\end{equation}
where  $\mathbf{S}(\mathbf{V})$ reflects the fact that $S_{ij}(\mathbf{V}) = 0 $ whenever $j$ defaults, and
equals $S_{ij}$ otherwise.

Note that the value of a bank is weakly increasing in that of others in the network, which ensures the existence of a solution to equation (\ref{eq-bookvalue-bankruptcy2}), as discussed above.   This is implied by Tarski's fixed point theorem, since bank values depend monotonically on each other and are bounded.\footnote{They are bounded above by $\bar{\textbf{V}} = (\textbf{I} - \textbf{S})^{-1}[\textbf{q}\textbf{p}+\textbf{D}^A]$.  (Those are unique and exist, since $(\mathbf{I}-\mathbf{S})^{-1}$ exists under our assumption that there are equity paths from all banks to node 0 and none going back.)  }

The case in which $S_{ij}={0}$ for all $i>0,j>0$ nests the models of Eisenberg and Noe
\citeyearpar{eisenbergn2001}, Gai and Kapadia \citeyearpar{gaik2010}, and Jackson and Pernoud \citeyearpar{jacksonp2020}. The special
case in which ${D_{ij}}={0}$ for all $i>0,j>0$ corresponds to Elliott,
Golub, and Jackson \citeyearpar{elliottgj2014}.

The actual market value of a bank is the value of the bank that accrues to final (i.e., outside) shareholders, which is $S_{0i} V_i$.
We show next that, given this definition of bank values, what eventually accrues to outside shareholders precisely equals the total value of primitive assets net of bankruptcy costs in the economy.
It is less cluttered to work the book values $V_i$s as it makes debts transparent, but a simple rescaling by $S_{0i}$ yields the corresponding market values.  Once rescaled, this is equivalent to the derivation of market values in Elliott,
Golub, and Jackson \citeyearpar{elliottgj2014}, for instance.\footnote{Here $S_{0i}$ is the same as $\widehat{C}_{ii}$ in their notation.}

\subsubsection{Consistency of Bank Values}\label{consistencyV}

From equation \ref{eq-bookvalue-bankruptcy} that bank book values solve
\[\mathbf{V}= \mathbf{q} \mathbf{p}  + \mathbf{d}^A(\mathbf{V}) + \mathbf{S}\mathbf{V}^+ - \mathbf{D}^L - \mathbf{b}(\mathbf{V},\mathbf{p}).\]
Written this way, the book value times the outside share percentage of a publicly held organization coincides with its market value.  We verify that in this section.

Indeed as argued by both Brioschi, Buzzacchi, and Colombo \citeyearpar{brioschibc1989} and  Fedenia, Hodder, and Triantis \citeyearpar{fedeniaht1994}, the ultimate (non-inflated) value of an organization to the economy -- what we call the ``market'' value --  is well-captured by the equity value of that organization that is held by its \emph{outside} investors -- or the {\sl final} shareholders who are private entities that have not issued shares in themselves. This value captures the flow of real assets that accrues to final investors of that organization.
We know argue that this is characterized by the above values, when weighted by the outside shares and accounting for debts and bankruptcy costs.

Summing the book values gives
\begin{align*}
\sum_{i\neq 0}V_i=\sum_{i\neq 0}\sum_k q_{ik} p_k+ \sum_{i\neq 0}  d_i^A(\mathbf{V})+ \sum_{i\neq 0}\sum_{j\neq 0} S_{ij} V_j  -\sum_{i\neq 0} D_i^L -b_i(\mathbf{V},\mathbf{p})
\end{align*}
It is helpful to decompose debtors $j$ into three categories: (i) those that are solvent and pay back in full $d_{ij}(\mathbf{V})=D_{ij}$, (ii) those that default but still have enough assets net of bankruptcy costs to make partial payments $d_{ij}(\mathbf{V}) >0$, and, (iii) those that default fully $d_{ij}(\mathbf{V})=0$. The latter are banks $j$ with $\mathbf{q}_j\mathbf{p}+d_j^A(\mathbf{V}) + \sum_k S_{jk}V_j^+ - b_j(\mathbf{V}, \mathbf{p})<0$, or equivalently, $V_j+D_j^L<0$.
The above expression can then be rewritten as
\begin{align*}
\sum_{i\neq 0}V_i =\sum_{i\neq 0}\sum_k q_{ik} p_k+ \sum_{i\neq 0}  \sum_{j\text{ solvent}}D_{ij}+ \sum_{i\neq 0}\sum_{\substack{j\text{ insolvent}\\V_j+D_j^L\geq 0}}\frac{D_{ij}}{D_j^L}[V_j+D_j^L]+ \sum_{j\text{ solvent}}(1-S_{0j}) V_j\\  -\sum_{i\neq 0} D_i^L -\sum_{i\neq 0}b_i(\mathbf{V},\mathbf{p}) \\[7pt]
\iff \sum_{i\neq 0}V_i =\sum_{i\neq 0}\sum_k q_{ik} p_k+ \sum_{i\neq 0}  \sum_{j\text{ solvent}}D_{ij}+\sum_{\substack{j\text{ insolvent}\\V_j+D_j^L\geq 0}}[V_j+D_j^L-d_{0j}(\mathbf{V})]+\sum_{j\text{ solvent}}(1-S_{0j}) V_j \\  -\sum_{i\neq 0} D_i^L -\sum_{i\neq 0}b_i(\mathbf{V},\mathbf{p}) \\[7pt]
\iff  d_0^A(\mathbf{V})-D_0^L+\sum_{j\text{ solvent}}S_{0j} V_j+\sum_{\substack{j\text{ insolvent}\\V_j+D_j^L< 0}}[V_j+D_j^L]=\sum_{i\neq 0}\sum_k q_{ik} p_k-\sum_{i\neq 0}b_i(\mathbf{V},\mathbf{p}).
\end{align*}
Recall that banks with $V_j+D_j^L<0$ are insolvent and whose bankruptcy costs exceed the value of their assets. These banks are not able to repay anything to their creditors.   Their excess failure costs are ultimately born by outside investors, and so we need to account for those too.  Thus, the overall value that eventually accrues to outside investors is $V_0 = d_0^A(\mathbf{V})-D_0^L+\sum_{j\text{ solvent}}S_{0j} V_j+ \sum_{\substack{j\text{ insolvent}\\V_j+D_j^L< 0}}[V_j+D_j^L]$.
It follows that
$$V_0 = \sum_{i\neq 0}\sum_k q_{ik} p_k-\sum_{i\neq 0}b_i(\mathbf{V},\mathbf{p}).$$
The total value accruing to all private investors---the market equity value plus net debts to outside investors---
equals the total value of primitive investments net of bankruptcy costs.

\subsubsection{Risky Portfolio Choices with Debt and Equity}

We now show how the results of Proposition \ref{risky} extend to the case with equity holdings.

Shareholders maximize $\mathbb{E}_\mathbf{p} \left[V_i(q_i, q_{-i};\mathbf{p})^+\right]$, that is they solve
\[
\max_{q_i\in [0,1]} \, \mathbb{E}_\mathbf{p} \left[ \left( q_i p_i + (1-q_i)(1+r)
+ \sum_{j\neq i} S_{ij} V_j(q_i, q_{-i};\mathbf{p})^+
+ d_i^{A}(\mathbf{V}(q_i,q_{-i};\mathbf{p})) -D_i^{L} \right)^+ \right].
\]
Here we do not allow for short sales (of either asset), which limits $q_i\in [0,1]$.  As will be clear below, the analysis extends to short sales: the bank would choose to short the risk-free asset and leverage its investment in the risky asset.

The value of $i$'s interbank equity and debt claims $ S_{ij}V_j(q_i, q_{-i};\mathbf{p})^+$  and $d_i^{A}(\mathbf{V}(q_i,q_{-i};\mathbf{p}))$ depends on banks' equilibrium values $\mathbf{V}$, and hence on the full vector of investment decisions $(q_i,q_{-i})$. Network interdependencies can generate market discipline, as a risky investment of Bank $i$ can trigger losses for some of its counterparties and feedback to itself.
We give sufficient conditions on the network structure for such market discipline not to arise endogenously.

We  say that $i$ is ``{\sl at risk of generating discontinuous feedback},''
if $i$ is part of a dependency cycle that involves \emph{both}
debt and equity, and that begins with a bank having an
equity claim on $i$. Formally, $\exists i_0, \dots i_K$ for some $K$
such that: $i_0=i_K=i$, $S_{i_1i_{0}}>0$, and $D_{i_{l}i_{l-1}}>0$ or $S_{i_{l}i_{l-1}}>0$
for each $l\geq 1$ with at least one $l$ such that $D_{i_{l}i_{l-1}}>0$.
Note that this is a condition on the network structure, which is a primitive of our model.

\begin{proposition}\label{risky2}
If Bank $i$ is not at risk of generating discontinuous feedback then it invests fully in the risky portfolio.
\end{proposition}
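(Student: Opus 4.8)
The plan is to adapt the argument behind Proposition~\ref{risky}, now tracking the one new channel that equity holdings create: when $i$ sits on a dependency cycle, a change in $i$'s portfolio can loop back into $i$'s own balance sheet. Fix the other banks' investments $\mathbf{q}_{-i}$ and, given a realization $\omega$ of every return source other than $i$'s risky asset, let $F(x,\omega)$ be $i$'s best-equilibrium equity value $V_i^+$ when $i$'s realized portfolio value is $x=q_{ir}p_r+(1-q_{ir})(1+r)$. Bank $i$'s objective is $\mathbb{E}\big[F(q_{ir}p_r+(1-q_{ir})(1+r),\omega)\big]$, and it suffices to show this is nondecreasing in $q_{ir}$, so that $q_{ir}=1$ is optimal. (As in Proposition~\ref{risky}, adding atomlessness makes the relevant inequalities strict once $i$ is solvent with positive probability, which holds because $\mathbb{E}[p_r]>D_i^L$.) First I would record two properties of $F(\cdot,\omega)$ that do not use the hypothesis: it is nondecreasing in $x$ (the greatest fixed point of a monotone system is monotone in its parameters; in particular the set of $x$ at which $i$ is solvent is an up-set $[x^*,\infty)$ and $F\equiv 0$ below $x^*$), and where $i$ is solvent $F(x,\omega)=x+d_i^A+\sum_j S_{ij}V_j^+-D_i^L$.

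The core of the proof is to use the ``not at risk of generating discontinuous feedback'' hypothesis to show that $F(\cdot,\omega)$ is \emph{continuous} in $x$. The structural fact to exploit is that a solvent bank pays its debts in full, so on the $i$-solvent region a change in $x$ leaves $i$'s outgoing debt payments unchanged and can reach the rest of the network only through equity claims held \emph{in} $i$; and that bankruptcy-cost discontinuities travel along debt links (a newly insolvent debtor's repayment drops by a fixed amount) but never along equity links (the equity payment $V_j^+$ is continuous through $j$'s solvency boundary, since the downward jump in $V_j$ sits below zero). Consequently any discontinuity of $F(\cdot,\omega)$ --- necessarily an upward jump, as $F$ is nondecreasing --- would require a dependency cycle through $i$ that begins with a bank holding an equity claim on $i$ and contains at least one debt link, which is exactly the configuration the hypothesis forbids. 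Hence $F(\cdot,\omega)$ is continuous.

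I would then upgrade this to \emph{convexity}. The same bookkeeping shows that on the $i$-solvent region $d_i^A$ is independent of $x$ and the dependence of $\sum_j S_{ij}V_j^+$ on $x$ runs entirely through \emph{pure-equity} subnetworks (any cycle back to $i$ that begins with an equity claim on $i$ must, under the hypothesis, be pure equity). In such a subnetwork, with limited liability, the best-equilibrium $V_j^+$ is a maximum of increasing affine functions of $x$: within a fixed solvency configuration the value flows are affine in $x$ with nonnegative slope, and as $x$ rises the network only heals (more banks become solvent, activating more positive feedback paths), so the slopes are nondecreasing. Thus $\sum_j S_{ij}V_j^+$, and hence $F$, is convex and increasing where $i$ is solvent; together with $F\equiv 0$ on the initial segment $(-\infty,x^*)$ and continuity at $x^*$, $F(\cdot,\omega)$ is convex and nondecreasing on all of $\mathbb{R}$.

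Finally I would differentiate the objective: $\frac{d}{dq_{ir}}\mathbb{E}\big[F(q_{ir}p_r+(1-q_{ir})(1+r),\omega)\big]=\mathbb{E}\big[F_x(\cdot)\,(p_r-(1+r))\big]$, with $F_x\ge 0$ by monotonicity. Convexity in $x$ together with the fact that a healthier $\omega$ raises the feedback amplification makes $F_x(\cdot)$ nondecreasing in $p_r$; since $p_r-(1+r)$ is also nondecreasing in $p_r$, the covariance (Chebyshev sum) inequality gives $\mathbb{E}[F_x(\cdot)(p_r-(1+r))]\ge \mathbb{E}[F_x(\cdot)]\,\mathbb{E}[p_r-(1+r)]\ge 0$, using $\mathbb{E}[p_r]\ge 1+r$. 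So the objective is nondecreasing in $q_{ir}$ and $q_{ir}=1$ is optimal, establishing the claim. The main obstacle is the convexity step: one must check that the piecewise-affine feedback from pure-equity cycles has nondecreasing slopes in $x$ (not merely that it is continuous), and, throughout, keep precise track of the direction of propagation --- equity out of a solvent $i$, debt links to carry any discontinuity back in --- to see that the cycle ruled out by the hypothesis is exactly the one that would permit market discipline.
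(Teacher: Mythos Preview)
The paper states Proposition~\ref{risky2} in the online appendix but does not supply a separate proof; it is presented as an extension of Proposition~\ref{risky} with the understanding that the same $\varepsilon$-increment argument adapts. Your proposal makes explicit the one genuine obstacle to that adaptation: with equity, even on the set where $i$ is solvent under both $q_i$ and $q_i'$, the change in $V_i^+$ is no longer simply $\varepsilon(p_r-(1+r))$ but $\varepsilon(p_r-(1+r))$ times an amplification factor coming from pure-equity feedback, and that factor depends on $p_r$. Your route---establish that $F(\cdot,\omega)$ is continuous and convex in $x$ under the hypothesis, then use the Chebyshev covariance inequality on $F_x$ and $p_r-(1+r)$---is the natural way to close this, and it is correct. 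The structural reading of the hypothesis is exactly right: on the $i$-solvent region, $d_i^A$ cannot vary with $x$ (any such variation would trace a cycle beginning with an equity claim on $i$ and containing the debt link $D_{ij}$, which is precisely what is ruled out), and the residual dependence through $\sum_j S_{ij}V_j^+$ travels along pure-equity paths only, where $V_j^+$ is continuous through solvency boundaries.

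The one step that deserves more than a sentence is the joint monotonicity of $F_x$ in $p_r$. Convexity in $x$ follows because, as $x$ rises, the best-equilibrium solvency set in the pure-equity subnetwork can only grow (monotonicity of the greatest fixed point), and each newly solvent bank activates an equity link that strictly raises the feedback slope; so $F$ is piecewise affine with nondecreasing slopes. The supermodularity piece---that a healthier $\omega$ also raises $F_x$---rests on the same mechanism (higher $\omega$ only adds banks to the solvent set, never removes them), but you assert rather than prove it. Once you note that both $x$ and $\omega$ move the solvency set monotonically and that the amplification $((\mathbf{I}-\mathbf{S}_\sigma)^{-1})_{ii}$ is nondecreasing in the solvent set $\sigma$, the argument is complete. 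This is somewhat more machinery than the paper's treatment of Proposition~\ref{risky}, but it is needed here: the direct $\varepsilon$-increment bound from that proof does not control the amplified losses on the low-$p_r$ side without the covariance step.
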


Generating discontinuous feedback means that, by making a safer investment,
Bank $i$ can prevent another bank's default, which triggers debt repayments that come back to benefit $i$.
Thus, this requires that $i$'s outcome can lead another bank to default while $i$ is still solvent, and
also that the other bank owes $i$ a sizeable sum.
Incentives to take (excessively) risky positions can be mitigated by
this sort of interdependencies.  However, that requires an extreme feedback
effect in which there {\sl must} be a nontrivial chance of driving a
counterparty in whom $i$ has a large stake into bankruptcy {\sl without} having $i$ become bankrupt.
In Section \ref{countervail}, we provide an example in which Lemma \ref{risky} does not apply, and discuss how
this example corresponds to the case in which a bank has the least incentives to take risks.

Proposition \ref{correlation1} extends directly to this more general model that allows for equity interdependencies. Furthermore, observe also that the inequality in Proposition \ref{correlation1} always holds weakly for a bank that is not at risk of discontinuous feedback. Indeed, for it to fail to hold, a Bank $i$ must gain more by having a high return when others have low returns. This can only be the case if such a high return for Bank $i$ can prevent others' defaults and feed back to $i$ in a way that strictly increases $i$'s net payoff. So $i$ would have to belong to a dependency cycle. Furthermore, that cycle must start with an equity claim on $i$, as otherwise $i$ cannot receive more from its counterparties than what it pays to them itself when $\mathbf{p}_{-i}=\mathbf{0}$.\footnote{Formally, for the inequality  in Proposition \ref{correlation1} not to hold, it must be that $V^+_i(p_i=R_i, \mathbf{p}_{-i}=\mathbf{R}_{-i}) -V^+_i(p_i=0, \mathbf{p}_{-i}=\mathbf{R}_{-i})< V^+_i(p_i=R_i, \mathbf{p}_{-i}=\mathbf{0})$. This can only be the case if both $V^+_i(p_i=R_i, \mathbf{p}_{-i}=\mathbf{0})>0$ and $V^+_i(p_i=0, \mathbf{p}_{-i}=\mathbf{R}_{-i})>0$, that is, if $i$ remains solvent when it has a low return but all others have high returns and when it is the only one with a high return. The condition is then
\begin{align*}
\sum_j S_{ij}[V_j^+(p_i=R_i, \mathbf{p}_{-i}=\mathbf{R}_{-i}) - V_j^+(p_i=0, \mathbf{p}_{-i}=\mathbf{R}_{-i}) ]+d_i^A(p_i=R_i, \mathbf{p}_{-i}=\mathbf{R}_{-i})-d_i^A(p_i=0, \mathbf{p}_{-i}=\mathbf{R}_{-i})\\
<\sum_j S_{ij}V_j^+(p_i=R_i, \mathbf{p}_{-i}=\mathbf{0})+d_i^A(p_i=R_i, \mathbf{p}_{-i}=\mathbf{0})-D_i^L.
\end{align*}
Recall that the left-hand side is weakly positive. If $i$ is not at risk of discontinuous feedback, then his portfolio return can \emph{only} impact the value of its assets on others by changing how much $i$ repays its counterparties. When $i$ is the only one with a high return, the total value of its debt/equity claims on others  $\sum_j S_{ij}V_j^+(p_i=R_i, \mathbf{p}_{-i}=\mathbf{0})+d_i^A(p_i=R_i, \mathbf{p}_{-i}=\mathbf{0})$ cannot be more than what it itself pays back to them $D_i^L$. Thus the RHS cannot be strictly positive, and the inequality cannot hold.
} For a bank not to want to correlate its portfolio, it must hence to be at risk of
discontinuous feedback. Financial networks in which banks want to correlate
their portfolios are then also those in which they want to take on as much risk
as possible (Lemma \ref{risky}), and vice versa.

\subsubsection{Centralities with Debt and Equity}

\[
NFC_i(\mathbf{q},\mathbf{q}_i';\mathbf{D},\mathbf{S})=\mathbb{E}_{\mathbf{p}} \left[
-\sum_j \left(b_j(\mathbf{V}(\mathbf{q}),\mathbf{p}) \right)- b_j (\mathbf{V}(\mathbf{q}_i', \mathbf{q}_{-i}),\mathbf{p}) \right],
\]
This is the total impact on the economy  that comes
from a change in $i$'s investment strategy, based on the bankruptcy costs that are incurred.
This is {\sl net} since it is the impact beyond the direct change in $i$'s portfolio value.
If there are no changes in bankruptcy costs, then the net financial centrality of $i$ is 0.

Another important concept is the impact of guaranteeing to bailout a particular bank.
Define a bank's {\sl bailout centrality} to be
\[
BC_i(\mathbf{q};\mathbf{D},\mathbf{S})= \mathbb{E}_{\mathbf{p}} \left[
\sum_j  b_j\left(\mathbf{V},\mathbf{p} \right) - b_j\left(V_i^+, \mathbf{V}_{-i},\mathbf{p}\right) \right],
\]
where the $\mathbf{V}_{-i}$ in $b_j\left(V_i^+, \mathbf{V}_{-i},\mathbf{p}\right)$ are calculated presuming that $i$ does not default on any payments and has value $V_i^+$,
while the $\mathbf{V}$ in $ b_j\left(\mathbf{V},\mathbf{p} \right)$ are not.
This is the total difference in overall bankruptcy costs if a bank is insured by the government and bailed out whenever it becomes insolvent compared to
a world in which it is left to fail.\footnote{Note that a bank is bailed out up to the point where it is made solvent, so the value of to its shareholders remains at zero (if $V_i<0$ then $V_i^+=0$).}

With equity, cascades do not follow direct default chains but can skip a bank --
that is,  Bank $k$ could own $j$ who owns $i$.  Even if $j$ does not default, its value could go down
if $i$ defaults, which could indirectly cause $k$ to default. Hence with equity we can have indirect
failures, whereas with debt there must only be chains of direct failures.

In a network that only involves debt, and where investment portfolios include a risk-free asset,  then  Bailout Centrality is equal to Net Financial Centrality where $q_i'$ is set to have enough of the risk free asset to guarantee $i$'s solvency.

\subsubsection{An Example with Countervailing Incentives}\label{countervail}

\
 There are situations in which debt combined with equity can offer a
 countervailing incentive that leads to less risky
 investments, if there is a special sort of discontinuous feedback.
 As suggested by Lemma \ref{risky}, such a feedback effect only arises if the bank is part of a cycle that involves \emph{both} equity claims and debt claims. We illustrate this effect in the following example.  Essentially, it has to be that a low portfolio return would not directly cause the bank to become insolvent, but would instead lead one of its equity-holding counterparties to become insolvent, which would then lead to a loss of a payment back to the original bank.
\begin{figure}[!h]
\begin{center}
\begin{tikzpicture}[scale=1]
\definecolor{chestnut}{rgb}{0.8, 0.36, 0.36};
\definecolor{afblue}{rgb}{0.36, 0.54, 0.66};
\foreach \Point/\PointLabel in {(0,0)/1, (3,0)/2 }
\draw[ fill=afblue!40] \Point circle (0.35) node {$\PointLabel$};
\draw[<-, thick] (0.4,0.4) to [out=40,in=135] node[midway,above]  {$D_{12}=d$} (2.6,0.4);
\draw[->, thick] (0.4,-0.4) to [out=-40,in=-135] node [ midway, below]  {$S_{21}=s$} (2.6,-0.4);
  \end{tikzpicture}
  \end{center}
  \captionsetup{singlelinecheck=off}
  \caption[]{\small A network in which Lemma \ref{risky} does not apply. Bank 2 owes $d$ to Bank 1,
  and  owns an equity share $s$ on Bank 1. Each bank can either invest in the safe asset, or in her own
  risky asset that pays a positive return $R$ with probability $\theta$. Suppose $s(1+r)\geq d$
  so that Bank 1 can prevent Bank 2's default if it has a portfolio realization above the risk-free rate.
  We want to show that for some parameter values, both bank fully investing in the risky asset---$q_1=q_2=1$---is not an equilibrium. Suppose $q_2=1$. Choosing $q_1=1$ yields an expected payoff to Bank 1 of
  \begin{align*}\mathbb{E}[V_1(q_1=1,q_2=1)] = \theta^2(R+d)+\theta(1-\theta)(R+d)+\theta(1-\theta)d = \theta[R+(2-\theta)d].\end{align*}
  However, by choosing a safer portfolio, Bank 1 could prevent 2's default whenever none of the risky assets pays off. This requires choosing $q_1$ such that
  \[ sV_1(q_1, q_2=1; p_2=0) = s[(1-q_1^*)(1+r)+d] \geq d.\]
  The largest $q_1$ satisfying this condition solves $(1-q_1^*)(1+r) = \frac{(1-s)}{s}d$. Bank 1's expected payoff when choosing such portfolio is then
 \begin{align*}
 \mathbb{E}[V_1(q_1=q_1^*,q_2=1)] = (1-q_1^*)(1+r) + d + \theta q_1^* R = \theta R + d -\frac{d(1-s)}{s(1+r)}[\theta R-(1+r)].\end{align*}
 Such safer portfolio is better than fully investing in the risky asset for Bank 1 as soon as
   \begin{align*}\mathbb{E}[V_1(q_1=q_1^*,q_2=1)] >\mathbb{E}[V_1(q_1=1,q_2=1)] \iff 1 -\theta(2-\theta)>\frac{(1-s)}{s} \frac{1}{1+r}[\theta R - (1+r)].
   \end{align*}
If the risk premium is not too high, it is optimal for Bank 1 to choose a safer portfolio
so as to prevent its debtor's default, and get its payment of $d$ back with certainty.
This is however only possible when Bank 2 has some equity claim on 1.
  }\label{counter_example}
\end{figure}

This example illustrates some of the nuances of financial interdependencies.
Which mix of debt and equity is best for incentives depends on multiple features.
Both debt and equity generally incentivize banks to fully invest in the risky portfolio.
For them to choose safer investments, it is necessary to have a mix of debt and equity that
allows for discontinuous indirect effects of one's return on its own value through the
network.\footnote{The network in Figure \ref{counter_example} may seem odd as a bank is a
creditor of its own shareholder. However, for instance, this was one of the main characteristics
of the Icelandic
financial system right before its collapse in 2008: the three largest banks had lent funds to
clients so that they could buy the bank's own shares (see the report of the Special Investigation Commission \citeyearpar{SICreport}). }

The network depicted in Figure \ref{counter_example} has the
largest potential for such a feedback effect. Indeed, this sort of
feedback  starts with some Bank $j$ -- here Bank 2 -- having
some (direct or indirect) equity claim on another $i$ -- here Bank 1.
If this claim is large, then the latter bank need not invest a lot
in the safe asset to prevent the former bank's default. Hence the
larger this claim, the lower the cost of preventing someone's default.
Note that an indirect equity claim can always be replicated by a direct
claim that is at least as big; hence a single bank having a large direct
equity claim on another, as it is the case in the example, is most likely
to generate countervailing incentives.  This is, however, the case only if
the losses induced by the bank's default can feedback to $i$. This is true
for instance if $j$ owes some debt to $i$, or if $i$ has a claim on one of
$j$'s creditors. Again, the largest feedback can always be induced by $i$
having a direct, high enough, debt claim on $j$.

Figure \ref{counter_example} then represents the network in which a bank
has the least incentives to invest in the risky asset.
Thus, it provides
a sufficient condition for banks to fully invest in the risky asset
even when at risk of discontinuous feedback.  If no single bank
has a total (direct or indirect) equity claim of more than
$\frac{\theta R-(1+r)}{(1-\theta)^2(1+r)}$ on another, then investing
fully in the risky asset is the only equilibrium outcome.

\subsubsection{More Complicated Interbank Contracts Beyond Debt and Equity}

We now discuss bank values when contracts are not restricted to debt and equity.
A general form of contract between organizations $i$ and $j$ is denoted by $f_{ij}(\mathbf{V},\mathbf{p})$ and can depend on the value of organization $j$ as well as the value of other organizations. This represents some stream of payments that $j$ owes to $i$, usually in exchange for some good, payments, or investment that has been given or promised from $i$ to $j$.

The value $V_i$  of an organization $i$
is then
\begin{equation}
V_i=\sum_k q_{ik} p_k+ \sum_j  f_{ij}(\mathbf{V},\mathbf{p}) - \left[ \sum_j  f_{ji}(\mathbf{V},\mathbf{p})- S_{ji}(\mathbf{V})V_i^+\right] - b_i(\mathbf{V},\mathbf{p}), \label{eqn:V1}
\end{equation}
where $ f_{ji}(\mathbf{V},\mathbf{p})- S_{ji}(\mathbf{V})V_i^+$ accounts for the fact that
debt and contracts other than equity are included as liabilities in a book value calculation.\footnote{This more general model also embeds that of Barruca et al. \citeyearpar{barucca2016} in which banks hold debt on each other, but these debt claims are not valued under full information: they allow for uncertainty regarding banks' external assets and ability to honor their interbank liabilities, whose face value may then be discounted depending on available information.   Financial contracts as defined here can capture this kind of uncertainty if $f_{ij}$ equals the expected payment from $j$ to $i$ given some information---e.g. a subset of known bank values or primitive asset values.}

Under monotonicity assumptions on financial contracts, there exist consistent values for
banks by Tarski's fixed point theorem.
That is, there exists a fixed point to the above system of equations.
This is true whenever $b_i(\mathbf{V},\mathbf{p})$ is nonincreasing in $\mathbf{V}$ and
bounded (supposing that the costs cannot exceed some total level),  each $f_{ij}(\mathbf{V},\mathbf{p})$
is also nondecreasing in $\mathbf{V}$, $\sum_j  f_{ji}(\mathbf{V},\mathbf{p})- S_{ji}(\mathbf{V})V_i$
is nonincreasing in $\mathbf{V}$, and either $f$ is bounded or possible values of $\mathbf{V}$ are bounded.
Moreover, from Tarski's theorem,
it also follows that the set of equilibrium bank values forms a complete lattice. Discontinuities, which come from bankruptcy costs and
potentially the financial contracts themselves, can thus lead to multiple solutions for organizations' values.

\begin{figure}[!h]
\centering
\subfloat[Non-Decreasing Financial Contracts]{\includegraphics[width=0.48\textwidth]{contracts.tikz}}\quad
\subfloat[Non-Increasing Financial Contracts]{\includegraphics[width=0.48\textwidth]{contracts2.tikz}}
\end{figure}

When financial contracts are not increasing functions of $\mathbf{V}$,
there may not exist an equilibrium solution for bank values.
For instance, as soon as some banks insure themselves against
the default of a counterparty or bet on the failure of another, simple accounting rules may not yield consistent values for all organizations in the financial network. We illustrate this in the following example.

\paragraph{Example of Non-Existence of a Solution for $\mathbf{V}$: Credit Default Swaps.}
Consider a financial network composed of $n=3$ organizations, each of which having its own
proprietary asset. For simplicity all assets have the same value $p_i=2$ for $i=1,2,3$.
The values of organizations
are linked to each other through the following financial contracts: organization 2 holds debt from 1 with face value $D_{21}=1$; 2 is fully insured against 1's default through a CDS with organization $3$ in exchange of payment $r=0.4$; finally 1 holds a contract with 3 that is linearly decreasing in 3's value. Suppose an organization defaults if and only if its book value falls below its interbank liabilities, in which case it incurs a cost $ \beta = 0.1$. Formally, the contracts are
\begin{align*}
f_{21}(\mathbf{V}) = D_{21}\mathbbm{1}_{V_1\geq  0}\\
f_{23}(\mathbf{V}) = D_{21}\mathbbm{1}_{V_1<  0}\\
f_{32}(\mathbf{V}) = r\mathbbm{1}_{V_1\geq  0}\\
f_{13}(\mathbf{V}) = -0.5V_3.
\end{align*}
Note that organization 2 and 3 never default: the former's value is always at least $2-r>0$ and
the latter's is at least $2- D_{21}>0$. We then check that there is no solution in which organization 1
is solvent. In such a case, $V_3 = 2+ r$ and $V_1 = 2 -0.5 V_3 - D_{21} = -0.2<0$: but then Bank 1 defaults, which
is a contradiction. Finally suppose that 1 defaults.
Then $V_3 = 2- D_{21}$ and $V_1 = 2-0.5V_3-\beta = 1.4>0$, another contradiction.

The above model of financial networks incorporates both
debt and equity contracts, and thus generalizes existing models used to understand systemic
risk that are either built with debt interdependencies
(e.g., Eisenberg and Noe
\citeyearpar{eisenbergn2001}, Gai and Kapadia \citeyearpar{gaik2010}, and Jackson and Pernoud \citeyearpar{jacksonp2020})
or with equity-like interdependencies (e.g., Elliott, Golub, and Jackson \citeyearpar{elliottgj2014}).
Our model is also related to those used in the
accounting and asset valuation literatures (Suzuki \citeyearpar{suzuki2002},
Fischer \citeyearpar{fischer2014}) to derive equity prices when firms also
issue debt. We generalize those models
to include multiple types of contracts, while also allowing for bankruptcies and
discontinuous costs upon insolvency, since those play a key role
in inefficiencies and externalities in financial networks.
They furthermore  provide a lens into many other contracts, as many swaps and derivatives
involve either fixed payments or payments that depend on the realization of the
value of some investment of one of the parties, like a combination of debt and equity.

\subsection{Investment Incentives under Risk Aversion}\label{riskaversion}

We briefly investigate banks' investment incentives under risk aversion, and show why risk aversion does
not resolve the inefficiency. Consider the same investment problem as in Section \ref{overly},
except that banks are now risk averse. Let $u$ be their Bernoulli utility function,
which is assumed to be common to all banks and all outside investors.

We here restrict attention to financial networks composed of debt only. Equity holdings make the
analysis significantly more complex as they allow for some risk sharing.\footnote{For instance,
consider a setting in which two banks have equity claims on each other, and their risky
investments are negatively correlated. Then investing more in the risky asset for $i$ allows $j$ to
diversify its portfolio more. Furthermore, a high portfolio realization for $i$ may matter more for $j$
than $i$---if, e.g. the latter has other valuable assets, while the former does not --
making the problem far from trivial.} Then socially efficient investments are much more complicated,
as they no longer simply balance higher returns and larger expected bankruptcy costs, but also account for risk sharing.

Given others' investment decisions $q_{-i}$, Bank $i$ solves
\[
\max_{q_i\in [0,1]} \,  \mathbb{E}\left( u \left[ \left( q_i p_i + (1-q_i)(1+r)
+ d_i^{A}(\mathbf{p},q_i,q_{-i}) -D_i^{L} \right)^+ \right]\right).
\]

The socially efficient investment solves
\begin{align*}
\max_{q_i\in [0,1]} \,  \mathbb{E}&\left(u \left[ \left( q_i p_i + (1-q_i)(1+r)
+ d_i^{A}(\mathbf{p},q_i,q_{-i}) -D_i^{L} \right)^+ \right]\right)\\[5pt]
&+\mathbb{E}\left(\sum_{j\neq i} u \left[ \left( q_j p_j + (1-q_j)(1+r)
+ d_j^{A}(\mathbf{p},q_i,q_{-i}) -D_j^{L} \right)^+ \right]\right).
\end{align*}
It is clear that the additional term in the planner's objective can only be decreasing in $q_i$. Indeed, Bank $i$'s investment only enters Bank $j$'s utility through the debt repayment of $i$ to $j$. Since a riskier investment can only increase the probability that $i$ defaults on its payment to $j$ in a network composed of only debt, $D_j^{A}(\mathbf{p},q_i,q_{-i})$ is weakly decreasing in $q_i$ for all $q_{-i}$, all $\mathbf{p}$. Hence risk aversion does not restore the efficiency of banks' investment decisions: they still take on too much risk.

\end{appendices}
\end{spacing}
\end{document}